\documentclass[11pt]{article}

\usepackage[T1]{fontenc}
\usepackage[utf8]{inputenc}
\usepackage{mathtools}
\usepackage[sb]{libertinus}  
\usepackage[scaled=0.95]{inconsolata}
\usepackage{libertinust1math}  
\usepackage[
    cal=pxtx,
    calscaled=0.95,
    scr=boondoxo,
    bb=boondox]{mathalfa}  

\usepackage[dvipsnames,table]{xcolor}
\usepackage{authblk}

\usepackage{bm}
\usepackage{soul}
\usepackage{graphicx}
\usepackage{
  amssymb,
  amsthm,
  amscd,
  dsfont,
  ytableau,
  enumitem
}
\usepackage[a4paper, left=1in, right=1in,top=1in, bottom=1in]{geometry}
\usepackage{titlesec}
\usepackage[numbers,sort&compress]{natbib}
\usepackage{float}
\usepackage{pdflscape}
\usepackage{tikz}
\usetikzlibrary{arrows.meta,decorations.pathreplacing,calc,positioning}
\usetikzlibrary{decorations.markings}
\tikzset{
  midarrow/.style={
    thick,
    postaction={decorate},
    decoration={markings, mark=at position 0.55 with {\arrow{Stealth}}}
  }
}

\DeclareSymbolFont{yhlargesymbols}{OMX}{yhex}{m}{n}
\DeclareMathAccent{\widehat}{\mathord}{yhlargesymbols}{"62}

\DeclareRobustCommand{\coloneqq}{\mathrel{\raisebox{0.04ex}{$:$}}\mathrel{\mkern-1.2mu}=}

\theoremstyle{plain}
\newtheorem{theorem}{Theorem}
\newtheorem{corollary}[theorem]{Corollary}
\newtheorem{definition}{Definition}
\newtheorem{proposition}{Proposition}
\newtheorem{lemma}{Lemma}

\theoremstyle{remark}
\newtheorem*{remark}{Remark}

\newcommand*{\indicator}{\ensuremath{\mathbf{1}}}

\DeclareMathOperator{\polylog}{polylog}
\DeclareMathOperator{\poly}{poly}
\let\Im\relax
\DeclareMathOperator{\Im}{Im}
\DeclareMathOperator{\Var}{Var}

\DeclarePairedDelimiter\bra{\langle}{\rvert}
\DeclarePairedDelimiter\ket{\lvert}{\rangle}
\DeclarePairedDelimiterX\braket[2]{\langle}{\rangle}{#1 \delimsize\vert #2}
\let\abs\relax  
\DeclarePairedDelimiter{\abs}{\vert}{\rvert}

\DeclarePairedDelimiter{\ceil}{\lceil}{\rceil}
\DeclarePairedDelimiter{\floor}{\lfloor}{\rfloor}
\DeclarePairedDelimiterXPP\tracenorm[1]{}{\lVert}{\rVert}{_1}{#1}
\DeclarePairedDelimiterXPP\diamondnorm[1]{}{\lVert}{\rVert}{_\diamond}{#1}
\DeclarePairedDelimiterXPP\lnorm[1]{}{\lVert}{\rVert}{_2}{#1}
\DeclarePairedDelimiterX\bilinear[2]{\langle}{\rangle}{#1, #2}%
\usetikzlibrary{quantikz2}
\providecommand\given{}
\newcommand\SetSymbol[1][]{%
\nonscript\: :
\allowbreak
\nonscript\:
\mathopen{}}
\DeclarePairedDelimiterX\Set[1]\{\}{%
\renewcommand\given{\SetSymbol[\delimsize]}
#1
}

\newcommand{\eg}{e.g.\@}

\usepackage{macros}

\usepackage{microtype}
\usepackage[
    colorlinks=true,
    urlcolor=blue,
    citecolor=blue,
    linkcolor=blue,
    anchorcolor=blue,
    hypertexnames=false]{hyperref}

\title{Decoded Quantum Interferometry Beyond Hamming: Rank-Metric and Translation Association Schemes}
\author{Alexandre~Krajenbrink}
\author{Colin~Krawchuk}
\author{Ansis~Rosmanis}
\author{Matthias~Rosenkranz}
\affil{Quantinuum, Partnership House, Carlisle Place, London SW1P 1BX, United Kingdom}
\date{June 4, 2026}

\begin{document}

\maketitle
\begin{abstract}
    Decoded Quantum Interferometry (DQI) uses coherent decoding and a quantum
    Fourier transform to find high-quality solutions of structured optimisation
    problems. Existing analyses are closely tied to Hamming space, which
    underlies the optimisation objective, Dicke state preparation and the
    decoding step of the algorithm. Here we extend the core DQI mechanism beyond
    Hamming space to finite geometries with translation symmetry, where points
    are grouped into shells by their distance from a basepoint. Mathematically,
    these geometries are translation association schemes. In this setting the
    algorithm can be analysed by tracking one amplitude per shell, and biasing
    the prepared state towards high-quality solutions becomes a finite
    tridiagonal eigenvalue problem. As a non-Hamming example, we develop an
    efficient DQI protocol for finding an $m\times n$ finite-field matrix with
    smallest rank difference to a target matrix. Initial states are uniform
    superpositions over fixed-rank matrices, and Gabidulin codes provide
    candidates for efficient low-rank decoding up to a cutoff $\ell$. For this
    objective, this finds solutions with an effective-rank proxy near $\min(m,
        n)-\ell$, and the corresponding expected score can be converted into a
    constant-probability bound on the residual rank of a sample. For Gabidulin
    nearest-codeword instances, a covering-radius obstruction shows that this
    bound does not imply an additive guarantee for the true optimum, and we do
    not claim a quantum advantage for the rank-metric construction. The results
    instead identify the geometric and coding ingredients for DQI beyond Hamming
    space.
\end{abstract}

\section{Introduction}
\label{sec:introduction}

Decoded Quantum Interferometry (DQI) is a quantum algorithmic framework for
approximately solving certain structured optimisation
problems~\cite{jordan25:DQI-original}. This is achieved by sampling from a
quantum state prepared via coherent decoding of syndromes and applying the
quantum Fourier transform for constructive interference on candidate solutions
with high objective value. DQI achieves in polynomial time a larger expected
number of satisfied constraints than any known polynomial-time classical
algorithm on a class of instances known as the Optimal Polynomial Intersection
(OPI) problem. This makes DQI a candidate for classically verifiable exponential
quantum advantage in structured optimisation
problems~\cite{khattar25:optimized_DQI}.

DQI is inspired by Regev's reduction for finding short vectors (in Euclidean
distance) on a dual lattice by solving the closest vector problem on the primal
lattice~\cite{regev05:reduction, regevLatticesLearningErrors2009}. Adapted to a
coding theory perspective, this reduction finds small dual codewords by solving
the decoding problem on a linear code in the
Hamming~\cite{debris-alazardQuantumReductionFinding2024} or rank
metric~\cite{blanvillainQuantumDecodingProblem2026}. Regev's procedure decodes
low-weight errors to prepare a superposition over noisy codewords and then
applies the quantum Fourier transform, concentrating amplitude on small dual
codewords. For DQI this is naturally expressed through max-LINSAT. Given
$B\in\F_q^{m\times k}$ and local target sets $F_i\subset\F_q$, the task is to
find $x\in\F_q^k$ such that $Bx$ lies close, in Hamming distance, to the product
set $F_1\times\cdots\times F_m$. The singleton case $F_i=\Set{y_i}$ is the
Hamming nearest-codeword objective, while general product targets count violated
local constraints. In this work, we develop a translation-scheme framework for
carrying the core DQI mechanism beyond Hamming distance to other finite
geometries.

The relevant structure is a $P$-polynomial translation association scheme on a
finite field with a translation-invariant metric. Such a scheme groups points by
their distance from a basepoint into \emph{shells}, generalising sets of fixed
Hamming weight. Equal superpositions over shells are the corresponding shell
Dicke states. In the radial cases relevant here, the Fourier transform maps each
shell Dicke state to a radial superposition, with amplitudes determined by
eigenvalues that depend only on the input and output shell indices. The
$P$-polynomial property then says that a single distance-one step sends each
shell state only to the same or neighbouring shell states. On the shell-state
basis, this gives the finite tridiagonal Jacobi matrix that controls the DQI
interference pattern. With efficient unique decoding on low shells, the usual
DQI decoding and Fourier steps produce amplitudes governed by these radial
eigenvalues, biasing samples toward low residual distance or high proxy score.

The main non-Hamming example studied in this paper is the rank metric on
$\F_q^{m\times n}$. Given an $\F_q$-linear encoding map
$\Phi\colon\Omega\to\F_q^{m\times n}$ with $\Omega\cong \F_q^t$, and a target
matrix $y$, the optimisation problem is
\begin{equation}\label{eq:rank-metric-optimisation-problem}
    \min_{x\in\Omega}\rank(\Phi(x)-y).
\end{equation}
Equivalently, for code $\Code=\Im(\Phi)$, this is rank minimisation over the
affine space $y-\Code$, or nearest-codeword decoding in the rank metric. This
objective is natural when constraint violation is structured by a
low-dimensional row or column space rather than by independent coordinate
errors. For example, it models crisscross array errors~\cite{roth91}, admits an
interpretation as maximising the intersection dimension of graphs of linear
maps, and is closely related to random network
coding~\cite{KoetterKschischang2008,bartz22}. Gabidulin codes are explicit
maximum rank distance (MRD) codes, the rank-metric analogue of Reed--Solomon
codes, with efficient unique decoders up to the usual unique-decoding
radius~\cite{gabidulin1985theory}. They are natural candidates for the coherent
dual-decoding step.

Our contributions are as follows. First, we formulate DQI for $P$-polynomial
translation schemes using shell Dicke states, finite-field Fourier phases, dual
codes, and radial eigenvalues on low-weight shells. Second, we specialise this
framework to the rank-metric scheme on $\F_q^{m\times n}$, derive the rank-shell
sizes and coefficients of the Jacobi operator, and identify the radial
eigenvalues with the $q$-Krawtchouk polynomials. Third, we prove an efficient
state preparation for uniform superpositions over states representing fixed-rank matrices and show that, after decoding and
the quantum Fourier transform, the final DQI state has amplitudes proportional
to $P_\ell\!\left(\rank(\Phi(x)-y)\right)$, where $P_\ell$ is a degree-$\ell$
polynomial determined by the chosen weights $w=(w_0, w_1, \dots, w_\ell)$ of
each shell. Fourth, we relate the rank objective in
Eq.~\eqref{eq:rank-metric-optimisation-problem} to the first radial eigenvalue of the
scheme's distance matrices, show that the expectation of a proxy score reduces
to a quadratic form $w^\dagger\tilde A^{(\ell)}w$ in the truncated Jacobi
matrix, and derive a corresponding variance formula. Finally, we exhibit a
ladder factorisation of the Jacobi operator and use it to bound the optimised
effective-rank proxy. In the large-system regime this proxy is tightly
controlled around $\min(m,n)-\ell$, up to a small square-case correction, and
the optimised expected proxy score gives a constant-probability bound on the
residual rank of a sampled candidate. For Gabidulin nearest-codeword instances, the decoding
radius available to DQI can be smaller than the typical distance from a target
$y$ to the code, limiting what this construction can support as a quantum
advantage claim. We therefore view the rank-metric case as a structural
extension of DQI beyond Hamming space.

This work complements several recent developments around DQI. The algorithmic
applications of the related Regev's reduction were first considered by
Ref.~\cite{chen22:lattice_probs_via_filtering} and subsequently studied in
Refs.~\cite{yamakawa24:RandomOracleNP,chailloux24:decodingProblem}. The original
DQI algorithm was introduced in Ref.~\cite{jordan25:DQI-original} with
improvements in Refs.~\cite{khattar25:optimized_DQI,rosmanis26:linearDQI}.
Variants using broader code families, soft decoders, and quantum decoders have
been studied in
Refs.~\cite{piveteau25:state_discrimination,gu2025algebraicgeometrycodesdecoded,buMultivariateDecodedQuantum2026,
    chailloux25:SoftDecoders,chailloux25_opi_x_softdecoders,
    shuttyOptimizationUsingLocallyQuantum2026}. Hamiltonian DQI extends the idea to
ground-state preparation and Gibbs sampling
\cite{schmidhuber25:HamiltonianDQI,bu26:hamiltonian_DQI,wocjan26:HDQI}. Recent
works have clarified that DQI advantages require substantial structure and can
disappear for unstructured instances
\cite{anschuetz25:DQI-needs-structure,parekh25:DQI_for_maxcut,
    kramerTightInapproximabilityMaxLINSAT2026,sunWorstCaseOptimalPolynomial2026}.
Practical aspects of DQI have also been studied, such as
implementations~\cite{patamawisutQuantumCircuitDesign2025}, the influence of
noise~\cite{bu26:DQI-under-noise, wang25:kernelized_DQI}, and its application to
specific problem
instances~\cite{sabater25:industrialLP-via-DQI,bollmann26:benchmark_DQI}. The
present paper takes a complementary perspective, isolating the distance-shell
spectral structure behind DQI and developing the rank-metric case as a concrete
non-Hamming example.

The paper is organised as follows.
Section~\ref{sec:translation-schemes-radial-structure} introduces translation
association schemes, shell Dicke states, and the Jacobi operator induced by the
$P$-polynomial property. Section~\ref{sec:fourier-analysis-codes} develops the
Fourier and coding tools used by the general translation-scheme DQI algorithm.
Section~\ref{sec:rank-metric-translation-scheme-codes} specialises the geometry
and coding theory to the rank metric and Gabidulin or MRD codes.
Section~\ref{sec:dqi-rank-metric} gives the rank-metric DQI protocol, optimises
its rank-shell weights $w$ through an eigenvalue problem, and relates the
resulting proxy score to the residual rank of sampled candidates. The appendices
collect details on $q$-Krawtchouk polynomials, rank-metric MacWilliams
identities, sampled-rank bounds from the expected proxy score, the Gabidulin
obstruction, and deferred proofs. The efficient state preparation for uniform
superpositions over states representing fixed-rank matrices in
Appendix~\ref{subsec:proof-rank-shell-dicke-preparation} may be of independent
interest.

\section{Translation schemes and radial structure}
\label{sec:translation-schemes-radial-structure}
This section introduces metric translation association schemes used in our DQI
analysis. The purpose is to generalise the sets of fixed Hamming weight used in
DQI to distance shells in a general finite metric space. When the metric has
enough translation symmetry, shell superpositions form a natural radial
subspace. The additional $P$-polynomial condition then makes the
nearest-neighbour distance operator act tridiagonally on this subspace, so the
DQI analysis reduces to a finite Jacobi matrix.

Let us first define our notation. For integers $a\leq b$, we use the notation
$[a, b] \coloneqq \Set{a, a+1, \dots, b}$ and if $a=0$ we simply write
$[b]\coloneqq [0,b]$. $X$ denotes a finite set with cardinality $\abs{X}$. For
any subset $S \subseteq X$ and $x\in X$, the indicator function is denoted by
\begin{equation}
    \indicator_S(x) \coloneqq \begin{cases}
        1 & \quad \text{for } x\in S, \\
        0 & \quad \text{otherwise}.
    \end{cases}
\end{equation}
$\F$ denotes an arbitrary field and, for a prime power $q$, $\F_q$ denotes a
finite field with $q$ elements.

\subsection{Metric translation schemes}
\label{subsec:metric-translation-schemes}

\begin{definition}[Association scheme]\label{def:association-scheme}
    A $D$-class \emph{(symmetric) association scheme} is given by a finite set $X$
    and a collection $\Set{A_i}_{i=0}^D$ of matrices $A_i\in\Set{0,
            1}^{\abs{X}\times\abs{X}}$ that satisfy
    \begin{enumerate}
        \item $A_0 = I$,
        \item $\sum_{i=0}^{D} A_i = J$,
        \item $A_i^\top = A_i$ for all $i\in [0,D]$,
        \item there exist integers $p_{ij}^k$ such that $A_i A_j = \sum_{k=0}^{D}
                  p_{ij}^k A_k$ for all $i,j\in [0,D]$,
    \end{enumerate}
    where $I$ is the $\abs{X}\times\abs{X}$  identity matrix and $J$ is the all-one
    matrix. The $p_{ij}^k$ are called \emph{intersection numbers}.
\end{definition}

Let $(X, d)$ be a finite metric space of diameter $D$. Let us define the \emph{distance
    matrices}
\begin{equation}\label{eq:distance-matrices}
    (A_i)_{xy} = \indicator_{\Set{i}}(d(x,y)),
    \qquad i\in [0,D],\; \forall x, y\in X.
\end{equation}
When these matrices satisfy Def.~\ref{def:association-scheme}, we call the
resulting scheme a \emph{metric association scheme}. We call $A_1$ the
\emph{adjacency matrix}.

Now assume that $X$ carries the additional structure of an additive abelian
group. A metric association scheme on $X$ is called a \emph{metric translation
    scheme} if the distance is translation-invariant:
\begin{equation}
    d(x,y) = d(x+z,y+z),
    \qquad
    \forall x,y,z \in X .
\end{equation}
When no ambiguity arises, we simply call it a \emph{translation scheme}. As a
consequence, we can choose a \emph{basepoint} $0 \in X$ so that
$d(x,y)=d(x-y,0)$.

For the finite-field translation schemes used below, we write the ground set as
$\X$. Here $\X$ denotes an ambient finite-dimensional $\F_q$-vector space,
viewed as an additive group and equipped with a translation-invariant metric. We write
\begin{equation}
    \operatorname{Sym}(\X,d)
    \coloneqq
    \Set{g\colon \X\to\X \text{ bijective} \given
        d(g(x),g(y))=d(x,y),\; \forall x,y\in\X}
\end{equation}
for the symmetry group of the metric space. The subgroup relevant for the
Fourier analysis later is the linear basepoint stabiliser
\begin{equation}\label{eq:linear-basepoint-stabiliser}
    G_0(\X,d)
    \coloneqq
    \Set{g\in \operatorname{Sym}(\X,d) \given
        g \text{ is } \F_q\text{-linear and } g(0)=0}.
\end{equation}

\subsection{Distance shells and radial states}
\label{subsec:distance-shells-radial-states}

We define the \emph{distance-$i$ shell} around the basepoint $0$ as the set
\begin{equation}\label{eq:distance-i-shell}
    \Gamma_i\coloneqq \Set{x\in \X \given d(x,0)=i},
    \qquad i \in [0,D] \, .
\end{equation}

To connect this shell structure to quantum algorithms, we associate $\X$ with
the computational basis $\Set{\ket{x} \given x\in\X}$ of the Hilbert space
$\mathbb{C}^{\X}$. Then we can write the action of any distance matrix of a
translation scheme on basis vectors of the above Hilbert space as
\begin{equation}\label{eq:action-distance-matrix}
    A_i \ket{x}
    = \sum_{e\in \Gamma_i} \ket{x-e}.
\end{equation}
For non-empty $\Gamma_i$, the uniform superposition of states in the
distance-$i$ shell is the shell Dicke state
\begin{equation}\label{eq:shell-state-definition}
    \ket{R_i}
    \coloneqq
    \frac{1}{\sqrt{\abs{\Gamma_i}}} \sum_{x\in \Gamma_i}\ket{x}=\frac{1}{\sqrt{\abs{\Gamma_i}}}\,A_i\ket{0}.
\end{equation}

The shell Dicke state $\ket{R_i}$ forgets the position inside the shell and
keeps only the distance label $i$. This agrees with the symmetry viewpoint:
since each element of $G_0(\X,d)$ is distance-preserving and fixes the
basepoint, $G_0(\X,d)$ maps every shell $\Gamma_i$ to itself. In the
distance-transitive examples considered below, $G_0(\X,d)$ acts transitively on
each $\Gamma_i$, so $\ket{R_i}$ is the uniform superposition over one
$G_0$-orbit. We define the radial subspace as the $(D+1)$-dimensional span
\begin{equation}\label{eq:radial-subspace}
    \Rad\coloneqq \Span\Set{\ket{R_i} \given i \in[0,D]}.
\end{equation}
Thus a state in $\Rad$ is described by one amplitude per distance shell, rather
than one amplitude per element of $\X$.

\subsection{\texorpdfstring{$P$-polynomial schemes and the Jacobi matrix}{P-polynomial schemes and the Jacobi matrix}}
\label{subsec:p-polynomial-schemes-jacobi-matrix}

The DQI analysis uses the radial subspace $\Rad$. We therefore need the action
of the adjacency matrix $A_1$ on shell states, which in the Hamming case are the
usual Dicke states. The $P$-polynomial property makes this action
nearest-neighbour in the shell label. After normalising shell states, $A_1$
couples $\ket{R_i}$ only to $\ket{R_{i-1}}$, $\ket{R_i}$, and $\ket{R_{i+1}}$.
Thus $A_1$ restricts to a tridiagonal Jacobi matrix on $\Rad$, controlling the
DQI mechanism analysed below.

\begin{definition}[$P$-polynomial association scheme]\label{def:p-polynomial-association-scheme}
    A metric association scheme $(X, \Set{A_i}_{i=0}^D)$ is
    \emph{$P$-polynomial} with respect to $A_1$ if multiplication by $A_1$
    satisfies the three-term recurrence relation
    \begin{equation}
        \label{eq:p-polynomial-three-term-recurrence}
        A_1 A_i = b_{i-1} A_{i-1} + a_i A_i + c_{i+1} A_{i+1},
        \qquad i \in [0, D],
    \end{equation}
    where terms with indices outside $[0,D]$ are omitted. The
    \emph{recursion coefficients} are given by
    \begin{equation}\label{eq:recursion-coefficients}
        b_i = p_{1, i+1}^{i} \quad (0\le i<D), \qquad
        a_i = p_{1, i}^i \quad (0\le i\le D), \qquad
        c_i = p_{1, i-1}^{i} \quad (1\le i\le D)
    \end{equation}
    with the endpoint conventions $c_0=b_D=0$ and $b_{-1}=c_{D+1}=0$. Note that $b_i,c_{i+1}>0$ for $0\le i<D$.
\end{definition}

Equation~\eqref{eq:p-polynomial-three-term-recurrence} shows that the adjacency
matrix $A_1$ induces a walk that mixes points on neighbouring shells.
Figure~\ref{fig:radial-shell-walk} illustrates the shells around the basepoint
$0$ and the walk induced by $A_1$.
Equation~\eqref{eq:p-polynomial-three-term-recurrence} also implies that each
distance matrix $A_i$ is a polynomial in the matrix $A_1$.

For a translation scheme, the intersection numbers can be read as shell
intersection counts around the basepoint. Namely, for fixed $z\in\Gamma_k$,
\begin{equation}
    p_{ij}^k
    =
    \abs*{\Set*{x\in\Gamma_i \given z-x\in\Gamma_j}}.
\end{equation}
These numbers do not depend on the particular choice of $z$ in the shell
$\Gamma_k$. In particular, for fixed $z\in\Gamma_i$, the recursion coefficients
count the nearest neighbours of $z$. Equivalently,
\begin{align}
    b_i & = \abs*{\Set*{u \in \Gamma_1 \given z-u \in \Gamma_{i+1}}},
    \qquad 0\le i<D,                                                  \\
    a_i & = \abs*{\Set*{u \in \Gamma_1 \given z-u \in \Gamma_i}},
    \qquad \quad \! 0\le i\le D,                                      \\
    c_i & = \abs*{\Set*{u \in \Gamma_1 \given z-u \in \Gamma_{i-1}}},
    \qquad 1\le i\le D.
\end{align}
The endpoint values are again $b_D=c_0=0$. Furthermore, $p_{ij}^0 =
    \abs{\Gamma_i} \delta_{ij}$, where $\delta_{ij}$ is the Kronecker delta.

The following two identities explain why the shell walk has the form of a
symmetric tridiagonal Jacobi operator on normalised shell states. The first
identity symmetrises the off-diagonal coefficients, while the second is used
later to factorise this operator into ladder operators, giving the radial walk
an oscillator-like interpretation.

\begin{lemma}[Recursion coefficient identities]
    \label{lem:recursion-coefficient-identities}
    For any $P$-polynomial translation scheme, the recursion coefficients
    satisfy
    \begin{equation}
        \label{eq:ratio-off-diagonal}
        \frac{\abs{\Gamma_{i+1}}}{\abs{\Gamma_i}}
        = \frac{b_i}{c_{i+1}},
        \qquad 0\le i<D,
    \end{equation}
    and
    \begin{equation}
        \label{eq:recursion-coefficient-sum}
        a_i + b_i + c_i = \abs{\Gamma_1},
        \qquad 0\le i\le D,
    \end{equation}
    with the endpoint convention $c_0 = b_D = 0$.
\end{lemma}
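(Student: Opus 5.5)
Both identities will follow from counting pairs of points, using the shell-intersection interpretation of the recursion coefficients given just before the statement: for fixed $z\in\Gamma_i$, $b_i$, $a_i$, $c_i$ count the $u\in\Gamma_1$ with $z-u$ in $\Gamma_{i+1}$, $\Gamma_i$, $\Gamma_{i-1}$ respectively. These counts do not depend on the choice of $z$ in $\Gamma_i$.

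For Eq.~\eqref{eq:recursion-coefficient-sum}, fix $z\in\Gamma_i$. I will classify each $u\in\Gamma_1$ by the distance $d(z-u,0)=d(z,u)$. The triangle inequality gives $\abs{d(z,0)-d(u,0)}\le d(z,u)\le d(z,0)+d(u,0)$, so $d(z-u,0)\in\Set{i-1,i,i+1}$. Hence $\Gamma_1$ is the disjoint union of the three sets counted by $c_i$, $a_i$ and $b_i$. At the endpoints, $\Gamma_{-1}$ and $\Gamma_{D+1}$ are empty, which matches the conventions $c_0=b_D=0$. Summing the three counts gives $a_i+b_i+c_i=\abs{\Gamma_1}$.

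For Eq.~\eqref{eq:ratio-off-diagonal}, I will double count the set of pairs
\[
S_i\coloneqq\Set{(x,y)\in\Gamma_i\times\Gamma_{i+1}\given d(x,y)=1}.
\]
\emph{Counting from $x$.} Fix $x\in\Gamma_i$ and substitute $y=x-u$. Translation invariance gives $d(x,y)=d(u,0)$, so the condition becomes $u\in\Gamma_1$ and $x-u\in\Gamma_{i+1}$. There are exactly $b_i$ such $u$, so $\abs{S_i}=\abs{\Gamma_i}\,b_i$.

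\emph{Counting from $y$.} Fix $y\in\Gamma_{i+1}$ and substitute $x=y-u$. Now the condition is $u\in\Gamma_1$ and $y-u\in\Gamma_i$. There are $c_{i+1}$ such $u$, so $\abs{S_i}=\abs{\Gamma_{i+1}}\,c_{i+1}$.

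Equating the two counts gives $\abs{\Gamma_i}b_i=\abs{\Gamma_{i+1}}c_{i+1}$. Since $c_{i+1}>0$ for $0\le i<D$ (Def.~\ref{def:p-polynomial-association-scheme}), we may divide and obtain the ratio. An equivalent matrix route is to apply Eq.~\eqref{eq:p-polynomial-three-term-recurrence} to $\ket{0}$ and take inner products with $A_{i+1}\ket{0}$, using the symmetry of $A_1$. I will mention this only as a cross-check.

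The only step that needs care is confirming that the substitution $y=x-u$ is a bijection onto the relevant points and that distances are preserved under it. Both facts come from translation invariance, $d(x,y)=d(x-y,0)$, together with the symmetry of $d$. The sign convention $x-u$ versus $x+u$ is harmless because $d(-u,0)=d(0,u)=d(u,0)$.
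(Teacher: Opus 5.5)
Your proof is correct and matches the paper's argument. The sum identity comes from partitioning the distance-one neighbours of a fixed $z\in\Gamma_i$ into the classes counted by $c_i,a_i,b_i$, and the ratio identity comes from double counting distance-one pairs in $\Gamma_i\times\Gamma_{i+1}$ and dividing by $c_{i+1}>0$. The only difference is that you confine the neighbours to shells $i-1,i,i+1$ using the triangle inequality of the metric, whereas the paper cites the $P$-polynomial property for this step.
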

\noindent The proof is in
Appendix~\ref{subsec:proof-recursion-coefficient-identities}.

\begin{figure}[t!]
    \centering
    \begin{tikzpicture}[scale=1.05,>=Latex, arrow_line/.style={->, >=Stealth, thick, bend left}]
        \fill (0,0) circle (2pt);
        \node[below left] at (0,0) {$0$};

        \def\alpha{0.8}

        \draw (0,0) ellipse (1.0 and {1.0*\alpha});
        \draw (0,0) ellipse (2.2 and {2.2*\alpha});
        \draw (0,0) ellipse (3.4 and {3.4*\alpha});

        \node at (0,1.1) {$\Gamma_{i-1}$};
        \node at (0,2.1) {$\Gamma_{i}$};
        \node at (0,3.1) {$\Gamma_{i+1}$};

        \fill (-2.2,0) circle (2pt);
        \node[below] at (-2,0) {$x\in\Gamma_i$};

        \draw[->,thick] (0,0) -- (5,0) node[below] {\small radial direction};

        \draw[<->,thick] (3.6,0.2) arc (0:45:2);
        \node at (4.7,1) {{\small shell direction}};

        \coordinate (P1) at (0.62,-0.62);
        \coordinate (P2) at (1.37,-1.37);  
        \coordinate (P3) at (2.12,-2.12);

        \fill (P1) circle (2pt);
        \fill (P2) circle (2pt);
        \fill (P3) circle (2pt);

        \draw[midarrow, bend left]
        (P2) to node[pos=0.5, below left=3pt] {$A_1A_i \leadsto A_{i+1}$} (P3);

        \draw[midarrow, bend right]
        (P2) to node[pos=0.5, below left=3pt] {$A_1 A_i \leadsto A_{i-1}$} (P1);

        \node[align=center] at (7,-2.2) {\small
            $A_1A_i = b_{i-1}A_{i-1} + a_iA_i + c_{i+1}A_{i+1}$\\
            {\small (walk mixes neighbouring shells; $A_i\ket{0}$ is supported on the shell $\Gamma_i$)}
        };
    \end{tikzpicture}
    \caption{Shells around a basepoint $0$ and the radial and shell directions. The state $A_i \ket{0}$ is supported on the shell $\Gamma_i$, while $A_1$ induces a `radial walk' between a shell and its two neighbouring shells.}
    \label{fig:radial-shell-walk}
\end{figure}

Combining Eqs.~\eqref{eq:shell-state-definition}, \eqref{eq:p-polynomial-three-term-recurrence} and
\eqref{eq:ratio-off-diagonal}, the action of $A_1$ on a shell state can be
expressed, with endpoint terms omitted, as
\begin{equation}
    \label{eq:radial-jacobi-matrix}
    \begin{split}
        A_1 \ket{R_i} & = b_{i-1} \sqrt{\frac{\abs{\Gamma_{i-1}}}{\abs{\Gamma_{i}}}} \ket{R_{i-1}} + a_i \ket{R_i} + c_{i+1} \sqrt{\frac{\abs{\Gamma_{i+1}}}{\abs{\Gamma_{i}}}} \ket{R_{i+1}} \\
                      & = \sqrt{b_{i-1}c_i}  \ket{R_{i-1}} + a_i \ket{R_i} + \sqrt{c_{i+1} b_i}\ket{R_{i+1}}.
    \end{split}
\end{equation}
Thus the full operator $A_1$ restricts on the radial subspace to a
$(D+1)$-dimensional symmetric tridiagonal \emph{Jacobi matrix} denoted by
$\tilde{A}$. This is the effective radial operator that later appears in the DQI
performance analysis, where only shell amplitudes are tracked. Its entries are
fully determined by scheme-dependent recurrence coefficients. The identities in
Lemma~\ref{lem:recursion-coefficient-identities} also imply a ladder
factorisation $\tilde A=\abs{\Gamma_1}I-B^\dagger B$. This gives the radial
Jacobi operator an oscillator-like interpretation: the ladder operators move
between neighbouring distance shells with coefficients determined by the
geometry. For example, in the Hamming scheme, the Jacobi operator converges to
the standard quantum harmonic oscillator in the large system ($D\to\infty$),
low-excitation ($i\ll D$) regime~\cite{marwahaComplexityDecodedQuantum2025}. We
use this factorisation in
Sec.~\ref{subsec:rank-shell-weights-ladder-factorisation}.

Finally, we note an orthogonality relation for the distance matrices. It is not
needed for the Jacobi reduction above, but will become the orthogonality of the
Fourier eigenvalues in
Sec.~\ref{subsec:fourier-diagonalisation-translation-schemes}. For $i,j\in
    [0,D]$ and $C\subseteq X$, let $\Tr_C(T)\coloneqq\sum_{x\in C}\bra{x}T\ket{x}$
denote the partial trace over the basis states indexed by $C$. Then
\begin{equation}
    \label{eq:distance-matrix-orthogonality}
    \Tr_C(A_iA_j)
    =
    \abs{C}p_{ij}^0
    =
    \abs{C}\abs{\Gamma_i}\delta_{ij}.
\end{equation}
Indeed, $A_iA_j=\sum_k p_{ij}^kA_k$, and only $A_0$ has nonzero diagonal. For
$C=X$, this is the Hilbert-Schmidt orthogonality of the distance matrices.

\section{Fourier analysis and codes}
\label{sec:fourier-analysis-codes}
This section connects the radial structure in
Sec.~\ref{sec:translation-schemes-radial-structure} to finite-field Fourier
analysis and coding theory. In the Fourier basis, the distance matrices $A_i$
become diagonal, and their eigenvalues are obtained by summing Fourier phases
over distance shells. When these eigenvalues depend only on the shell index, the
Fourier transform preserves the radial description and DQI only needs $D+1$
shell amplitudes. Averaging Fourier phases over a code enforces membership in
the dual code, which is the algebraic constraint used by the syndrome decoding
step of DQI.

\subsection{Characters and the finite-field Fourier transform}
\label{subsec:characters-finite-field-fourier-transform}
We wish to define a unitary Fourier transform $\mathcal{F}_{\X}$ on the space of
quantum states spanned by the basis states $\Set{\ket{x} \given x\in \X}$.
Operationally, $\mathcal{F}_{\X}$ is the quantum Fourier transform used in DQI.
We fix a non-degenerate symmetric $\F_q$-bilinear form
\begin{equation}\label{eq:finite-field-bilinear-form}
    \bilinear{\cdot}{\cdot}\colon \X\times \X\to \F_q .
\end{equation}
In coordinates one may take $\X\cong\F_q^N$ with the standard dot product
$\bilinear{x}{y}=\sum_{i=1}^N x_i y_i$. The Fourier transform used below is the
group Fourier transform of the additive group $\X$. Viewing the finite field
$\F_q$ as an abelian group, an \emph{additive character} is a group homomorphism
\begin{equation}
    \chi\colon \F_q \longrightarrow  \C\setminus\Set{0}
\end{equation}
satisfying
\begin{equation}
    \chi(a+b)=\chi(a)\chi(b)\, ,\quad \forall a,b\in\F_q,
\end{equation}
and we require that $\chi$ does not map every element in $\mathbb{F}_q$ to $1$
(it is nontrivial). In other words, addition in $\F_q$ becomes multiplication of
complex numbers. Moreover, for any $a\in\F_q$,
\begin{equation}\label{eq:additive-character-orthogonality}
    \sum_{t\in\F_q}\chi(at)=
    \begin{cases}
        q & \quad \text{for } a=0,     \\
        0 & \quad \text{for } a\neq 0.
    \end{cases}
\end{equation}
If $a\neq 0$, the map $t\mapsto at$ permutes $\F_q$, so the sum equals
$\sum_{t}\chi(t)=0$ for any nontrivial character. If $q$ is prime, a standard
example is $\chi(t)=e^{2\pi i t/q}$. If $q=p^m$ is a prime power, choose any
nontrivial additive character, for instance
\begin{equation}
    \chi(t)=\exp \left(\frac{2 \pi i}{p} \operatorname{Tr}_{\F_{p^m} / \F_p}(t)\right)
\end{equation}
where
\begin{equation}
    \Tr_{\F_{p^m} / \F_{p}}(t)=t+t^p+t^{p^2}+\cdots+t^{p^{m-1}}
\end{equation}
is the field trace. The group of additive characters of $\X$ is
$\widehat{\X}=\mathrm{Hom}((\X,+), \C\setminus\Set{0})$. Fixing any nontrivial
additive character $\chi$ on $\F_q$, we can use it to obtain lifted characters
in $\widehat{\X}$. Using the bilinear form,
Eq.~\eqref{eq:finite-field-bilinear-form}, every element $y\in\X$ defines a map
\begin{equation}\label{eq:ambient-additive-character}
    \adchar_y(x)\coloneqq\chi(\bilinear{x}{y}),
    \qquad x,y\in\X.
\end{equation}
Because $\bilinear{\cdot}{\cdot}$ is bilinear in $\F_q$,
\begin{equation}
    \adchar_y(x+x^\prime)=\chi(\bilinear{x+x^\prime}{y})
    =\chi(\bilinear{x}{y}+\bilinear{x^\prime}{y})
    =\adchar_y(x) \adchar_y(x^\prime)
\end{equation}
showing that $\adchar_y$ is an additive character of $\X$. Because the bilinear
form is non-degenerate, the mapping $y \mapsto \adchar_y$ is an isomorphism,
providing the identification
\begin{equation}
    \widehat{\X}\cong \X,
    \qquad \adchar_y\longleftrightarrow y.
\end{equation}

Now recall the Fourier transform of a function
\begin{equation}
    f\colon \X \to
    \mathbb{C}, \qquad \hat{f}(y) =
    \frac{1}{\sqrt{\abs*{\X}}}\sum_{x\in\X} \adchar_y(x) f(x).
\end{equation}
The pairing $(x,y) \mapsto \adchar_y(x)$
defines the unitary Fourier transform $\mathcal{F}_{\X}$ on
$\C^{\X}$ by
\begin{equation}\label{eq:fourier-basis-state}
    \mathcal{F}_{\X} \ket{x}
    \coloneqq
    \frac{1}{\sqrt{\abs*{\X}}}\sum_{y\in \X}\adchar_y(x)\ket{y}.
\end{equation}
When $\X\cong\F_q^N$ and $q=p$ is prime, Eq.~\eqref{eq:fourier-basis-state} is
implemented by applying the $p$-point quantum Fourier transform $F_p$ to each
coordinate after identifying $\F_p=[p-1]$, so $\mathcal{F}_{\X}=F_p^{\otimes
            N}$. When $q=p^s$ is a prime power, we first decompose field elements into
prime-subfield components, $\F_q^N\cong \F_p^{sN}$. Then the Fourier transform
factors into the $p$-point quantum Fourier transform acting on each of the $sN$
components as in the prime case. When the set of the scheme is clear from
context, we also write $\mathcal{F}$ instead of $\mathcal{F}_\X$.

\subsection{Fourier diagonalisation of translation schemes}
\label{subsec:fourier-diagonalisation-translation-schemes}

Each distance matrix $A_i$ acts by summing over translations by elements of
$\Gamma_i$. Since the finite-field Fourier transform diagonalises translations,
one expects the states $\mathcal{F}_{\X}\ket{x}$ to diagonalise all $A_i$. The
next lemma makes this precise.

\begin{lemma}[Eigenvectors and eigenvalues of distance matrices]
    \label{lem:distance-matrix-fourier-eigenvectors}
    Let $(\X, \Set{A_i}_{i=0}^D)$ be a translation scheme. For each $i \in
        [0,D]$ and each $x\in \X$, the Fourier basis state $\mathcal{F}_\X
        \ket{x}$ is an eigenvector of $A_i$
    \begin{equation}
        A_i \mathcal{F}_\X \ket{x} = \lambda_i(x)\mathcal{F}_\X \ket{x}
    \end{equation}
    with Fourier eigenvalue
    \begin{equation}\label{eq:distance-matrix-character-sum}
        \lambda_i(x) \coloneqq\;\sum_{e\in\Gamma_i}\adchar_x(e),
    \end{equation}
    where $\Gamma_i$ is the distance-$i$ shell of the scheme,
    Eq.~\eqref{eq:distance-i-shell}. The formula writes the Fourier eigenvalue as
    a character sum over the shell $\Gamma_i$.
\end{lemma}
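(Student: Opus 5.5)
The plan is a direct computation: apply $A_i$ to the Fourier basis state and use translation-invariance through Eq.~\eqref{eq:action-distance-matrix}, which gives $A_i\ket{z}=\sum_{e\in\Gamma_i}\ket{z-e}$. Expanding $\mathcal{F}_\X\ket{x}$ via Eq.~\eqref{eq:fourier-basis-state} and using linearity, I will write
\begin{equation*}
    A_i\mathcal{F}_\X\ket{x}
    =\frac{1}{\sqrt{\abs{\X}}}\sum_{y\in\X}\adchar_y(x)\sum_{e\in\Gamma_i}\ket{y-e}.
\end{equation*}
Next I reindex the inner sum with $y'=y-e$, which is a bijection of $\X$ for each fixed $e$. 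This gives the coefficient $\adchar_{y'+e}(x)$ on $\ket{y'}$.

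By bilinearity and symmetry of the form, $\adchar_{y'+e}(x)=\chi(\bilinear{x}{y'}+\bilinear{x}{e})=\adchar_{y'}(x)\,\adchar_x(e)$. The factor $\adchar_x(e)$ does not depend on $y'$, so the sum over $e$ pulls out as $\sum_{e\in\Gamma_i}\adchar_x(e)=\lambda_i(x)$. What remains is exactly $\mathcal{F}_\X\ket{x}$, which proves the eigen-equation with the stated eigenvalue~\eqref{eq:distance-matrix-character-sum}.

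The main point needing care is the sign convention. The shift is $\ket{y-e}$, not $\ket{y+e}$, so a priori the phase is $\adchar_x(e)$ rather than $\adchar_x(-e)$, and I must check that this matches the stated formula. Here that works out directly, because the reindexing produces $y'+e$. As a further check, $\Gamma_i=-\Gamma_i$ since $d(-e,0)=d(0,e)=d(e,0)$ by translation invariance and symmetry of the metric. Hence either convention gives the same $\lambda_i(x)$, which is also real. No earlier lemma beyond the translation-scheme action formula is needed.
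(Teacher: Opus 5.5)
Your proposal is correct and follows the paper's proof essentially line by line: expand the Fourier basis state, apply the shift formula $A_i\ket{z}=\sum_{e\in\Gamma_i}\ket{z-e}$, reindex $y'=y-e$, and factor out the $e$-dependent phase to obtain $\lambda_i(x)$. Two of your steps go beyond what is needed. You invoke the symmetry of the bilinear form explicitly, which the paper uses only implicitly when it indexes the character by $x$ in the expansion. You also note that $\Gamma_i=-\Gamma_i$, which makes $\lambda_i(x)$ real and independent of the sign convention. Both points are correct.
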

\noindent The proof is in Appendix~\ref{subsec:proof-distance-matrix-fourier-eigenvectors}.

Applying this to a shell state gives
\begin{equation}
    \mathcal{F}_{\X}\ket{R_i}
    =
    \frac{1}{\sqrt{\abs{\X}\abs{\Gamma_i}}}
    \sum_{x\in\X}\lambda_i(x)\ket{x}.
\end{equation}
For the DQI analysis, we want the amplitudes to be determined only by the
distance $d(x,0)$, rather than individual $x$. Equivalently, all basis states
inside the same shell $\Gamma_j$ should have the same amplitude. This is the
radiality condition introduced next.

\begin{definition}[Radial Fourier eigenvalues]
    \label{def:radial-fourier-eigenvalues}
    Let $(\X,\Set{A_i}_{i=0}^D)$ be a translation scheme with shells
    $\Set{\Gamma_j}_{j=0}^D$, and let $\lambda_i(x)$ be the Fourier eigenvalue
    from Eq.~\eqref{eq:distance-matrix-character-sum}. We say that the scheme has
    \emph{radial Fourier eigenvalues} if, for every $i,j\in[0,D]$, the value
    $\lambda_i(x)$ is constant over all $x\in\Gamma_j$. In this case we write
    \begin{equation}
        \lambda_i(j)
        \coloneqq
        \lambda_i(x),
        \qquad x\in\Gamma_j,
    \end{equation}
    or equivalently $\lambda_i(d(x,0))$.
\end{definition}

Under this condition, $\mathcal{F}_{\X}\ket{R_i}$ lies in the radial subspace
$\Rad=\Span\Set{\ket{R_j}\given j\in[0,D]}$. We call this state the
\emph{dual radial state}
\begin{equation}
    \ket{\lambda_i}
    \coloneqq
    \mathcal{F}_{\X}\ket{R_i}
    =
    \frac{1}{\sqrt{\abs{\X}\abs{\Gamma_i}}}
    \sum_{x\in\X}\lambda_i(d(x,0))\ket{x}.
\end{equation}
Equivalently,
\begin{equation}
    \label{eq:fourier-transform-radial-state}
    \ket{\lambda_i}
    =
    \frac{1}{\sqrt{\abs{\X}\abs{\Gamma_i}}}
    \sum_{j=0}^D \sqrt{\abs{\Gamma_j}}\lambda_i(j)\ket{R_j}.
\end{equation}
In this case the Fourier step can be analysed using only one amplitude per
shell, hence at most $D+1$ amplitudes. The next lemma gives sufficient
conditions under which this radiality property holds.

\begin{lemma}[Sufficient conditions for radial Fourier eigenvalues]
    \label{lem:radial-fourier-eigenvalue-conditions}
    Let $(\X,\Set{A_i}_{i=0}^D)$ be a symmetric translation scheme with shells
    $\Set{\Gamma_i}_{i=0}^D$, and let $G_0$ be the linear basepoint stabiliser
    from Eq.~\eqref{eq:linear-basepoint-stabiliser}. Suppose that $G_0$ acts transitively on each
    shell $\Gamma_j$, and that for every $g\in G_0$ the adjoint $g^\top$,
    defined by $\bilinear{g(x)}{y}=\bilinear{x}{g^\top(y)}$, also preserves each
    shell. Then the Fourier eigenvalues in
    Eq.~\eqref{eq:distance-matrix-character-sum} are radial.
\end{lemma}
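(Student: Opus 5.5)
The plan is to show directly that $\lambda_i(x)=\lambda_i(x')$ whenever $x,x'\in\Gamma_j$. By transitivity of $G_0$ on $\Gamma_j$, it suffices to show that $\lambda_i$ is invariant under the action of suitable group elements. The natural candidates are the adjoints $g^\top$: we want $\lambda_i(g^\top(y))=\lambda_i(y)$ for every $g\in G_0$ and every $y\in\X$. With that invariance in hand, transitivity of $G_0$ must be turned into transitivity of the adjoint action $\Set{g^\top \given g\in G_0}$ on each shell.

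The invariance itself is a reindexing of the character sum in Eq.~\eqref{eq:distance-matrix-character-sum}. Fix $g\in G_0$. Symmetry of the bilinear form and the defining property of the adjoint give
\begin{equation*}
    \lambda_i(g^\top y)=\sum_{e\in\Gamma_i}\chi(\bilinear{e}{g^\top y})=\sum_{e\in\Gamma_i}\chi(\bilinear{g(e)}{y}).
\end{equation*}
Since $g$ is a distance-preserving bijection fixing $0$, it restricts to a bijection $\Gamma_i\to\Gamma_i$. Substituting $e'=g(e)$ therefore turns the sum into $\sum_{e'\in\Gamma_i}\chi(\bilinear{e'}{y})=\lambda_i(y)$. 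Equivalently, $\lambda_i(x)$ is constant on every orbit of the group $G_0^\top\coloneqq\Set{g^\top\given g\in G_0}$. This set is a group because $(gh)^\top=h^\top g^\top$ and $(g^{-1})^\top=(g^\top)^{-1}$, using non-degeneracy of the form.

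The main obstacle is that the hypothesis only says $G_0$ is transitive on $\Gamma_j$. We need the orbits of $G_0^\top$ on $\Gamma_j$ to be all of $\Gamma_j$, and the assumption that $g^\top$ preserves each shell only gives that these orbits lie inside shells. I would argue by counting orbits. Put $\Rad$-type coordinates on both sides: the $G_0$-orbits on $\X$ are exactly the $D+1$ shells. By a Brauer-type permutation lemma, for the contragredient action on characters $\adchar_y$ (which, via $\adchar_y\circ g^{-1}=\adchar_{(g^{-1})^\top y}$, is the action of $G_0^\top$ on $\X$), $G_0$ has the same number of orbits on $\X$ and on $\widehat{\X}$. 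Concretely, each $g$ fixes the same number of points of $\X$ and of $\widehat{\X}$: the fixed characters are trivial on $\Im(g-1)$, so there are $\abs{\X}/\abs{\Im(g-1)}=\abs{\ker(g-1)}$ of them. Burnside's lemma then gives exactly $D+1$ orbits of $G_0^\top$ on $\X$.

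To conclude, note that each of these orbits lies inside some shell by hypothesis, and each of the $D+1$ shells is non-empty. Since there are exactly $D+1$ orbits, each shell must be a single $G_0^\top$-orbit. Combined with the invariance established above, $\lambda_i$ is constant on every shell $\Gamma_j$, which is the radiality property of Definition~\ref{def:radial-fourier-eigenvalues}.
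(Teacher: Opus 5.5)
Your proof is correct, but it takes a genuinely different and longer route than the paper. The paper reindexes the character sum by the adjoint. It writes $\lambda_i(g(x))=\sum_{e\in\Gamma_i}\chi(\langle g^\top(e),x\rangle)$ and uses the hypothesis that $g^\top$ permutes $\Gamma_i$ to get $\lambda_i\circ g=\lambda_i$ for all $g\in G_0$. Transitivity of $G_0$ on each $\Gamma_j$ then finishes the proof in one line.

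You instead reindex by $g$ itself. This uses only the automatic fact that $g$ permutes $\Gamma_i$, and it yields $\lambda_i\circ g^\top=\lambda_i$. The burden then shifts to showing that $G_0^\top$ is transitive on shells. You obtain this correctly from Burnside's lemma, using $\lvert\ker(g^\top-1)\rvert=\lvert\Im(g-1)^\perp\rvert=\lvert\ker(g-1)\rvert$, followed by the pigeonhole step in which the adjoint hypothesis places each of the $D+1$ orbits of $G_0^\top$ inside a single shell. So the ``main obstacle'' you identified is self-imposed: substituting $e'=g^\top(e)$ instead of $e'=g(e)$ makes it disappear.

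What your route buys is extra structure. It shows that invariance of $\lambda_i$ under the adjoint group holds unconditionally. It shows that $G_0^\top$ always has exactly as many orbits as $G_0$, which is the usual primal--dual class count for translation schemes. Hence, under the stated hypotheses, $G_0^\top$ is itself transitive on every shell, and the adjoint hypothesis serves only to identify the dual orbit partition with the shells. The paper's route buys brevity and needs no counting at all.
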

\noindent The proof is in
Appendix~\ref{subsec:proof-radial-fourier-eigenvalue-conditions}.

For a $P$-polynomial scheme, Fourier diagonalisation turns the matrix recurrence
Eq.~\eqref{eq:p-polynomial-three-term-recurrence} into a scalar three-term
recurrence for the radial Fourier eigenvalues. For each shell label $j\in[0,D]$,
\begin{equation}
    \label{eq:character-sum-three-term-recurrence}
    \lambda_1(j)\lambda_i(j)
    =
    b_{i-1} \lambda_{i-1}(j) + a_i \lambda_i(j)
    + c_{i+1} \lambda_{i+1}(j),
    \qquad i\in [0,D],
\end{equation}
where terms with indices outside $[0,D]$ are omitted. Thus the functions
$j\mapsto\lambda_i(j)$ are the radial eigenvalue polynomials associated with the
shell walk (orthogonal polynomials). In the classical metric schemes, these
families are the finite orthogonal polynomials appearing in the ($q$-)Askey
scheme, a hierarchy of hypergeometric orthogonal polynomials and their limit
relations~\cite{koekoekHypergeometricOrthogonalPolynomials2010}. For the Hamming
scheme they are the Krawtchouk polynomials discussed in
Ref.~\cite{marwahaComplexityDecodedQuantum2025}, and for the rank-metric scheme
introduced in Sec.~\ref{sec:rank-metric-translation-scheme-codes} they are the
$q$-Krawtchouk polynomials discussed in
Appendix~\ref{sec:rank-metric-q-krawtchouk-polynomials}.

Because the matrices are simultaneously diagonalised, the orthogonality of the
distance matrices in Eq.~\eqref{eq:distance-matrix-orthogonality} implies the
orthogonality of the radial Fourier eigenvalues:
\begin{equation}
    \label{eq:character-sum-orthogonality}
    \sum_{x\in \X} \lambda_i(d(x, 0))\lambda_j^*(d(x, 0))
    = \sum_{k=0}^D \abs{\Gamma_k} \lambda_i(k) \lambda_j^*(k)
    = \abs[\Big]{\X} \abs[\Big]{\Gamma_i} \delta_{ij}.
\end{equation}
This general algebraic relation guarantees the orthogonality of the dual radial
states, and it serves as a crucial identity when analysing the performance of
DQI in Sec.~\ref{sec:dqi-rank-metric}.

\subsection{Additive codes and dual-distance orthogonality}
\label{subsec:additive-codes-dual-distance-orthogonality}

For DQI we need the following results from coding theory: a code average should
enforce a dual constraint, and low-shell errors should be distinguishable from
their syndrome. Let $\X$ be equipped with the character pairing from
Sec.~\ref{subsec:characters-finite-field-fourier-transform}. An \emph{additive
    code} is a subgroup $\Code\le(\X,+)$. Its annihilator is
\begin{equation}
    \Code^\perp = \Set*{y \in \X\, \given \adchar_y(x)=1, \; \forall x\in \Code }.
\end{equation}
When $\X\cong\F_q^N$ and $\Code$ is $\F_q$-linear, this is the usual orthogonal
dual, $\Set*{y\in\F_q^N \given \bilinear{x}{y}=0,\; \forall x\in\Code}$. For a
translation-invariant metric $d$, define
\begin{equation}
    d(\Code)
    \coloneqq \min_{c\in\Code\setminus\Set{0}} d(c,0),
\end{equation}
with $d(\Set{0})=\infty$, and write $d^\perp=d(\Code^\perp)$.

The basic identity is the standard character average over a subgroup. For every
$z\in\X$,
\begin{equation}
    \sum_{x\in\Code}\adchar_z(x)
    =
    \abs{\Code}\indicator_{\Code^\perp}(z).
\end{equation}
At the level of quantum states, the preceding character average is exactly the
amplitude obtained by applying $\mathcal{F}_{\X}$ to the uniform superposition
over the code. Hence $\mathcal{F}_{\X}$ maps
$\abs{\Code}^{-1/2}\sum_{x\in\Code}\ket{x}$ to the uniform superposition over
$\Code^\perp$. This is the algebraic part of the syndrome step in DQI.

The coding-theoretic setup for DQI consists of a finite message space
$\Omega$ and an injective additive encoding
\begin{equation}
    \Phi\colon\Omega\to\Code\le \X,
    \qquad
    \Code=\operatorname{Im}(\Phi).
\end{equation}
Thus $\Phi$ identifies $\Omega$ with $\Code$. Given a target $y\in \X$, define
the residual
\begin{equation}\label{eq:residual-map}
    \Delta_y(x)\coloneqq \Phi(x)-y,
    \qquad x\in\Omega.
\end{equation}
We can think of $\Omega$ as a vector space $\Omega\cong \F_q^t$. The adjoint map
$\Phi^\top\colon\X\to\Omega$ is defined by
\begin{equation}
    \adchar_x(\Phi^\top(e))=\adchar_{\Phi(x)}(e),
    \qquad e\in\X,\ x\in\Omega.
\end{equation}
This is the abstract form of the syndrome map: it maps an error $e$ to the phase
it induces on codewords. In coordinates this is the transpose syndrome map. In
the rank-metric specialisation below, $\Phi^\top(e)=e\mathcal G^\top$.

For a shell cutoff $\ell$, write
\begin{equation}
    \Gamma_{\le \ell}\coloneqq\bigcup_{k=0}^{\ell}\Gamma_k.
\end{equation}
The DQI protocol assumes an efficient decoder that recovers the unique error
$e\in\Gamma_{\le \ell}$ from the syndrome $\Phi^\top(e)$. Uniqueness of this
recovery is equivalent to $\Phi^\top$ being injective on $\Gamma_{\le\ell}$.
Since $\ker(\Phi^\top)=\Code^\perp$, the uniqueness condition is guaranteed
whenever $d(\Code^\perp)>2\ell$.

The next lemma applies the subgroup character average above to the character-sum
formula for $\lambda_i$. After expanding the two character sums, it averages
over pairs of shell errors $e\in\Gamma_i$ and $e^\prime\in\Gamma_j$. The average
over $c\in\Code$ keeps exactly the pairs whose difference lies in $\Code^\perp$.
If $i+j<d^\perp$, the triangle inequality forces every such low-shell pair to be
diagonal. This argument does not require radial Fourier eigenvalues. It only
uses their expansion as sums over shells.

\begin{lemma}[Shell orthogonality below the dual code distance]
    \label{lem:shell-orthogonality-dual-distance}
    Let $\X$ be equipped with a translation-invariant metric $d$ of diameter $D$.
    Let $\Gamma_i$ be the distance-$i$ shell as in
    Eq.~\eqref{eq:distance-i-shell}, and let $\lambda_i$ denote the character sum
    in Eq.~\eqref{eq:distance-matrix-character-sum}. Let $\Code\le \X$ be an
    additive code with annihilator minimum distance $d^\perp$. For $i,j\in[0,D]$
    and $y\in\X$, define
    \begin{equation}\label{eq:code-averaged-character-sum}
        S_{i,j}(y)
        \coloneqq\sum_{c\in \Code} \lambda_i(c-y)\lambda_j^*(c-y).
    \end{equation}
    If $i+j<d^\perp$, then any $e\in\Gamma_i$ and $e^\prime\in\Gamma_j$ with
    $e-e^\prime\in\Code^\perp$ satisfy $e=e^\prime$. Consequently,
    \begin{equation}\label{eq:code-average-shell-orthogonality}
        S_{i,j}(y)=\abs{\Code}\,\abs{\Gamma_i}\,\delta_{i,j}.
    \end{equation}
\end{lemma}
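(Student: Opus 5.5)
The plan has two parts: first the combinatorial claim via the triangle inequality, then the identity for $S_{i,j}(y)$ by expanding both character sums and averaging over the code.

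\textbf{Combinatorial claim.} Suppose $e\in\Gamma_i$, $e'\in\Gamma_j$ and $e-e'\in\Code^\perp$. Translation invariance and the triangle inequality give
\begin{equation*}
    d(e-e',0)=d(e,e')\le d(e,0)+d(0,e')=i+j<d^\perp .
\end{equation*}
By the definition of $d^\perp$ as the minimum nonzero distance in $\Code^\perp$, the element $e-e'\in\Code^\perp$ must therefore be $0$, so $e=e'$. (If $\Code^\perp=\Set{0}$ the conclusion is immediate.) Symmetry of $d$ is used in the form $d(0,e')=d(e',0)=j$.

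\textbf{Expanding the sum.} By Eq.~\eqref{eq:distance-matrix-character-sum}, and writing $\adchar_{c-y}(e)=\chi(\bilinear{e}{c-y})$,
\begin{equation*}
    \lambda_i(c-y)\lambda_j^*(c-y)=\sum_{e\in\Gamma_i}\sum_{e'\in\Gamma_j}\adchar_{c-y}(e)\,\adchar_{c-y}(e')^*
    =\sum_{e,e'}\chi(\bilinear{e-e'}{c-y}),
\end{equation*}
where I use $\chi(a)^*=\chi(-a)$, valid since $\chi$ takes values in roots of unity. Bilinearity and symmetry of the form give $\chi(\bilinear{e-e'}{c-y})=\adchar_{e-e'}(c)\,\adchar_{e-e'}(y)^*$. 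Summing over $c\in\Code$ and applying the subgroup character average $\sum_{c\in\Code}\adchar_z(c)=\abs{\Code}\indicator_{\Code^\perp}(z)$ with $z=e-e'$ yields
\begin{equation*}
    S_{i,j}(y)=\abs{\Code}\sum_{\substack{e\in\Gamma_i,\,e'\in\Gamma_j\\ e-e'\in\Code^\perp}}\adchar_{e-e'}(y)^* .
\end{equation*}

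\textbf{Conclusion.} By the combinatorial claim, every surviving pair has $e=e'$. Such pairs exist only if $\Gamma_i\cap\Gamma_j\neq\emptyset$, i.e.\ $i=j$, since each point lies in exactly one shell. Each surviving pair then contributes $\adchar_0(y)^*=1$, and there are $\abs{\Gamma_i}$ of them, giving $S_{i,j}(y)=\abs{\Code}\abs{\Gamma_i}\delta_{ij}$.

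The only delicate step is the phase bookkeeping in the expansion: the conjugation and the adjoint character convention must be handled so that the $c$-dependence isolates exactly as $\adchar_{e-e'}(c)$. This holds for additive codes because $\chi(\bilinear{z}{c})$ is a character in $c$, and the $y$-phase drops out precisely because the surviving pairs are diagonal. The argument never uses radiality of the eigenvalues, only their expansion as shell sums.
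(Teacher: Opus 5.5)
Your proposal is correct and follows essentially the same route as the paper. You expand both shell character sums, collapse the code sum to the indicator of the dual code evaluated at $e-e'$, and use translation invariance with the triangle inequality to force $e=e'$, hence $i=j$. The only differences are cosmetic: you prove the combinatorial claim before the expansion, and you write the triangle inequality as $d(e,e')\le d(e,0)+d(0,e')$ rather than through $d(-e',0)=d(e',0)$.
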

\noindent Operationally, the code average removes all low-shell off-diagonal
terms: below the dual distance, the only dual codeword that can appear as a
difference of two shell elements is zero. The proof is in
Appendix~\ref{subsec:proof-shell-orthogonality-dual-distance}.

\section{Rank-metric translation scheme and codes}
\label{sec:rank-metric-translation-scheme-codes}
This section supplies the two rank-metric inputs used by DQI in
Sec.~\ref{sec:dqi-rank-metric}. The rank metric on $\X=\F_q^{m\times n}$ gives
the distance shells, radial Fourier eigenvalues, Jacobi matrix, and objective
function. Rank-metric codes give the syndrome map whose dual decoder is used to
reversibly uncompute low-rank errors. In this setting, Hamming strings are
replaced by matrices, Hamming weights by ranks of matrix differences, and
maximum distance separable (MDS) codes such as Reed--Solomon codes by MRD codes
such as Gabidulin codes.

Fix a prime power $q$ and integers $m,n\ge 1$. Let the ambient space be
$\X=\F_q^{m\times n}$ endowed with the rank distance
\begin{equation}\label{eq:rank-distance}
    d_{\mathrm{rk}}(x,y) \coloneqq \rank(x-y),
    \qquad x,y\in \F_q^{m\times n},
\end{equation}
whose diameter is $D=\min(m,n)$. Fourier analysis on this additive group uses
the standard non-degenerate $\F_q$-bilinear form
\begin{equation}\label{eq:trace-pairing}
    \bilinear{x}{y} \coloneqq \Tr(x y^\top)\in \F_q,
    \qquad x,y\in \F_q^{m\times n},
\end{equation}
where $\Tr$ denotes matrix trace over $\F_q$. We call
Eq.~\eqref{eq:trace-pairing} \emph{trace pairing}. The trace pairing lets us
write every additive character of $\F_q^{m\times n}$ in the form
$\adchar_y(x)=\chi(\Tr(xy^\top))$ for a unique matrix $y\in\F_q^{m\times n}$.
Thus the same pairing controls both the Fourier phases and the dual-code
syndrome used by DQI. The distance matrices are
\begin{equation}\label{eq:rank-distance-matrices}
    (A_r)_{xy} = \indicator_{\Set{r}}(\rank(x-y)), \qquad x, y\in \F_q^{m\times n} \, .
\end{equation}

\subsection{Rank-metric geometry}
\label{subsec:rank-metric-geometry}

The rank-$r$ shell around $0$ is
\begin{equation}
    \Gamma_r\coloneqq \Set{x\in \F_q^{m\times n} \given \rank(x)=r},
    \qquad  r\in [0,D].
\end{equation}
The maps $x\mapsto gxh$, with $g\in\mathrm{GL}_m(\F_q)$ and
$h\in\mathrm{GL}_n(\F_q)$, preserve rank and act transitively on each shell. The
adjoint maps with respect to the trace pairing also preserve rank. Hence
Lemma~\ref{lem:radial-fourier-eigenvalue-conditions} applies: the Fourier
eigenvalues of the rank-distance matrices depend only on the rank shell. Thus
the shell superposition prepared by DQI is described by one coefficient per rank
shell, and after the Fourier step the amplitude of a candidate $x$ depends only
on $\rank(\Delta_y(x))$.

The next lemma summarises the properties of the rank-metric scheme. The
$P$-polynomial property makes the radial dynamics tridiagonal, the shell sizes
normalise the rank-shell Dicke states, and the coefficients $a_r,b_r,c_r$ become
the entries of the Jacobi matrix used in the performance analysis. The following
notation is convenient for the rank shell sizes: the $q$-Pochhammer symbol is
defined by
\begin{equation}
    \label{eq:finite-q-pochhammer-symbol}
    (a;q)_k \coloneqq \prod_{j=0}^{k-1} (1-aq^j),
    \qquad k\in\Z_{\ge 0},
\end{equation}
and the Gaussian binomial coefficient by
\begin{equation}
    \label{eq:gaussian-binomial-coefficient}
    \binom{n}{k}_q
    \coloneqq \prod_{i=0}^{k-1}\frac{q^{n-i}-1}{q^{k-i}-1} =
    \begin{cases}
        \dfrac{(q;q)_n}{(q;q)_k\,(q;q)_{n-k}} & \quad \text{for }0\le k\le n, \\
        0                                     & \quad \text{otherwise}.
    \end{cases}
\end{equation}

\begin{lemma}[Rank-metric shell sizes and intersection numbers]\label{lem:rank-metric-shell-sizes-intersection-numbers}
    The rank-metric scheme $(\F_q^{m\times n}, \Set{A_r}_{r=0}^D)$, with
    $D=\min(m,n)$, has the following properties:

    \begin{enumerate}
        \item \label{item:rank-metric-p-polynomial} The scheme is $P$-polynomial.
              Its radial Fourier eigenvalues $\lambda_r(\cdot)$ form the
              $q$-Krawtchouk family, see
              Appendix~\ref{sec:rank-metric-q-krawtchouk-polynomials}. Equivalently,
              \begin{equation}
                  A_1A_r=b_{r-1}A_{r-1}+a_rA_r+c_{r+1}A_{r+1},
                  \qquad r\in[0,D],
              \end{equation}
              with $b_{-1}=c_{D+1}=0$ at the endpoints (cf. Def.~\ref{def:p-polynomial-association-scheme}).

        \item \label{item:rank-metric-shell-size} The size of the rank-$r$ shell, for
              $r\in[0,D]$, is
              \begin{equation}
                  \label{eq:rank-shell-size}
                  \abs{\Gamma_r}= (-1)^r\binom{m}{r}_q
                  \binom{n}{r}_q
                  (q;q)_r
                  q^{\frac{r(r-1)}{2}}.
              \end{equation}

        \item \label{item:rank-metric-intersection-parameters} The recursion coefficients are
              \begin{equation}\label{eq:rank-b-coefficient}
                  b_r = \frac{(q^m-q^r)(q^n-q^r)}{q-1},
                  \qquad 0\le r<D,
              \end{equation}
              and
              \begin{equation}\label{eq:rank-c-coefficient}
                  c_r = \frac{(q^r-1)q^{r-1}}{q-1},
                  \qquad 1\le r\le D.
              \end{equation}
              With the boundary values $b_D=0$ and $c_0=0$, the remaining coefficient is
              \begin{equation}\label{eq:rank-a-coefficient}
                  a_r = \abs{\Gamma_1} - b_r - c_r,
                  \qquad 0\le r\le D.
              \end{equation}
    \end{enumerate}
\end{lemma}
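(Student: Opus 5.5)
The plan is to prove the three items in the order shell sizes, recursion coefficients, $P$-polynomial property, because the third follows from the second.

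\emph{Shell sizes.} I will count rank-$r$ matrices directly. Such a matrix $x$ is determined by its column space $U\le\F_q^m$ of dimension $r$ together with a surjection $\F_q^n\to U$. There are $\binom{m}{r}_q$ choices of $U$. The number of surjections is the number of $r\times n$ matrices of full row rank, namely $\prod_{j=0}^{r-1}(q^n-q^j)$. Writing this product as $q^{r(r-1)/2}\prod_{j=0}^{r-1}(q^{n-j}-1)=q^{r(r-1)/2}\binom{n}{r}_q\prod_{i=1}^{r}(q^i-1)$ and using $\prod_{i=1}^r(q^i-1)=(-1)^r(q;q)_r$ gives Eq.~\eqref{eq:rank-shell-size}.

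\emph{Recursion coefficients.} I will use the neighbour-count characterisation after Def.~\ref{def:p-polynomial-association-scheme}. Fix $z\in\Gamma_r$ and count rank-one $u=ab^\top$ with $\rank(z-u)=r\pm1$. By the transitivity of $\mathrm{GL}_m\times\mathrm{GL}_n$ I may take $z=\begin{pmatrix} I_r&0\\0&0\end{pmatrix}$.

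For $b_r$: $\rank(z-u)=r+1$ holds exactly when $a\notin\operatorname{col}(z)$ and $b\notin\operatorname{row}(z)$. Indeed, if either lies in the corresponding space then the rank is at most $r$, and if both lie outside the column and row spaces simply enlarge. The pairs $(a,b)$ number $(q^m-q^r)(q^n-q^r)$, and each $u$ arises from $q-1$ scalings, giving Eq.~\eqref{eq:rank-b-coefficient}.

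For $c_r$: $\rank(z-u)=r-1$ forces $a\in\operatorname{col}(z)$ and $b\in\operatorname{row}(z)$, so the problem reduces to $r\times r$ blocks. There, $\rank(I_r-ab^\top)=r-1$ iff $I-ab^\top$ is singular, which happens iff $b^\top a=1$ (determinant lemma $\det(I-ab^\top)=1-b^\top a$). The count is: choose $a\ne0$ in $q^r-1$ ways, then $b$ in the affine hyperplane $b^\top a=1$ in $q^{r-1}$ ways. Here the scaling is already fixed by the normalisation, so I must verify whether to divide by $q-1$. The pairs $(\alpha a,\alpha^{-1}b)$ give the same $u$ and all satisfy the constraint, so the number of distinct $u$ is $(q^r-1)q^{r-1}/(q-1)$, matching Eq.~\eqref{eq:rank-c-coefficient}.

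\emph{$P$-polynomial property and $a_r$.} $P$-polynomiality follows from the subadditivity $\rank(z-u)\in\{r-1,r,r+1\}$ for rank-one $u$. This implies that $(A_1A_r)_{xy}$, which counts $w$ with $\rank(x-w)=1$ and $\rank(w-y)=r$, vanishes unless $\rank(x-y)\in\{r-1,r,r+1\}$. By the transitivity just used, it equals the constants $b_{r-1},a_r,c_{r+1}$ on the three relations. Then $a_r=\abs{\Gamma_1}-b_r-c_r$ is exactly Eq.~\eqref{eq:recursion-coefficient-sum} of Lemma~\ref{lem:recursion-coefficient-identities}.

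The association-scheme axioms themselves follow from transitivity of the group $\{x\mapsto gxh+t\}$ on pairs at each rank distance. The $q$-Krawtchouk identification is then obtained by matching the scalar recurrence Eq.~\eqref{eq:character-sum-three-term-recurrence} with the coefficients above, as deferred to the appendix.

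The main obstacle I anticipate is the $c_r$ count, specifically the careful accounting of the scaling redundancy in $ab^\top$ together with the determinant criterion. As a sanity check I will verify Eq.~\eqref{eq:ratio-off-diagonal} of Lemma~\ref{lem:recursion-coefficient-identities} against the shell sizes.
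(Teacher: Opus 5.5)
Your proposal is correct and follows essentially the same route as the paper. Both normalise $z$ to $\begin{pmatrix} I_r & 0\\ 0 & 0\end{pmatrix}$ via the $\mathrm{GL}_m(\F_q)\times\mathrm{GL}_n(\F_q)$ action, count pairs $(a,b)$ modulo the $(q-1)$-fold scaling, use the matrix determinant lemma for $c_r$, and get $P$-polynomiality from rank subadditivity plus transitivity. The differences are cosmetic. You count $\Gamma_r$ as column space times full-row-rank $r\times n$ matrices rather than column space times row space times $\mathrm{GL}_r(\F_q)$, and these agree because $\binom{n}{r}_q\abs{\mathrm{GL}_r(\F_q)}=\prod_{j=0}^{r-1}(q^n-q^j)$. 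Your ``spaces simply enlarge'' step for $b_r$ is where the paper exhibits an explicit nonzero $(r+1)\times(r+1)$ minor, which you should write out.
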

\noindent The proof is in Appendix~\ref{subsec:proof-rank-metric-shell-sizes-intersection-numbers}.

\subsection{Rank-metric codes}
\label{subsec:rank-metric-codes}

For DQI, the relevant code operation is dual syndrome decoding. After the
Fourier phases have been encoded, the algorithm computes $\Phi^\top(e)$ and must
recover the low-rank error $e$ reversibly from this syndrome. We now specialise
the additive-code setup from
Sec.~\ref{subsec:additive-codes-dual-distance-orthogonality} to $\F_q$-linear
rank-metric codes and then recall a family where this decoding is efficient.

Fix $N=mn$ and a vector-space identification $\F_q^N\cong\F_q^{m\times n}$. A
linear rank-metric code is an $\F_q$-linear subspace $\Code\le\F_q^{m\times n}$
with $t=\dim_{\F_q}\Code$ and $\abs{\Code}=q^t$. Its annihilator is the usual
orthogonal dual with respect to $\bilinear{\cdot}{\cdot}$.

Choose a message space $\Omega\cong\F_q^t$ and an injective linear encoding
$\Phi\colon\Omega\to \Code\leq \F_q^{m\times n}$ with $\Code=\Im(\Phi)$. After
vectorisation, $\Phi$ is represented by a generator matrix $\mathcal
    G\in\F_q^{t\times N}$:
\begin{equation}
    \Code=\Set{x\mathcal G\given x\in\F_q^t}.
\end{equation}
With the row-vector convention, the dual code is
\begin{equation}
    \Code^\perp=\Set{z\in\F_q^N\given z\mathcal G^\top=0}.
\end{equation}

There are two syndrome maps in play. If $\mathcal H\in\F_q^{(N-t)\times N}$ is a
full-row-rank parity-check matrix for $\Code$, then the usual syndrome for
decoding $\Code$ is $s_\Code(y)=y\mathcal H^\top$. DQI instead needs to decode
the dual code $\Code^\perp$. Since $\mathcal G$ generates $\Code$, it is a
parity-check matrix for $\Code^\perp$, and the relevant syndrome of an error $e$
is
\begin{equation}
    \Phi^\top(e)=e\mathcal G^\top.
\end{equation}
This is exactly the adjoint encoding map from
Sec.~\ref{subsec:additive-codes-dual-distance-orthogonality}.

The minimum rank distance of $\Code$ is
\begin{equation}
    d_{\mathrm{rk}}(\Code)
    \coloneqq \min_{c\in\Code\setminus\Set{0}}\rank(c).
\end{equation}
The rank-metric DQI protocol assumes a polynomial-time unique decoder for
$\Code^\perp$: given $\Phi^\top(e)$ and $\rank(e)\le \ell$, it recovers $e$,
typically for $\ell\le\floor{(d_\mathrm{rk}^\perp-1)/2}$, where
$d_\mathrm{rk}^\perp=d_\mathrm{rk}(\Code^\perp)$. Generic rank-metric codes do
not provide such a decoder, so we use structured MRD families.

\paragraph{Maximum rank distance (MRD) codes.}

Let $d_\mathrm{rk}=d_{\mathrm{rk}}(\Code)$. The rank-metric Singleton bound is
\begin{equation}
    t \leq \max(m,n)(D-d_\mathrm{rk}+1).
\end{equation}
It follows by deleting $d_\mathrm{rk}-1$ columns if $n\le m$, or
$d_\mathrm{rk}-1$ rows if $m\le n$; see Appendix~\ref{subsec:deferred-rank-metric-code-facts}. A
code is MRD if it meets this bound with equality. In that case,
\begin{equation}
    d_\mathrm{rk}=D-\frac{t}{\max(m,n)}+1.
\end{equation}
Thus MRD codes maximise the rank distance, and hence the unique-decoding radius,
for a fixed ambient matrix size and code dimension. They are the rank-metric
analogue of MDS codes such as Reed--Solomon codes in the Hamming scheme.

\paragraph{Gabidulin codes.}

Gabidulin codes give explicit MRD codes whose rank-metric syndrome decoding can
be used for the reversible dual-decoding step in DQI. Assume $n\le m$ and
identify $\F_q^{m\times n}$ with $(\F_{q^m})^n$ using a fixed $\F_q$-basis of
$\F_{q^m}$. For an integer $k\le n$, a Gabidulin code has message space
$(\F_{q^m})^k$, so its $\F_q$-dimension in the DQI notation is $t=mk$. Choose
$\F_q$-linearly independent evaluation points
$\alpha_1,\ldots,\alpha_n\in\F_{q^m}$. A message
$\mathbf{a}=(a_0,\ldots,a_{k-1})\in(\F_{q^m})^k$ defines the linearised
polynomial
\begin{equation}
    f_{\mathbf{a}}(z)=\sum_{i=0}^{k-1}a_i z^{q^i},
\end{equation}
which is $\F_q$-linear as a map $\F_{q^m}\to\F_{q^m}$. The Gabidulin encoding is
\begin{equation}
    \Phi(\mathbf{a})=
    (f_{\mathbf{a}}(\alpha_1),\ldots,f_{\mathbf{a}}(\alpha_n)).
\end{equation}
Equivalently, over $\F_{q^m}$ this is multiplication by the Moore--Vandermonde
generator matrix
\begin{equation}
    \mathcal{G}_{\mathrm{Gab}} \coloneqq
    \begin{pmatrix}
        \alpha_1           & \alpha_2           & \cdots & \alpha_n           \\
        \alpha_1^q         & \alpha_2^q         & \cdots & \alpha_n^q         \\
        \vdots             &                    &        & \vdots             \\
        \alpha_1^{q^{k-1}} & \alpha_2^{q^{k-1}} & \cdots & \alpha_n^{q^{k-1}}
    \end{pmatrix}\in\F_{q^m}^{k\times n}.
\end{equation}
The proof of the distance statement is deferred to
Appendix~\ref{subsec:deferred-rank-metric-code-facts}.

\begin{lemma}[Gabidulin code properties~\cite{gabidulin1985theory}]\label{lem:gabidulin-code-properties}
    Let $\Code$ be an $[n,k]$ Gabidulin code over $\F_{q^m}$ with $n\le m$.
    Viewed as an $\F_q$-linear subspace of $\F_q^{m\times n}$,
    \begin{enumerate}
        \item $\dim_{\F_q}\Code=mk$;
        \item $d_{\mathrm{rk}}(\Code)=n-k+1$, so $\Code$ is MRD;
        \item $\Code^\perp$ is an $[n,n-k]$ Gabidulin code; and
        \item $\Code$ and $\Code^\perp$ admit polynomial-time decoding up to
              $\floor{(d_\mathrm{rk}-1)/2}$ rank errors, where $d_\mathrm{rk}$ is the relevant minimum
              rank distance.
    \end{enumerate}
\end{lemma}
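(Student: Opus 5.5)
We prove the four items in order. Dimension and minimum distance come from counting roots of linearised polynomials. Duality comes from an explicit trace-dual construction. Decoding is reduced to standard key-equation algorithms from the literature.

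\emph{Dimension.} For item~1, the encoding $\mathbf a\mapsto\Phi(\mathbf a)$ is $\F_{q^m}$-linear, hence $\F_q$-linear. It suffices to show it is injective. Suppose $f_{\mathbf a}(\alpha_i)=0$ for all $i$. Since $f_{\mathbf a}$ is $\F_q$-linear, it then vanishes on the $n$-dimensional $\F_q$-span $V$ of $\alpha_1,\dots,\alpha_n$. The set $V$ has $q^n$ elements, while $f_{\mathbf a}$ has $q$-degree at most $k-1<n$ and so at most $q^{k-1}$ roots unless it is zero. Hence $\mathbf a=0$, and $\dim_{\F_q}\Code=mk$.

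\emph{Minimum distance.} For item~2, take a nonzero codeword $c=\Phi(\mathbf a)$. Under the fixed basis of $\F_{q^m}$ over $\F_q$, $\rank(c)$ equals $\dim_{\F_q}\Span_{\F_q}\Set{f_{\mathbf a}(\alpha_i)}=\dim f_{\mathbf a}(V)$. By rank--nullity, $\dim f_{\mathbf a}(V)=n-\dim(\ker f_{\mathbf a}\cap V)$. The kernel of a nonzero linearised polynomial of $q$-degree at most $k-1$ is an $\F_q$-space of size at most $q^{k-1}$, so its dimension is at most $k-1$. Therefore $\rank(c)\ge n-k+1$. Combining this with the Singleton bound $t=mk\le m(n-d_\mathrm{rk}+1)$ gives equality, so the code is MRD.

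\emph{Duality.} For item~3, the dual is taken with respect to the trace pairing $\Tr(xy^\top)$. Under the basis identification, this pairing corresponds to $\sum_i \Tr_{\F_{q^m}/\F_q}(x_i\beta(y_i))$, where $\beta$ accounts for the dual basis. One option is to choose the $\F_q$-basis of $\F_{q^m}$ to be self-dual, or else to track the dual basis explicitly. It then suffices to work with the $\F_{q^m}$-bilinear dual: an $\F_{q^m}$-linear code has $\F_q$-trace dual equal to its $\F_{q^m}$-dual, by the standard trace argument. For the $\F_{q^m}$-dual, the classical result (Gabidulin) is that the dual of $\mathrm{Gab}_k(\alpha)$ is $\mathrm{Gab}_{n-k}(\beta)$. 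Here $\beta$ is a nonzero vector in the one-dimensional $\F_{q^m}$-kernel of the $(n-1)\times n$ Moore matrix of $\alpha$, raised to the power $q^{-(n-k-1)}$. To prove this, I would check the orthogonality relations $\sum_i \alpha_i^{q^a}\beta_i^{q^b}=0$ for $0\le a\le k-1$ and $0\le b\le n-k-1$, by applying Frobenius powers to the kernel relations. I would also show that the $\beta_i$ are $\F_q$-independent using the nonvanishing of the Moore determinant. Equality of the two codes then follows from the dimension count $n-k$.

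\emph{Decoding.} For item~4, I would cite the standard syndrome-based decoders: Gabidulin's Euclidean-type algorithm on linearised polynomials, or the Berlekamp--Massey analogue. Given a received word $c+e$ with $\rank(e)\le\lfloor(n-k)/2\rfloor$, these algorithms compute the error span polynomial and then $e$, using $O(n^2)$ operations over $\F_{q^m}$. Item~3 reduces decoding of $\Code^\perp$ to the same algorithm with $k$ replaced by $n-k$.

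The main obstacle is item~3. It requires a careful match between the $\F_q$-trace pairing on matrices and the $\F_{q^m}$-bilinear pairing, via a dual basis, and the explicit construction of $\beta$.
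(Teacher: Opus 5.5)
Your proposal is correct, and it proves more than the paper does. The paper's appendix (deferred rank-metric code facts) only proves item~2. It gets the lower bound $\rank(c)\ge n-k+1$ by the same kernel-dimension argument you use. It then shows the bound is attained explicitly, using the $q$-annihilator polynomial of $\mathrm{span}_{\F_q}\{\alpha_1,\dots,\alpha_{k-1}\}$, or $f(z)=a_0z$ when $k=1$. Only afterwards does it invoke the Singleton bound, to note that $mk=m(n-d_{\mathrm{rk}}+1)$ and hence that the code is MRD. Item~1 is taken as evident from the message space $(\F_{q^m})^k$, and items~3 and~4 are attributed to Gabidulin without argument.

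You do three things differently. You prove item~1 by root counting. You get the matching upper bound $d_{\mathrm{rk}}\le n-k+1$ from the Singleton bound, which the paper proves in the same appendix. You sketch item~3 via the classical Moore-kernel construction. Your exponent $q^{-(n-k-1)}$ is the right one: it is exactly what places $a+(n-k-1)-b$ in $[0,n-2]$ for all $0\le a\le k-1$ and $0\le b\le n-k-1$.

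The trade-off on item~2 is this. The paper's route exhibits a minimum-rank codeword and does not need the dimension. Yours avoids checking that the annihilator polynomial is linearised of $q$-degree exactly $k-1$, but it needs item~1 first.

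Your treatment of the pairing in item~3 is something the paper skips, and your fallback is genuinely needed. A self-dual $\F_q$-basis of $\F_{q^m}$ exists only when $q$ is even or $m$ is odd. For an arbitrary basis $\{\gamma_j\}$ with trace-dual basis $\{\gamma_j^*\}$, set $\tilde y_i=\sum_j y_{ji}\gamma_j^*$; then $\Tr(xy^\top)=\Tr_{\F_{q^m}/\F_q}\bigl(\sum_i x_i\tilde y_i\bigr)$. So $\Code^\perp$ is a Gabidulin code under the dual-basis identification. Equivalently, under the original identification it is a left $\mathrm{GL}_m(\F_q)$-translate of one. Such a translate is rank-isometric to a Gabidulin code, which is all item~4 requires.
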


Such decoders can be implemented using the extended Euclidean algorithm for
linearised polynomials or Berlekamp--Massey variants over the $q$-polynomial
ring~\cite{gabidulin1985theory,Richter2004}. Together,
Lemmas~\ref{lem:rank-metric-shell-sizes-intersection-numbers} and \ref{lem:gabidulin-code-properties}
provide the two ingredients required in Sec.~\ref{sec:dqi-rank-metric}: radial
shell data for the Jacobi analysis and an efficient reversible decoder for the
dual syndrome step when $\Code^\perp$ is chosen from this family.

\section{Rank-metric decoded quantum interferometry}
\label{sec:dqi-rank-metric}

This section presents a DQI algorithm based on association schemes and
rank-metric codes introduced in previous sections. Given a target matrix $y$ and
an encoding map $\Phi$, the goal is to find $x$ that minimises $\rank(\Phi(x) -
    y)$. The algorithm prepares a weighted superposition over ``error'' matrices
with maximal rank $\ell$, encodes $y$ into Fourier phases, uses a reversible
decoder of a dual code to uncompute the error register, and applies the quantum
Fourier transform. In the resulting quantum state, the amplitude of a candidate
$x$ depends only on the residual rank $\rank(\Phi(x)-y)$. The algorithmic
freedom appears in the amplitudes $w_0,\ldots,w_\ell$ on matrices of each rank
$0,\ldots,\ell$ in the initial superposition. They determine the bias of sampled
solutions toward low-rank residuals. We quantify this bias with a normalised
proxy score and show that the best shell weights are obtained by diagonalising
an $(\ell+1)\times(\ell+1)$ tridiagonal matrix acting on the weights vector
$(w_0,\ldots,w_\ell)$. Asymptotically, the optimised weights are supported, up
to negligible mass, on a constant-width window of rank shells immediately below
the cutoff $\ell$.

\subsection{Translation-scheme DQI protocol}
\label{subsec:translation-scheme-dqi-protocol}
We state the DQI protocol for the rank-metric translation scheme with diameter
$D$. Fix a rank-metric code $\Code\le \F_q^{m\times n}$ and a target word $y\in
    \F_q^{m\times n}$. Let $\Phi\colon \Omega \to \Code \leq \F_q^{m\times n}$ be an
injective evaluation map with $\Code=\Im(\Phi)$ and $\abs{\Omega}=\abs{\Code}$
(cf. Sec.~\ref{subsec:rank-metric-codes}). The rank-metric nearest-codeword
problem is
\begin{equation}\label{eq:rank-metric-nearest-codeword}
    \mathrm{OPT}(y)\coloneqq \min_{c\in\Code}\rank(c-y) = \min_{x \in \Omega}\rank(\Delta_y(x))
\end{equation}
with the residual $\Delta_y(x)$ defined in Eq.~\eqref{eq:residual-map}.

\paragraph{Motivation for rank-metric objective.}
The objective in Eq.~\eqref{eq:rank-metric-nearest-codeword} is natural when
errors are correlated by a low-dimensional row or column structure. In
crisscross models, $\rank(y-c)$ measures the dimension of the row or column
subspace needed to explain the residual $y-c$~\cite{roth91}. Similarly, if
$a,b\in\mathrm{Mat}_{m\times n}(\F_q)$ are viewed as maps $\F_q^n\to\F_q^m$,
then
\begin{equation}
    \dim\bigl(\operatorname{graph}(a)\cap\operatorname{graph}(b)\bigr)
    =
    n-\rank(a-b).
\end{equation}
Thus minimising residual rank is equivalent to maximising agreement on a large
subspace. These examples motivate the objective, but we do not claim that DQI
improves over existing classical methods.

\paragraph{Decoder and shell cutoff.}
Fix a shell cutoff $\ell$. The coherent decoding step requires a reversible
unique decoder for $\Code^\perp$. From $\Phi^\top(e)$ it must recover every
error $e$ with $\rank(e)\le\ell$. For Gabidulin dual codes this holds, for
example, when
\begin{equation}
    \ell\le
    \min\!\left(D,\left\lfloor\frac{d_\mathrm{rk}^\perp-1}{2}\right\rfloor\right),
\end{equation}
where $\ell\le D$ ensures that the shells $\Gamma_0,\ldots,\Gamma_\ell$ exist.

The protocol aims to prepare candidate amplitudes of the form
\begin{equation}
    P_\ell(r)
    =
    \sum_{k=0}^{\ell}
    \frac{w_k}{\sqrt{\abs{\Gamma_k}}}\lambda_k(r),
    \qquad
    \sum_{k=0}^{\ell}\abs{w_k}^2=1.
\end{equation}
The weights $w_k$ are chosen later to bias measurement toward low-rank
residuals. The remaining implementation ingredient is an efficient rank-shell
state preparation given in the following lemma.

\begin{lemma}[Rank-shell Dicke state preparation]
    \label{lem:rank-shell-dicke-preparation}
    For every $0\le r\le D=\min(m,n)$, there is a unitary quantum circuit which,
    using reversible $\F_q$-arithmetic, controlled single-register rotations,
    and clean uniform $\F_q$-register preparations, prepares the rank-$r$
    shell Dicke state
    \begin{equation}
        \ket{R_r}
        =
        \frac{1}{\sqrt{\abs{\Gamma_r}}}
        \sum_{e\in\Gamma_r}\ket{e}.
    \end{equation}
    A controlled implementation of $\ket{r}\ket{0}\mapsto \ket{r}\ket{R_r}$ for
    all $0\le r\le \ell$ has complexity $\mathscr{O}(mn\ell)$ reversible
    $\F_q$-field operations, plus $\mathscr{O}(mn\ell)$ controlled
    state-preparation primitives. All required rotation angles are classically
    computable from $(q,m,n,\ell)$. With standard finite-field arithmetic and
    rotation synthesis, an $\varepsilon$-approximation has size
    $\tilde{\mathscr{O}}(mn\ell\,\polylog(q/\varepsilon))$ elementary gates.
\end{lemma}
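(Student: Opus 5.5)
Our approach is to build a bijective, uniformly weighted parametrisation of $\Gamma_r$ in reduced row echelon form (RREF), and to prepare it column by column (or row by row) with amplitudes derived from the $q$-binomial counting of Lemma~\ref{lem:rank-metric-shell-sizes-intersection-numbers}. Every rank-$r$ matrix $e\in\F_q^{m\times n}$ factors uniquely as $e=u\,v$. Here $v\in\F_q^{r\times n}$ is in RREF with $r$ nonzero rows, so it is determined by its row space. The matrix $u\in\F_q^{m\times r}$ has full column rank. The number of such pairs is $\binom{n}{r}_q\cdot\prod_{j=0}^{r-1}(q^m-q^j)$, which agrees with Eq.~\eqref{eq:rank-shell-size}. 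The plan therefore has three stages: (i) prepare the uniform superposition over RREF matrices $v$ of rank $r$; (ii) prepare the uniform superposition over full-column-rank $u$; (iii) compute $e=uv$ reversibly into a clean register and uncompute $(u,v)$. Stage (iii) uses that the factorisation is unique and efficiently computable from $e$ by Gaussian elimination in $\mathscr{O}(mnr)$ field operations.

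For stage (i), we encode $v$ by its pivot pattern, i.e.\ a lattice path / subset $S\subseteq[n]$ with $|S|=r$, together with free entries. Each pivot configuration contributes $q^{\text{(number of free entries)}}$ matrices. We scan the columns $j=1,\dots,n$ while tracking the number $s$ of pivots placed so far. At column $j$ we apply a controlled rotation on a one-qubit flag, ``new pivot here or not''. Its amplitude is $\sqrt{N_{\text{pivot}}/N_{\text{total}}}$, where the counts are Gaussian binomials of the remaining subproblem; for example, the number of RREF completions is $\binom{n-j}{r-s}_q$ times a power of $q$. This is exactly the $q$-analogue of the standard Dicke-state preparation via the Pascal recursion $\binom{N}{k}_q=\binom{N-1}{k-1}_q+q^{k}\binom{N-1}{k}_q$. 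If the column is not a pivot, its $s$ entries above the current pivot rows are filled with clean uniform $\F_q$-register preparations. If it is a pivot, we write the unit vector $e_{s+1}$. For stage (ii), we prepare column $j$ of $u$ uniformly outside the span of the previous $j-1$ columns. To do this, we write the column as $w+\sum_i \alpha_i u_i$, where $w$ is uniform over a fixed complement. This complement is chosen uniformly among vectors whose first nonzero coordinate lies outside the pivot positions of the already-reduced $u_1,\dots,u_{j-1}$. The resulting nonuniform distribution over the pivot index of $w$ is again handled by controlled rotations with ratios of powers of $q$. Alternatively, we can note that $\prod_j(q^m-q^j)$ is handled by a stage-(i)-style scan that produces the column-echelon form of $u$, multiplied by a uniform $\mathrm{GL}_r$ element. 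The step-by-step version is cleaner, and we would use that.

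For the controlled version $\ket{r}\ket{0}\mapsto\ket{r}\ket{R_r}$ for all $r\le\ell$, we avoid a separate circuit for each $r$. Instead we pad to a fixed layout of size $m\times\ell$ and $\ell\times n$, with the rotation angles being classically precomputed functions of $(r,j,s)$ controlled on the register $\ket r$. Each stage then uses $\mathscr{O}(mn\ell)$ field operations and primitive preparations (the product $uv$ costs $mn\ell$, and the elimination for uncomputation costs the same). Angle precision at $\varepsilon/\poly$ per rotation, together with field arithmetic costing $\polylog q$, gives the stated $\tilde{\mathscr{O}}(mn\ell\,\polylog(q/\varepsilon))$ gate count.

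The main obstacle is the clean uncomputation in stage (iii). We must map $e\mapsto(u,v)$ reversibly, with garbage-free Gaussian elimination over $\F_q$ that reproduces exactly the canonical RREF, even though pivoting is data-dependent. We would try to do this by computing the pivot pattern of $e$ column by column into an ancilla, and conditioning on it. Because the pivot pattern of $e$ equals that of $v$, this lets us erase $v$ via the already-known pattern, and then recover $u$ as the pivot columns of $e$ and erase it. All intermediate elimination registers are uncomputed by running the forward computation in reverse (Bennett-style), which doubles the cost but keeps it at $\mathscr{O}(mn\ell)$.
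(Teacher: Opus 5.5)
Your proposal is correct. It follows a mildly different route from the paper's, in the choice of factorisation.

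\textbf{The paper's route.} The paper parametrises $\Gamma_r$ symmetrically by $\mathrm{Gr}(r,m)\times\mathrm{Gr}(r,n)\times\mathrm{GL}_r(\F_q)$ via $e=B_UMB_V^\top$, using canonical echelon bases for both the column and the row space. It prepares each factor with its own clean subroutine: a $q$-Pascal recursion for each Grassmannian, and a sequential bijection $\mathcal U_r\to\mathrm{GL}_r(\F_q)$ for the core. It erases the parameters by recomputing $B_U,B_V$ by elimination and $M=L_UeL_V^\top$ from left inverses.

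\textbf{Your route.} Your factorisation $e=uv$ keeps only the row space in canonical form. It merges the column-space Grassmannian and the $\mathrm{GL}_r$ core into a single full-column-rank factor $u$; indeed $\binom{m}{r}_q\abs{\mathrm{GL}_r(\F_q)}=\prod_{j=0}^{r-1}(q^m-q^j)$. Your stage (ii), with $(w,\alpha)\in(\F_q^{m-j+1}\setminus\Set{0})\times\F_q^{j-1}$ per column, is the rectangular analogue of the paper's $\mathcal U_r$.

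\textbf{What each buys.} Your route needs one fewer subroutine and has a simpler erasure map. Here $v$ is the nonzero part of $\mathrm{RREF}(e)$, and $u$ is simply the submatrix of pivot columns of $e$, so no column-space basis or left inverse is needed. The cleanest order is:
\begin{enumerate}
\item read the pivot set off $v$;
\item subtract those columns of $e$ from the $u$ register;
\item erase $v$ by recomputing $\mathrm{RREF}(e)$ Bennett-style.
\end{enumerate}
The data-dependent pivoting you flag is no more an obstacle than in the paper, since elimination is a deterministic classical function. The paper's choice, in exchange, is symmetric in $(m,n)$ and reuses one Grassmannian module twice around a small $r\times r$ core. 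Both routes cost $\mathscr{O}(mn\ell)$, dominated by the product and the elimination.

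\textbf{Details to tighten.}

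In stage (ii), the complement must be the coordinate subspace of vectors \emph{vanishing} on the pivot positions of the reduced basis of $S_j=\Span\Set{u_1,\ldots,u_{j-1}}$. Its $q^{m-j+1}-1$ nonzero vectors make $(w,\alpha)\mapsto w+\sum_i\alpha_iu_i$ a bijection onto $\F_q^m\setminus S_j$. The set of vectors whose first nonzero coordinate merely avoids the pivots is not a subspace in general. For example, take $S_j$ spanned by the second standard basis vector of $\F_q^3$: with that set, the map is neither injective nor uniform.

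In stage (i), you scan left to right with free entries in the rows of earlier pivots. For that scan, the pivot/non-pivot split is $\binom{N}{k}_q=q^{N-k}\binom{N-1}{k-1}_q+\binom{N-1}{k}_q$, up to the common factor $q^{s(N-k)}$. The $q^k$ form you quote belongs to a right-to-left scan, as in the paper. Since you take amplitudes from exact completion counts, this is harmless.
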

\noindent The proof is in
Appendix~\ref{subsec:proof-rank-shell-dicke-preparation}.

The protocol prepares the states $\ket{\psi_1}\to\ket{\psi_6}$ as follows.

\vspace{\topsep}
\noindent\textbf{Step 1.} Initialise the weight register of $\ceil{\log_2(\ell+1)}$ qubits. This can be done with $\tilde{\mathscr{O}}(\ell)$ gates.
\begin{align}
    \ket{\psi_1}
     & = \sum_{r=0}^{\ell} w_r \ket{r}
    \notag                                      \\
    \shortintertext{\textbf{Step 2.} Conditioned on the weight register, prepare the
        corresponding rank-shell state using
        Lemma~\ref{lem:rank-shell-dicke-preparation}, then uncompute the weight register
        by coherently computing $\rank(e)$. This costs
        $\tilde{\mathscr{O}}(mn\ell\polylog(q))$ gates.}
    \ket{\psi_2'}
     & = \sum_{k=0}^{\ell} w_k \ket{k}\ket{R_k}
    = \sum_{k=0}^{\ell} \frac{w_k}{\sqrt{\abs{\Gamma_k}}} \ket{k}
    \sum_{e\in\Gamma_k}\ket{e},\notag           \\
    \ket{\psi_2}
     & = \sum_{k=0}^{\ell}
    \frac{w_k}{\sqrt{\abs{\Gamma_k}}}
    \sum_{e\in\Gamma_k}\ket{e}.
    \notag                                      \\
    \shortintertext{\textbf{Step 3.} Encode the target matrix into phases using the trace-pairing character
        $\adchar_y(-e)=\chi(-\Tr(ey^\top))$. This uses $\mathscr{O}(mn\polylog(q))$ phase gates.}
    \ket{\psi_3}
     & = \sum_{k=0}^{\ell}
    \frac{w_k}{\sqrt{\abs{\Gamma_k}}}
    \sum_{e\in\Gamma_k}\adchar_y(-e)\ket{e}.
    \notag                                      \\
    \shortintertext{\textbf{Step 4.} Compute the dual syndrome
        $\Phi^\top(e)=e\mathcal G^\top$ into a syndrome register. For a rank-metric code
        with $\dim_{\F_q}(\Code)=t$, matrix multiplication costs
        $\tilde{\mathscr{O}}(tmn\polylog(q))$ gates. For Gabidulin codes, $t=mk$, giving
        $\tilde{\mathscr{O}}(m^2kn\polylog(q))$ gates.}
    \ket{\psi_4}
     & = \sum_{k=0}^{\ell}
    \frac{w_k}{\sqrt{\abs{\Gamma_k}}}
    \sum_{e\in\Gamma_k}\adchar_y(-e)
    \ket{e}\ket{\Phi^\top(e)}.
    \notag                                      \\
    \shortintertext{\textbf{Step 5.} Use the reversible dual decoder to recover $e$
        from $\Phi^\top(e)$ and uncompute the error register. In general this costs
        $T_\mathrm{dec}(m,n,q,\ell)$. For Gabidulin codes, dual decoding is polynomial
        time, \eg{} $\tilde{\mathscr{O}}(n^2\poly(m,\log q))$ gates for a
        quadratic-time decoder.}
    \ket{\psi_5}
     & = \sum_{k=0}^{\ell}
    \frac{w_k}{\sqrt{\abs{\Gamma_k}}}
    \sum_{e\in\Gamma_k}\adchar_y(-e)
    \ket{\Phi^\top(e)}.
    \notag                                      \\
    \shortintertext{\textbf{Step 6.} Apply the Fourier transform over $\Omega$ with cardinality $\abs{\Omega}=q^t$,
        costing roughly $\tilde{\mathscr{O}}(t\log q)$ gates. The third line uses the defining adjoint relation for $\Phi^\top$.}
    \ket{\psi_6}
     & =
    \sum_{k=0}^{\ell}\frac{w_k}{\sqrt{\abs{\Gamma_k}}}
    \sum_{e\in\Gamma_k}\adchar_y(-e)
    \,\mathcal F_{\Omega}\ket{\Phi^\top(e)}
    \notag                                      \\
     & =
    \sum_{k=0}^{\ell}
    \frac{w_k}{\sqrt{\abs{\Omega}\abs{\Gamma_k}}}
    \sum_{x\in\Omega}\sum_{e\in\Gamma_k}
    \adchar_x(\Phi^\top(e))
    \adchar_y(-e)\ket{x}
    \notag                                      \\
     & =
    \sum_{k=0}^{\ell}
    \frac{w_k}{\sqrt{\abs{\Omega}\abs{\Gamma_k}}}
    \sum_{x\in\Omega}\sum_{e\in\Gamma_k}
    \adchar_{\Phi(x)}(e)
    \adchar_y(-e)\ket{x}
    \notag                                      \\
     & =
    \sum_{k=0}^{\ell}
    \frac{w_k}{\sqrt{\abs{\Omega}\abs{\Gamma_k}}}
    \sum_{x\in\Omega}\sum_{e\in\Gamma_k}
    \adchar_{\Delta_y(x)}(e)\ket{x}
    \notag                                      \\
     & =
    \sum_{k=0}^{\ell}
    \frac{w_k}{\sqrt{\abs{\Omega}\abs{\Gamma_k}}}
    \sum_{x\in\Omega}\lambda_k(\Delta_y(x))\ket{x}.
    \label{eq:dqi-final-state}
\end{align}
\vspace{\topsep}

The rank-metric translation scheme satisfies
Lemma~\ref{lem:radial-fourier-eigenvalue-conditions} so its Fourier eigenvalues
$\lambda_k$ are radial. Hence $\lambda_k(\Delta_y(x))$ depends only on
$\rank(\Delta_y(x))$, and the amplitude of a candidate $x\in\Omega$ is a
function only of its residual rank. Therefore, the final state can be written as
\begin{equation}\label{eq:dqi-final-radial-state}
    \begin{split}
        \ket{\psi_6}
         & =
        \sum_{k=0}^{\ell}
        \frac{w_k}{\sqrt{\abs{\Omega}\abs{\Gamma_k}}}
        \sum_{x\in\Omega}
        \lambda_k\!\left(\rank(\Delta_y(x))\right) \ket{x} \\
         & =
        \frac{1}{\sqrt{\abs{\Omega}}}
        \sum_{x\in\Omega}
        P_\ell\left(\rank(\Delta_y(x))\right)\ket{x}.
    \end{split}
\end{equation}
By the three-term recurrence \eqref{eq:character-sum-three-term-recurrence},
which follows from the $P$-polynomial structure, each $\lambda_k(r)$ is a
polynomial of degree $k$ in $\lambda_1(r)$. Hence $P_\ell$ is a polynomial of
degree at most $\ell$ in $\lambda_1(r)$. Measuring the final state in the
computational basis samples $x$ with probability proportional to
$\abs{P_\ell\left(\rank(\Delta_y(x))\right)}^2$.

Although we have written the protocol for the rank-metric scheme, the same
mechanism applies to general translation schemes with shell states whose Fourier
transforms depend only on distance, efficient shell-state preparation, and an
efficient reversible decoder for the dual syndrome map. Radiality is what makes
the final amplitude depend only on the distance of the residual, rather than on
the residual itself. In the rank-metric case, radiality follows from
Lemma~\ref{lem:radial-fourier-eigenvalue-conditions}, shell-state preparation is
provided by Lemma~\ref{lem:rank-shell-dicke-preparation}, and the decoder is
provided by the code family chosen in Sec.~\ref{subsec:rank-metric-codes}.

\subsection{Proxy scores and DQI performance}
\label{subsec:proxy-scores-dqi-performance}

This subsection quantifies how much the final DQI state favours low-rank
residuals. We use the monotone score $q^{-\rank(z)}$, normalised to have mean
zero and unit variance over all matrices. After centring and scaling, this score
is exactly the first rank-shell Fourier mode, so its expectation in the DQI
state depends only on the shell weights. The distance condition that makes dual
decoding unique also makes cross-terms between different prepared shells vanish,
Lemma~\ref{lem:shell-orthogonality-dual-distance}. This reduces the expectation
of the proxy-score observable to a quadratic form in a tridiagonal submatrix of
the Jacobi operator $\tilde{A}$ acting on the weight vector
$(w_0,\ldots,w_\ell)$.

We first express the monotone rank score $q^{-\rank(z)}$ through the first
radial Fourier eigenvalue.

\begin{lemma}[Rank score as a first Fourier mode]
    \label{lem:rank-score-first-fourier-mode}
    For every $z\in\F_q^{m\times n}$,
    \begin{equation}
        \label{eq:q-minus-rank-character-score}
        q^{-\rank(z)}
        =
        q^{-m}+q^{-n}-q^{-(m+n)}
        +
        \frac{q-1}{q^{m+n}}\lambda_1(\rank(z)).
    \end{equation}
\end{lemma}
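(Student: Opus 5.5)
The plan is to compute $\lambda_1(\rank(z))$ in closed form directly from the character-sum formula of Lemma~\ref{lem:distance-matrix-fourier-eigenvectors}. Then Eq.~\eqref{eq:q-minus-rank-character-score} follows by rearranging. By that lemma and the trace pairing, Eq.~\eqref{eq:trace-pairing},
\begin{equation*}
    \lambda_1(z)=\sum_{e\in\Gamma_1}\chi\bigl(\Tr(e z^\top)\bigr).
\end{equation*}
Radiality (Lemma~\ref{lem:radial-fourier-eigenvalue-conditions}, as applied in Sec.~\ref{subsec:rank-metric-geometry}) guarantees this depends only on $r=\rank(z)$. However, the computation below gives an explicit function of $r$ anyway, so radiality is not really needed.

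\emph{Step 1: parametrise the rank-one shell.} Every $e\in\Gamma_1$ can be written as $e=uv^\top$ with $u\in\F_q^m\setminus\Set{0}$ and $v\in\F_q^n\setminus\Set{0}$. The pair $(u,v)$ is determined up to the scaling $(u,v)\mapsto(au,a^{-1}v)$ with $a\in\F_q^\times$, so each rank-one matrix arises from exactly $q-1$ pairs. Hence
\begin{equation*}
    (q-1)\lambda_1(z)=\sum_{u\neq0}\sum_{v\neq0}\chi\bigl(\Tr(uv^\top z^\top)\bigr)
    =\sum_{u\neq0}\sum_{v\neq0}\chi\bigl(u^\top z v\bigr),
\end{equation*}
using cyclicity of the trace and the fact that $u^\top z v$ is a scalar.

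\emph{Step 2: evaluate the unrestricted sum.} Summing over all $u,v$, including zero, fix $v$ and sum over $u$. By the character orthogonality \eqref{eq:additive-character-orthogonality}, applied coordinatewise to the linear form $u\mapsto u^\top(zv)$, the inner sum equals $q^m\indicator\{zv=0\}$. The full sum is therefore $q^m\abs{\ker z}=q^{m}q^{n-r}$.

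\emph{Step 3: inclusion--exclusion.} The terms with $u=0$ contribute $q^n$, the terms with $v=0$ contribute $q^m$, and the term $u=v=0$ is counted twice. This gives
\begin{equation*}
    (q-1)\lambda_1(r)=q^{m+n-r}-q^m-q^n+1.
\end{equation*}
Dividing by $q^{m+n}$ yields
\begin{equation*}
    \frac{q-1}{q^{m+n}}\lambda_1(r)=q^{-r}-q^{-n}-q^{-m}+q^{-(m+n)},
\end{equation*}
which is exactly Eq.~\eqref{eq:q-minus-rank-character-score}.

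The only step needing care is Step 2. There one must check that the bilinear form reduces to $u^\top z v$ (not, say, $u^\top z^\top v$ with mismatched dimensions) and that the kernel of $z\colon\F_q^n\to\F_q^m$ has size $q^{n-r}$. Both are routine. As a sanity check, the case $r=0$ gives $\lambda_1(0)=\abs{\Gamma_1}=(q^m-1)(q^n-1)/(q-1)$, which agrees with Eq.~\eqref{eq:rank-shell-size} at $r=1$.
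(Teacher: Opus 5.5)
Your proof is correct and takes essentially the same route as the paper. Both write rank-one matrices as $uv^\top$ with the $(q-1)$-fold scaling redundancy, evaluate the full double character sum as $q^m\abs{\ker z}=q^{m+n-\rank(z)}$ by orthogonality, and split off the pairs with $u=0$ or $v=0$, which contribute $q^m+q^n-1$. The only difference is packaging: the paper first records the identity $q^{-\rank(z)}=\mathbb{E}_{a,b}[\adchar_z(ab^\top)]$, which it reuses for the variance in Lemma~\ref{lem:rank-score-uniform-moments}, and only then separates the zero terms, whereas you solve for $\lambda_1$ directly by inclusion--exclusion.
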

\noindent Since $q^{-\rank(z)}$ is strictly decreasing in $\rank(z)$, minimising
residual rank is equivalent to maximising $q^{-\rank(\Delta_y(x))}$, or
equivalently $\lambda_1(\rank(\Delta_y(x)))$ by
Lemma~\ref{lem:rank-score-first-fourier-mode}. The proof is in
Appendix~\ref{subsec:proof-rank-score-first-fourier-mode}.

The next lemma gives the uniform mean and variance of this rank score. These
moments are used below to turn $q^{-\rank(z)}$ into a normalised proxy score
whose scale is comparable across $m,n,q$.

\begin{lemma}[Uniform expectation and variance of the rank score]
    \label{lem:rank-score-uniform-moments}
    The average and variance of $q^{-\rank(z)}$ over the uniform measure on
    $\F_q^{m\times n}$ are given by
    \begin{equation}
        \begin{split}
            \mathbb{E}_{z\sim\text{Unif}} \left[q^{-\rank(z)}\right]
             & =
            q^{-m}+q^{-n}-q^{-(m+n)}\, , \\
            \Var_{z\sim\text{Unif}} \left[q^{-\rank(z)} \right]
             & =
            (q-1)\,q^{-(m+n)}\,(1-q^{-m})(1-q^{-n}).
        \end{split}
    \end{equation}
\end{lemma}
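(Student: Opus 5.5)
The plan is to derive both moments from Lemma~\ref{lem:rank-score-first-fourier-mode} together with the orthogonality relation for the radial Fourier eigenvalues, Eq.~\eqref{eq:character-sum-orthogonality}. Writing $\mu\coloneqq q^{-m}+q^{-n}-q^{-(m+n)}$ and $\kappa\coloneqq (q-1)q^{-(m+n)}$, that lemma says
\[
q^{-\rank(z)}=\mu+\kappa\,\lambda_1(\rank(z)).
\]
Thus the uniform mean and variance of $q^{-\rank(z)}$ reduce to the first and second moments of $\lambda_1(\rank(z))$ under the uniform measure on $\X=\F_q^{m\times n}$.

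For the mean, note that $\lambda_0\equiv 1$, since $\Gamma_0=\Set{0}$ and $\adchar_x(0)=1$. Applying Eq.~\eqref{eq:character-sum-orthogonality} with $i=1$, $j=0$ gives $\sum_{z\in\X}\lambda_1(\rank(z))=0$, because $1\neq 0$. Directly, $\sum_{z}\adchar_z(e)=0$ for every $e\neq 0$, and $\Gamma_1$ excludes $0$. Hence $\mathbb{E}[q^{-\rank(z)}]=\mu$, which is the first claim.

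For the variance, $\Var[q^{-\rank(z)}]=\kappa^2\,\mathbb{E}[\abs{\lambda_1(\rank(z))}^2]$, and $\lambda_1$ is real because the scheme is symmetric ($-\Gamma_1=\Gamma_1$). Eq.~\eqref{eq:character-sum-orthogonality} with $i=j=1$ gives $\sum_z\lambda_1^2=\abs{\X}\abs{\Gamma_1}$, so the second moment equals $\abs{\Gamma_1}$. It remains to evaluate $\abs{\Gamma_1}$ from Eq.~\eqref{eq:rank-shell-size} with $r=1$:
\[
\abs{\Gamma_1}=-\frac{q^m-1}{q-1}\cdot\frac{q^n-1}{q-1}\cdot(1-q)=\frac{(q^m-1)(q^n-1)}{q-1}.
\]
As a cross-check, this also equals $b_0$ from Eq.~\eqref{eq:rank-b-coefficient}.

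Multiplying out,
\[
\kappa^2\abs{\Gamma_1}=(q-1)^2q^{-2(m+n)}\frac{(q^m-1)(q^n-1)}{q-1}=(q-1)\,q^{-(m+n)}(1-q^{-m})(1-q^{-n}),
\]
which is the stated variance. There is no real obstacle here. The only point needing care is tracking the sign convention in Eq.~\eqref{eq:rank-shell-size}, where $(-1)^1(q;q)_1=q-1$, and confirming that the orthogonality relation is normalised by $\abs{\X}$, so that dividing by $\abs{\X}$ gives uniform expectations. As an independent sanity check, one can verify the mean directly via $\mathbb{E}[q^{-\rank z}]=\Pr[\text{random }u\in\F_q^n,\ v\in\F_q^m\text{ satisfy }zu=0]$-type counting, but the Fourier route above suffices.
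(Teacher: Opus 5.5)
Your proof is correct, but it takes a different route from the paper.

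\textbf{The paper's route.} The paper does not invoke Lemma~\ref{lem:rank-score-first-fourier-mode} as stated. It goes back to the intermediate identity $q^{-\rank(z)}=q^{-(m+n)}\sum_{a,b}\chi(a^\top z b)$ from that lemma's proof, Eq.~\eqref{eq:rank-matrix-score-function}. Averaging over $z$ leaves $\Pr_{a,b}[ab^\top=0]$, which gives the mean. For the second moment it writes $q^{-2\rank(z)}$ as a double average, so the $z$-average becomes $\Pr[ab^\top+a'b'^{\top}=0]$. It then counts the quadruples explicitly as $(q^m+q^n-1)^2+(q-1)(q^m-1)(q^n-1)$ and subtracts the squared mean.

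\textbf{Your route.} You take the affine relation $q^{-\rank(z)}=\mu+\kappa\lambda_1(\rank(z))$ at face value. You then apply the general orthogonality relation~\eqref{eq:character-sum-orthogonality} against $\lambda_0\equiv 1$ and $\lambda_1$. The mean is just the constant term, and the variance is $\kappa^2\abs{\Gamma_1}$, with no counting beyond the shell size.

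\textbf{What each approach buys.} Underneath, both are Parseval for the same character expansion. Your decomposition is more structural. It shows directly that the normalised score $O_q(z)$ equals $\lambda_1(\rank(z))/\sqrt{\abs{\Gamma_1}}$, which is the identification the paper uses in Eq.~\eqref{eq:rank-proxy-objective} without spelling it out. It also carries over unchanged to any translation scheme whose score is affine in $\lambda_1$. The paper's counting is more elementary and self-contained: it needs only character orthogonality over $\F_q$, not the $\lambda_1$ representation or the orthogonality of the radial eigenvalues.
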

\noindent The proof is in
Appendix~\ref{subsec:proof-rank-score-uniform-moments}.

Define the normalised proxy score
\begin{equation}
    O_q(z)
    \coloneqq
    \frac{q^{-\rank(z)} - \mathbb{E}_{z\sim \text{Unif}}[q^{-\rank(z)}]}{\sqrt{\Var_{z\sim \text{Unif}}[q^{-\rank(z)}]}}.
\end{equation}
The corresponding proxy-score objective is
\begin{equation}\label{eq:rank-proxy-objective}
    \mathrm{OPT}_q(y)
    =
    \max_{x \in \Omega} \left[O_q(\Delta_y(x)) \right]
    =
    \frac{1}{\sqrt{\abs{\Gamma_1}}} \max_{x \in \Omega} \left[ \lambda_1(\rank(\Delta_y(x))) \right],
\end{equation}
where $\abs{\Gamma_1} = \frac{(q^m-1)(q^n-1)}{q-1}$ for the rank metric case.
This proxy-score objective is represented by the diagonal proxy-score observable
\begin{equation}
    \label{eq:rank-proxy-observable}
    \mathcal{O}_q
    =
    \frac{1}{\sqrt{\abs{\Gamma_1}}}
    \sum_{x\in\Omega}
    \lambda_1\!\left(\rank(\Delta_y(x)) \right) \ket{x}\bra{x}.
\end{equation}

This proxy-score observable depends on $y$ only through residual ranks. The
final state from Eq.~\eqref{eq:dqi-final-radial-state} contains only $\lambda_k$
with $k\le\ell$. Therefore products of amplitudes involve precisely the
low-shell sums controlled by Lemma~\ref{lem:shell-orthogonality-dual-distance}.
Under the dual-distance assumptions in the next lemma, these sums collapse to
the upper $(\ell+1)\times(\ell+1)$ block $\tilde A^{(\ell)}$ of the Jacobi
matrix $\tilde A$ from Eq.~\eqref{eq:radial-jacobi-matrix}.

\begin{lemma}[Expectation and variance of the rank-metric proxy-score observable]
    \label{lem:rank-metric-proxy-score-observable-moments}
    Assume $\ell\le D$. If $2\ell+1<d^\perp$, then the expectation of
    $\mathcal{O}_q$ in the final DQI state is
    \begin{equation}
        \mathbb{E}_{\psi_6}[\mathcal{O}_q]
        \coloneqq
        \frac{\bra{\psi_6}\mathcal{O}_q\ket{\psi_6}}{\braket{\psi_6}{\psi_6}}
        =
        \frac{1}{\sqrt{\abs{\Gamma_1}}}
        w^\dagger \tilde{A}^{(\ell)} w
        \label{eq:dqi-observable-expectation-jacobi}
    \end{equation}
    If $2\ell+2<d^\perp$, then the variance of $\mathcal{O}_q$ in the final DQI
    state is
    \begin{equation}\label{eq:dqi-observable-variance-jacobi}
        \begin{split}
            \operatorname{Var}_{\psi_6}[\mathcal{O}_q]
             & \coloneqq
            \frac{\bra{\psi_6}\mathcal{O}_q^2\ket{\psi_6}}{\braket{\psi_6}{\psi_6}}
            -
            \mathbb{E}_{\psi_6}^2 [\mathcal{O}_q] \\
             & =
            \frac{1}{\abs{\Gamma_1}}
            \left(
            w^\dagger \left(\tilde{A}^{(\ell)}\right)^2 w
            - \left(w^\dagger \tilde{A}^{(\ell)} w\right)^2
            + \abs{w_\ell}^2 b_{\ell}c_{\ell+1}
            \right),
        \end{split}
    \end{equation}
    where $\abs{\Gamma_1} = \frac{(q^m-1)(q^n-1)}{q-1}$.
\end{lemma}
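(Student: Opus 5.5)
The plan is to expand everything in the shell basis and reduce each inner product to a code-averaged character sum $S_{i,j}$, which Lemma~\ref{lem:shell-orthogonality-dual-distance} evaluates.

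\textbf{Normalisation.} Start from Eq.~\eqref{eq:dqi-final-state}. Since $\Phi$ is a bijection $\Omega\to\Code$, sums over $x\in\Omega$ become sums over $c\in\Code$ with argument $c-y$, and $\abs{\Omega}=\abs{\Code}$. Then
\begin{equation*}
\braket{\psi_6}{\psi_6}=\sum_{k,k'\le\ell}\frac{w_k^*w_{k'}}{\abs{\Code}\sqrt{\abs{\Gamma_k}\abs{\Gamma_{k'}}}}\,S_{k',k}(y).
\end{equation*}
Because $k+k'\le 2\ell<d^\perp$, Lemma~\ref{lem:shell-orthogonality-dual-distance} gives $S_{k',k}=\abs{\Code}\abs{\Gamma_k}\delta_{kk'}$, so $\braket{\psi_6}{\psi_6}=\sum_k\abs{w_k}^2=1$.

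\textbf{Expectation.} The amplitude of $x$ in $\mathcal O_q\ket{\psi_6}$ contains the factor $\lambda_1\lambda_k$. Apply the scalar three-term recurrence \eqref{eq:character-sum-three-term-recurrence} pointwise:
\begin{equation*}
\lambda_1\lambda_k=b_{k-1}\lambda_{k-1}+a_k\lambda_k+c_{k+1}\lambda_{k+1}.
\end{equation*}
This is valid for any argument $z$, since it follows from $A_1A_k$ acting on Fourier eigenvectors; radiality is not needed. Hence $\bra{\psi_6}\mathcal O_q\ket{\psi_6}$ is a combination of sums $S_{j,k}$ with $j\le\ell$ and $k\le\ell+1$. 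The hypothesis $2\ell+1<d^\perp$ makes all of these diagonal by Lemma~\ref{lem:shell-orthogonality-dual-distance}, and the $\lambda_{\ell+1}$ term dies because $j\le\ell$. The surviving coefficients are $b_{k-1}\sqrt{\abs{\Gamma_{k-1}}/\abs{\Gamma_k}}$ and its partners. Lemma~\ref{lem:recursion-coefficient-identities} turns these into $\sqrt{b_{k-1}c_k}$, exactly as in Eq.~\eqref{eq:radial-jacobi-matrix}, which gives $w^\dagger\tilde A^{(\ell)}w/\sqrt{\abs{\Gamma_1}}$.

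\textbf{Variance.} The cleanest route is to define the vector $v=\tilde A w$ in the full radial basis, truncated to indices $\le\ell+1$. By the recurrence, $\lambda_1 P_\ell=\sum_{k\le\ell+1}(v_k/\sqrt{\abs{\Gamma_k}})\lambda_k$, where $v_k=(\tilde A^{(\ell)}w)_k$ for $k\le\ell$ and $v_{\ell+1}=\sqrt{b_\ell c_{\ell+1}}\,w_\ell$. Then $\bra{\psi_6}\mathcal O_q^2\ket{\psi_6}=\norm{\mathcal O_q\psi_6}^2$ involves $S_{j,k}$ with $j,k\le\ell+1$. Orthogonality requires $2\ell+2<d^\perp$, which is exactly the stated hypothesis. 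Under it,
\begin{equation*}
\bra{\psi_6}\mathcal O_q^2\ket{\psi_6}=\frac{1}{\abs{\Gamma_1}}\sum_{k\le\ell+1}\abs{v_k}^2=\frac{1}{\abs{\Gamma_1}}\Bigl(w^\dagger(\tilde A^{(\ell)})^2w+\abs{w_\ell}^2b_\ell c_{\ell+1}\Bigr).
\end{equation*}
Subtracting the squared expectation gives Eq.~\eqref{eq:dqi-observable-variance-jacobi}.

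\textbf{Main obstacle.} The hard part is careful index bookkeeping at the boundary shell $\ell+1$. One must check that the off-diagonal coefficient, after symmetrisation via Eq.~\eqref{eq:ratio-off-diagonal}, produces exactly $b_\ell c_{\ell+1}$, and that the reality of $\lambda_k$ and $w$ conventions make the Jacobi entries symmetric. Also, if $\ell=D$ the shell $\ell+1$ does not exist and $b_D=0$, consistent with the formula.
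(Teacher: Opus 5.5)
Your proposal is correct. The normalisation and expectation steps match the paper: you expand in shells, apply the recurrence \eqref{eq:character-sum-three-term-recurrence} once, collapse the code sums with Lemma~\ref{lem:shell-orthogonality-dual-distance}, and symmetrise with Eq.~\eqref{eq:ratio-off-diagonal}.

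The variance step is organised differently.

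\emph{The paper's route.} It expands the second moment directly, applying the recurrence twice to $\lambda_1^2\lambda_k$. This gives a pentadiagonal coefficient matrix, with shells $k-2,\dots,k+2$ paired against $k'\le\ell$. The hypothesis $2\ell+2<d^\perp$ enters through the pairing of $\lambda_{\ell+2}$ with $\lambda_\ell^*$. The paper then identifies this matrix, by inspection, with $(\tilde A^{(\ell)})^2$ except at position $(\ell,\ell)$, where an extra $c_{\ell+1}b_\ell$ survives.

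\emph{Your route.} You use that $\mathcal{O}_q$ is real diagonal to write the second moment as $\|\mathcal{O}_q\ket{\psi_6}\|^2$. This holds because $\lambda_1$ is real, e.g.\ by Lemma~\ref{lem:rank-score-first-fourier-mode}, a point you use without stating. You then apply the recurrence only once and recognise $\lambda_1 P_\ell$ as the shell expansion of the untruncated vector $\tilde A w$, supported on shells $0,\dots,\ell+1$. The same hypothesis $2\ell+2<d^\perp$ now enters through the diagonal sum $S_{\ell+1,\ell+1}$.

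\emph{Comparison.} Both routes compute the same quantity, $w^\dagger\Pi_{\le\ell}\tilde A^{2}\Pi_{\le\ell}w=\|\tilde A\,\Pi_{\le\ell}w\|^2$, the compression of the square of the full Jacobi matrix. The paper reads it off entrywise; you read it off as a squared norm. Your version avoids checking five diagonals. It also derives, rather than asserts, that the boundary term is exactly the leakage $\|(I-\Pi_{\le\ell})\tilde A\,\Pi_{\le\ell}w\|^2=b_\ell c_{\ell+1}\abs{w_\ell}^2$ into shell $\ell+1$.
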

\noindent The proof is in Appendix~\ref{subsec:proof-rank-metric-proxy-score-observable-moments}.

The derivation of Eq.~\eqref{eq:dqi-observable-expectation-jacobi} differs
slightly from Ref.~\cite{jordan25:DQI-original}. Here we compute the Fourier
transform on the final state \eqref{eq:dqi-final-state} explicitly, whereas
Ref.~\cite{jordan25:DQI-original} applies it to the proxy-score observable using
generalised $X$ and $Z$ operators. The same method applies to diagonal
observables given by other linear combinations of the radial Fourier
eigenvalues, producing a corresponding matrix $\hat A^{(\ell)}$.

The normalisation of the proxy score $O_q$ uses the uniform measure on
$\F_q^{m\times n}$ only to fix a scale. The DQI state samples the affine coset
$\Code-y$, but under the low-shell orthogonality assumptions the first two coset
moments agree with the full matrix-space moments.

\paragraph{Relation to Hamming DQI.}
The original DQI in Ref.~\cite{jordan25:DQI-original} can be viewed as DQI with
the Hamming scheme. The Hamming metric $d_H$ appears inside the rank metric by
restricting residuals to diagonal matrices: for all $u,v\in\F_q^n$,
\begin{equation}
    d_{\mathrm{rk}}(\operatorname{diag}(u),\operatorname{diag}(v))
    =
    d_H(u,v).
\end{equation}
If $A_r^{\mathrm{rk}}$ denotes the rank-metric distance matrix on
$\F_q^{n\times n}$ and $A_r^H$ denotes the Hamming distance matrix on
$\F_q^n$, then
\begin{equation}
    (A_r^{\mathrm{rk}})_{\operatorname{diag}(u),\operatorname{diag}(v)}
    =
    (A_r^H)_{u,v},
    \qquad u,v\in\F_q^n,\ r\in[0,n].
\end{equation}
In this restricted metric sense, the singleton-target Hamming nearest-codeword
version of DQI is recovered by restricting the ambient matrix space, residuals,
and shell states to diagonal matrices. This includes the max-XORSAT case, where
the target sets are singletons. It does not recover the full max-LINSAT setting
of Ref.~\cite{jordan25:DQI-original}, where the target is a product set
$F_1\times\cdots\times F_m$ rather than a single received word. Extending the
rank-metric objective to product target sets is a separate issue.

This does not mean that rank-metric codes become Hamming-metric codes under this
restriction. The point is only that the common mechanism is the association
scheme structure: prepare radial shell superpositions, encode the problem in
additive-character phases, use reversible dual decoding on low shells, and apply
the quantum Fourier transform, which produces amplitudes depending only on the
residual distance.

\subsection{Rank-shell weights and the ladder factorisation}
\label{subsec:rank-shell-weights-ladder-factorisation}

We now return to the choice of rank-shell weights. By
Lemma~\ref{lem:rank-metric-proxy-score-observable-moments}, if $2\ell+1<d^\perp$, then
\begin{equation}
    \mathbb{E}_{\psi_6}[\mathcal{O}_q]
    =
    \frac{1}{\sqrt{\abs{\Gamma_1}}}
    w^\dagger \tilde A^{(\ell)}w .
\end{equation}
Thus optimising the initial error superposition amounts to maximising this
quadratic form over normalised $w=(w_0,\ldots,w_\ell)$. For the rank scheme,
$\tilde A^{(\ell)}$ is tridiagonal and admits a bidiagonal factorisation, which
equivalently turns the problem into minimising $\|B^{(\ell)}w\|^2$. On the full
radial basis using coefficients from
Lemma~\ref{lem:recursion-coefficient-identities}, define the ladder operators
\begin{equation}
    B\ket{R_i}
    =
    \sqrt{b_i}\ket{R_i}
    -
    \sqrt{c_i}\ket{R_{i-1}},
    \qquad
    B^\dagger\ket{R_i}
    =
    \sqrt{b_i}\ket{R_i}
    -
    \sqrt{c_{i+1}}\ket{R_{i+1}},
\end{equation}
with terms outside $[0,D]$ omitted. Thus $B$ and $B^\dagger$ only couple
neighbouring rank shells.

The truncated operator $B^{(\ell)}$ is the upper-left $(\ell+1)\times(\ell+1)$
block acting on $\Span\Set{\ket{R_i}\given i\in[0,\ell]}$. Using
Eq.~\eqref{eq:ratio-off-diagonal} and $a_k=\abs{\Gamma_1}-b_k-c_k$ from
Eq.~\eqref{eq:rank-a-coefficient}, the truncated Jacobi matrix has entries
\begin{equation}
    \begin{split}
        (\tilde A^{(\ell)})_{kk}
         & =
        \abs{\Gamma_1}-b_k-c_k,
        \qquad 0\le k\le \ell, \\
        (\tilde A^{(\ell)})_{k,k+1}
        =
        (\tilde A^{(\ell)})_{k+1,k}
         & =
        \sqrt{b_kc_{k+1}},
        \qquad 0\le k<\ell,
    \end{split}
\end{equation}
and all other entries are zero. This matrix admits the following ladder
factorisation. The upper-bidiagonal operator $B^{(\ell)}$ has diagonal entries
$(\sqrt{b_k})_{k=0}^{\ell}$ and upper diagonal entries
$(-\sqrt{c_k})_{k=1}^{\ell}$, and $(B^{(\ell)})^\dagger$ is its adjoint:
\begin{equation}
    B^{(\ell)} \;=\;
    \begin{pmatrix}
        \sqrt{b_0} & -\sqrt{c_1} &             &        &                \\
                   & \sqrt{b_1}  & -\sqrt{c_2} &        &                \\
                   &             & \sqrt{b_2}  & \ddots &                \\
                   &             &             & \ddots & -\sqrt{c_\ell} \\
                   &             &             &        & \sqrt{b_\ell}
    \end{pmatrix}, \quad (B^{(\ell)})^\dagger \;=\;
    \begin{pmatrix}
        \sqrt{b_0}  &             &            &                &               \\
        -\sqrt{c_1} & \sqrt{b_1}  &            &                &               \\
                    & -\sqrt{c_2} & \sqrt{b_2} &                &               \\
                    &             & \ddots     & \ddots
                    &                                                           \\
                    &             &            & -\sqrt{c_\ell} & \sqrt{b_\ell}
    \end{pmatrix}.
\end{equation}
Recall the boundary values $b_{-1}=c_{D+1}=b_D=c_0=0$. A direct computation
gives
\begin{equation}\label{eq:jacobi-ladder-factorisation}
    \tilde A^{(\ell)} = \abs{\Gamma_1}\,I - (B^{(\ell)})^\dagger B^{(\ell)}.
\end{equation}
Thus maximising the expected proxy-score observable is equivalent to minimising
$\|B^{(\ell)}w\|^2$ over normalised rank-shell weight vectors $w$.

Combining Lemma~\ref{lem:rank-metric-proxy-score-observable-moments} with the factorisation above
and $\|w\|^2=1$, we obtain the equivalent representation
\begin{equation}
    \label{eq:dqi-observable-expectation-ladder}
    \mathbb{E}_{\psi_6}[\mathcal{O}_q]
    =
    \frac{\abs{\Gamma_1}-\|B^{(\ell)} w\|^2}{\sqrt{\abs{\Gamma_1}}}.
\end{equation}
Thus maximising the expected proxy-score observable is equivalent to minimising
$\|B^{(\ell)}w\|^2$ over normalised radial weight vectors $w$.

Since $\mathcal{O}_q$ represents the normalised proxy score associated with
$q^{-\rank(\Delta_y(x))}$, we package the achieved expected unnormalised score
as the \emph{effective-rank proxy}
\begin{equation}
    R_\ell(w)\coloneqq
    -\log_q\!\left(
    1-\frac{q-1}{q^{m+n}}\,
    \|B^{(\ell)} w\|^2
    \right).
\end{equation}
Thus $q^{-R_\ell(w)}$ is the expected unnormalised score
$q^{-\rank(\Delta_y(\tilde{x}))}$ for a measured DQI output $\tilde{x}$, while
$R_\ell(w)$ itself should not be read as the expected residual rank.

The next subsection bounds the relevant smallest eigenvalue, interprets the
optimised effective-rank proxy, and converts it into a statement about the
residual rank of a sampled candidate.

\subsection{Performance bounds with optimised rank-shell weights}
\label{subsec:performance-bounds-optimised-rank-shell-weights}

To optimise the rank-shell weights, choose $w^{(\ell)}$ as a normalised
nonnegative eigenvector associated with the largest eigenvalue of $\tilde
    A^{(\ell)}$. By Eq.~\eqref{eq:jacobi-ladder-factorisation}, this is equivalent
to choosing $w^{(\ell)}$ as the eigenvector of $(B^{(\ell)})^\dagger B^{(\ell)}$
associated with its smallest eigenvalue. Let $\mu^{(\ell)}_{\min}$ and
$\mu^{(\ell)}_{\max}$ denote the smallest and largest eigenvalues of
$(B^{(\ell)})^\dagger B^{(\ell)}$ after multiplication by the normalisation
factor $(q-1)/q^{m+n}$. The optimised radial weights therefore give
\begin{equation}
    R_\ell(w^{(\ell)})= - \log_q(1-\mu^{(\ell)}_{\min}).
\end{equation}
Since $-\log_q(1-\mu)$ is increasing in $\mu$, every normalised $w$ satisfies
\begin{equation}
    \label{eq:rank-proxy-ladder-bounds}
    -\log_q(1-\mu^{(\ell)}_{\min})
    \leq R_\ell(w) \leq
    -\log_q(1-\mu^{(\ell)}_{\max}).
\end{equation}
It remains to bound $\mu^{(\ell)}_{\min}$. The next lemma gives the two
estimates that control the optimised effective-rank proxy.

\begin{lemma}[Spectral bounds for the truncated rank Jacobi operator]
    \label{lem:truncated-rank-jacobi-spectral-bounds}
    Let $D=\min(m,n)$. For every prime power $q$, all $m,n\ge 1$, and
    $0\le\ell<D$,
    \begin{equation}
        \label{eq:truncated-jacobi-lower-eigenvalue-bound}
        \left(
        \sqrt{(1-q^{\ell-m})(1-q^{\ell-n})}
        -
        \sqrt{(q^\ell-1)q^{\,\ell-1-m-n}}
        \right)^2
        \leq
        \mu_{\min}^{(\ell)}
        \leq
        1-q^{\ell-D}.
    \end{equation}
\end{lemma}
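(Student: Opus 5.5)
Write $\kappa\coloneqq (q-1)/q^{m+n}$. Then $\mu^{(\ell)}_{\min}$ is $\kappa$ times the smallest eigenvalue of $(B^{(\ell)})^\dagger B^{(\ell)}$, which equals the square of the smallest singular value $\sigma_{\min}(B^{(\ell)})$. The plan is to bound $\sigma_{\min}(B^{(\ell)})$ above and below using the explicit bidiagonal form of $B^{(\ell)}$ and the coefficients $b_k,c_k$ from Lemma~\ref{lem:rank-metric-shell-sizes-intersection-numbers}.

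First record the scaled entries:
\begin{equation*}
\kappa b_k=(1-q^{k-m})(1-q^{k-n}),\qquad \kappa c_k=(q^k-1)q^{k-1-m-n}.
\end{equation*}
Both are monotone in $k$: $\kappa b_k$ decreases and $\kappa c_k$ increases.

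\textbf{Upper bound.} Take the trial vector $w=e_\ell$, the last basis vector. Then $B^{(\ell)}e_\ell$ is the last column of $B^{(\ell)}$, with entries $\sqrt{b_\ell}$ and $-\sqrt{c_\ell}$. Hence
\begin{equation*}
\mu_{\min}\le\kappa(b_\ell+c_\ell)=\kappa\bigl(\abs{\Gamma_1}-a_\ell\bigr),
\end{equation*}
using Lemma~\ref{lem:recursion-coefficient-identities}. Expanding explicitly,
\begin{equation*}
\kappa(b_\ell+c_\ell)=(1-q^{\ell-m})(1-q^{\ell-n})+(q^\ell-1)q^{\ell-1-m-n}.
\end{equation*}
Multiplying out gives $1-q^{\ell-m}-q^{\ell-n}+q^{2\ell-m-n}+q^{2\ell-1-m-n}-q^{\ell-1-m-n}$. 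We must show this is at most $1-q^{\ell-D}$. Say $D=n\le m$. The claim then reduces to
\begin{equation*}
q^{2\ell-m-n}(1+q^{-1})\le q^{\ell-m}+q^{\ell-1-m-n},
\end{equation*}
which follows from $q^{\ell-n}\le q^{-1}$ because $\ell<D=n$. The leftover terms should check out; if the $e_\ell$ trial is too weak, try instead a vector concentrated near the last shells. I expect $e_\ell$ to suffice.

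\textbf{Lower bound.} Split $B^{(\ell)}=\mathrm{diag}(\sqrt{b_k})-N$, where $N$ is the superdiagonal part with entries $\sqrt{c_k}$. Weyl's inequality for singular values gives
\begin{equation*}
\sigma_{\min}(B^{(\ell)})\ge\sigma_{\min}(\mathrm{diag})-\|N\|=\min_k\sqrt{b_k}-\max_{1\le k\le\ell}\sqrt{c_k}.
\end{equation*}
By monotonicity the minimum of $\sqrt{b_k}$ is $\sqrt{b_\ell}$ and the maximum of $\sqrt{c_k}$ is $\sqrt{c_\ell}$. Here $\|N\|$ is the largest entry because $N$ is a shift with weights, so $N^\dagger N$ is diagonal. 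This gives $\sigma_{\min}\ge\sqrt{b_\ell}-\sqrt{c_\ell}$.

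Squaring requires $\sqrt{b_\ell}\ge\sqrt{c_\ell}$. If instead $\sqrt{b_\ell}<\sqrt{c_\ell}$, the squared expression still holds trivially only if we reinterpret it. I anticipate this sign issue as the main subtlety. Since $\mu_{\min}\ge0$, the stated inequality is a lower bound on $\mu_{\min}$ only when the difference is nonnegative. When the difference is negative, I would argue separately. One option is the bound $\sigma_{\min}\ge\bigl|\sqrt{b_\ell}-\sqrt{c_\ell}\bigr|$ obtained by exchanging the roles of the diagonal and shift parts. That exchange does not directly work, because the shift part is singular. So I would instead verify that $c_\ell\le b_\ell$ whenever $\ell<D$, using $q^\ell\le q^{D-1}$. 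The inequality $(q^\ell-1)q^{\ell-1}\le(q^m-q^\ell)(q^n-q^\ell)$ holds for $\ell\le D-1$ since the right side is at least $q^\ell(q-1)\cdot q^{\ell}(q-1)/\ldots$. This requires a short check, the only genuinely delicate estimate.

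Multiplying by $\kappa$ then yields the stated lower bound.
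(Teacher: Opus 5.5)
Your proof is correct. The lower bound is essentially the paper's argument. You split $B^{(\ell)}$ as $\mathrm{diag}(\sqrt{b_k})$ minus the weighted shift $N$, apply $\sigma_{\min}(X-Y)\ge\sigma_{\min}(X)-\|Y\|$ with $\|N\|=\max_k\sqrt{c_k}$, use monotonicity of $b_k$ and $c_k$, and then need $b_\ell>c_\ell$. That last check, which you left with a placeholder, closes in one line. Since $m,n\ge\ell+1$,
$(q^m-q^\ell)(q^n-q^\ell)\ge q^{2\ell}(q-1)^2\ge q^{2\ell-1}>(q^\ell-1)q^{\ell-1}$, so the sign case distinction you worried about never arises.

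The upper bound is where you genuinely differ. The paper treats $\tilde A^{(\ell)}$ as a principal submatrix of the full Jacobi matrix, whose spectrum is $\{\lambda_1(t)\}_{t=0}^{D}$. Cauchy interlacing then gives $\lambda_{\max}(\tilde A^{(\ell)})\ge\lambda_1(D-\ell)$, which is exactly $\mu^{(\ell)}_{\min}\le 1-q^{\ell-D}$, and as a by-product the monotonicity $\mu^{(\ell+1)}_{\min}\le\mu^{(\ell)}_{\min}$.

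You instead use the Rayleigh quotient at $e_\ell$. Your reduction to $q^{\ell-n}(1+q^{-1})\le 1+q^{-1-n}$ is valid, because the left side is at most $q^{-1}+q^{-2}<1$. So your hedging about needing a better trial vector is unnecessary. Your route needs only the explicit $b_\ell,c_\ell$, not the identification of the full radial spectrum. The intermediate bound $\mu^{(\ell)}_{\min}\le\kappa(b_\ell+c_\ell)$ is also strictly sharper, since
$1-\kappa(b_\ell+c_\ell)=q^{\ell-m}+q^{\ell-n}-(1+q^{-1})q^{2\ell-m-n}+q^{\ell-1-m-n}$.
For $m=n$ this gives $\mu^{(\ell)}_{\min}\le 1-2q^{\ell-D}+O(q^{2(\ell-D)})$, hence $R_\ell(w^{(\ell)})\le D-\ell-\log_q 2+o(1)$. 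For fixed offset $s=m-n$ it reproduces the entry $(\mathcal K_s)_{00}=1+q^{-s}$ of the boundary operator.
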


The upper bound comes from viewing $\tilde A^{(\ell)}$ as a principal submatrix
of the full rank-scheme Jacobi operator. The lower bound uses the bidiagonal
ladder factorisation from Sec.~\ref{subsec:rank-shell-weights-ladder-factorisation}. The
proof is in Appendix~\ref{subsec:proof-truncated-rank-jacobi-spectral-bounds}.

\paragraph{Large-system interpretation.}
By Lemma~\ref{lem:truncated-rank-jacobi-spectral-bounds} and
Eq.~\eqref{eq:rank-proxy-ladder-bounds}, the optimised effective-rank proxy is
controlled by Eq.~\eqref{eq:truncated-jacobi-lower-eigenvalue-bound}. We now
study this bound under the fixed aspect-ratio scaling
\begin{equation}
    n,m \to \infty,
    \qquad
    \frac{\ell}{D}=\theta <1,
    \qquad
    \frac{n}{m}=\rho \in (0,\infty).
\end{equation}
assuming that the field size $q$ is fixed. Here the case $\rho=1$ in the display
below refers to the exact square sequence $m=n$; near-square scalings with
$n/m\to1$ but $m\ne n$ have constants depending on the rectangular offset, as in
the boundary-profile analysis below. In this case the left side of the
inequality \eqref{eq:truncated-jacobi-lower-eigenvalue-bound} converges at
leading order to
\begin{equation}\label{eq:large-system-truncated-jacobi-eigenvalue-bound}
    \left(
    \sqrt{(1-q^{\ell-m})(1-q^{\ell-n})}
    -
    \sqrt{(q^\ell-1)q^{\,\ell-1-m-n}}
    \right)^2 \to
    \begin{cases}
        1-q^{\ell-D}, \quad              & \rho \neq 1 \\
        1-2(1+q^{-1/2})q^{\ell-D}, \quad & \rho=1
    \end{cases}
\end{equation}
Thus the optimised effective-rank proxy is confined near $D-\ell$, with only a
constant-width correction in the square-matrix case:
\begin{equation}
    D-\ell - \log_q(2+2q^{-1/2})\delta_{\rho,1}
    \leq R_\ell(w^{(\ell)}) \leq D-\ell.
\end{equation}
Note that the correction $\log_q(2+2q^{-1/2})$ lies between $0$ and $2$, so the
asymptotic interval has width at most $2$.

In Hamming DQI, the semicircle law estimates how far the optimised radial
weights can bias samples toward satisfying more coordinate constraints; in the
singleton-target case, this is bias toward small Hamming residuals. In
rank-metric DQI, the analogous large-system estimate is
Eq.~\eqref{eq:large-system-truncated-jacobi-eigenvalue-bound}. The bias toward
low rank is controlled by the $q$-geometric rank-shell recurrence coefficients
rather than by a semicircle law.

\paragraph{From expected proxy score to sampled rank.}
The effective-rank proxy estimate also gives a sampling statement. Let
$\tilde{x}\in\Omega$ be the measured DQI output under the optimised weights, and
set $R=\rank(\Delta_y(\tilde{x}))$. Under the slightly stronger condition
$2\ell+2<d^\perp$, Appendix~\ref{sec:expected-proxy-score-rank-tail-bound} shows
that, for every $a>0$,
\begin{equation}
    \Pr\!\left[R\le D-\ell+a\right]
    \ge
    \frac{(1-q^{-a})^2}{1+q}.
\end{equation}
Thus, for fixed $q$ and $a$, the effective-rank proxy corresponds to a
constant-probability bound on the actual sampled residual rank. This is an
absolute rank guarantee, not an approximation theorem relative to
$\mathrm{OPT}(y)$. For Gabidulin nearest-codeword instances, the
Appendix~\ref{sec:expected-proxy-score-rank-tail-bound} also shows why this should not be
read as an OPT-relative approximation guarantee: the dual unique-decoding radius
limits the cutoff $\ell$, while the primal covering radius can make
$\mathrm{OPT}(y)$ substantially smaller than $D-\ell$.

The next subsection shows that the optimising weights, asymptotically,
concentrate their mass in a constant-width boundary window below the cutoff
$\ell$.

\subsection{Concentration and truncation of optimised rank-shell weights}
\label{subsec:concentration-truncation-optimised-rank-shell-weights}

The performance bounds above determine the value and sampling interpretation of
the optimised effective-rank proxy. We now describe the shape of the optimising
vector $w^{(\ell)}$. In this subsection we assume $n\le m$ and set $D=n$; the
case $m<n$ follows by transposing matrices. The optimised weights are best
viewed from the moving cutoff shell. Define $u_r^{(\ell)}\coloneqq
    w_{\ell-r}^{(\ell)}$ for $0\le r\le\ell$, and set $u_r^{(\ell)}=0$ for $r>\ell$.
Thus $r=0$ is the cutoff shell $\ell$, and increasing $r$ moves inward. In these
reversed coordinates, the limiting profile near the cutoff is obtained by
zooming into the deficit of the normalised ladder Hamiltonian
\begin{equation}
    \bar L^{(\ell)}
    \coloneqq
    \frac{q-1}{q^{m+n}}(B^{(\ell)})^\dagger B^{(\ell)}.
\end{equation}
If $U_\ell$ reverses the truncated shell basis, then the proof shows, in the
operator-norm sense made precise in the
Appendix~\ref{subsec:proof-boundary-concentration-optimised-rank-shell-weights},
that
\begin{equation}
    q^{D-\ell}U_\ell(I-\bar L^{(\ell)})U_\ell^\dagger
    \to
    \mathcal K_s.
\end{equation}
For fixed rectangular offset $s=m-n$, the limiting cutoff profile is governed by
the compact Jacobi operator
\begin{equation}
    (\mathcal K_s u)_r
    =
    (1+q^{-s})q^{-r}u_r
    +
    q^{-(s+1)/2}q^{-(r-1)}u_{r-1}
    +
    q^{-(s+1)/2}q^{-r}u_{r+1},
    \qquad u_{-1}=0.
\end{equation}
Its coefficients decay geometrically as $r$ moves inward, so finite boundary
windows approximate $\mathcal K_s$ in operator norm.

\begin{lemma}[Boundary concentration of optimised rank-shell weights]
    \label{lem:boundary-concentration-optimised-rank-shell-weights}
    Assume $n\le m$ and set $D=n$. If $\ell\to\infty$, $D-\ell\to\infty$, and
    $m-n=s$ is fixed, then the reversed profiles $u^{(\ell)}$ of the optimised
    weights defined above converge in $\ell^2(\mathbb N_0)$ to the normalised
    positive top eigenvector of $\mathcal K_s$.

    Consequently, there is a constant $C_s>0$, depending only on $q$ and $s$,
    such that, for all sufficiently large $\ell$ and every integer $0\le
        J\le\ell$,
    \begin{equation}
        \sum_{i=0}^{\ell-J-1}\abs*{w_i^{(\ell)}}^2
        \le
        C_s q^{-2J}.
    \end{equation}
    Thus keeping the final $J=O(\log_q(1/\varepsilon))$ rank shells loses at
    most $\varepsilon$ mass for all sufficiently large $\ell$. If, along
    another sequence, $m-n\to\infty$ while $\ell\to\infty$ and
    $D-\ell\to\infty$, then
    $\abs*{w_\ell^{(\ell)}}^2\to1$.
\end{lemma}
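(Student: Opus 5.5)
The plan is to work directly with the rescaled, reversed operator
$M^{(\ell)}\coloneqq q^{D-\ell}U_\ell(I-\bar L^{(\ell)})U_\ell^\dagger$. By Eq.~\eqref{eq:jacobi-ladder-factorisation}, the optimised $w^{(\ell)}$ is the top eigenvector of $\tilde A^{(\ell)}$, hence of $I-\bar L^{(\ell)}$. So $u^{(\ell)}$ is the top eigenvector of $M^{(\ell)}$, which I embed in $\ell^2(\mathbb N_0)$ by zero-padding the indices $r>\ell$.

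\textbf{Step 1: limit of the entries.} I first compute the entries of $M^{(\ell)}$ from Lemma~\ref{lem:rank-metric-shell-sizes-intersection-numbers}. With $\bar b_k\coloneqq\frac{q-1}{q^{m+n}}b_k=(1-q^{k-m})(1-q^{k-n})$ and $\bar c_k\coloneqq\frac{q-1}{q^{m+n}}c_k=(q^k-1)q^{k-1-m-n}$, the matrix $I-\bar L^{(\ell)}$ has diagonal entries $1-\bar b_k-\bar c_k$ and off-diagonal entries $+\sqrt{\bar b_k\bar c_{k+1}}$. Put $k=\ell-r$, $D=n$, $m=n+s$. The diagonal entry of $M^{(\ell)}$ is then
\begin{equation*}
q^{-r}(1+q^{-s})+O\!\left(q^{-2r}q^{\ell-n}\right),
\end{equation*}
and the coupling between $r$ and $r+1$ is
\begin{equation*}
q^{-r-(s+1)/2}\bigl(1+O(q^{\ell-n})\bigr)+O\!\left(q^{-r}q^{-\ell}\right).
\end{equation*}
The point of this computation is that every error term carries an extra factor $q^{-r}$. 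This gives a uniform bound: for all $r$, the $r$-th row of $M^{(\ell)}-\mathcal K_s$ is at most $C(q^{\ell-n}+q^{-\ell})q^{-r}$. I also need the rows with $r>\ell$, where $M^{(\ell)}$ vanishes. There $\mathcal K_s$ has entries $O(q^{-r})\le O(q^{-\ell})$, so those rows satisfy the same kind of bound. Bounding the Hilbert--Schmidt norm by $\sum_r q^{-2r}$ times these prefactors then yields $\|M^{(\ell)}-\mathcal K_s\|\to0$ in operator norm. This is where $\ell\to\infty$ and $D-\ell\to\infty$ are used.

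\textbf{Step 2: eigenvector convergence.} $\mathcal K_s$ is a compact self-adjoint Jacobi operator with strictly positive off-diagonal entries. Its top eigenvalue $\kappa_s$ is therefore simple and isolated: simplicity holds for any Jacobi eigenvalue, and it lies above the essential spectrum $\{0\}$, since $\kappa_s\ge(\mathcal K_s)_{00}=1+q^{-s}$. A Davis--Kahan $\sin\Theta$ bound with the gap $\kappa_s-\kappa_s'$ then shows that the top eigenvector of $M^{(\ell)}$ converges to that of $\mathcal K_s$. Here $\kappa_s'$ is the second eigenvalue of $\mathcal K_s$, and the gap is stable under the perturbation of Step 1. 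The sign is fixed by Perron--Frobenius, because $M^{(\ell)}$ has nonnegative entries, and the paper chooses $w^{(\ell)}\ge0$.

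\textbf{Step 3: tail bound, the main obstacle.} Mere $\ell^2$ convergence does not give $\sum_{r>J}|u_r^{(\ell)}|^2\le C_sq^{-2J}$ uniformly in $J\le\ell$. I would instead use the eigen-equation $\kappa^{(\ell)}u_r=d_ru_r+e_{r-1}u_{r-1}+e_ru_{r+1}$. By Step 1, the coefficients satisfy $|d_r|,|e_r|\le Cq^{-r}$ uniformly for large $\ell$. By Step 2, $\kappa^{(\ell)}\ge\kappa_0>0$. Hence $|u_r|\le C'q^{-r}(|u_{r-1}|+|u_r|+|u_{r+1}|)\le 3C'q^{-r}$. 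Iterating once more gives $|u_r|\le C''q^{-r}q^{-(r-1)}$ for $r\ge r_0$, so the tail is dominated by a summable super-geometric sequence. Consequently $\sum_{r\ge J+1}|u_r|^2\le C_sq^{-2J}$, with the finitely many small $J$ absorbed into $C_s$. Translating back via $u_r^{(\ell)}=w^{(\ell)}_{\ell-r}$ gives the stated sum over $i\le\ell-J-1$. The delicate point is making $C$ and $\kappa_0$ uniform in $\ell$, including the indices $r$ close to $\ell$ (shells near $0$). There the $O(q^{-\ell})$ corrections from $\bar c_k$ with small $k$ must also be seen to carry a factor $q^{-r}$.

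\textbf{Step 4: the case $m-n\to\infty$.} Step 1 still applies and shows that $M^{(\ell)}$ is close in norm to $\operatorname{diag}(q^{-r})$. The off-diagonal part is $O(q^{-(s+1)/2})\to0$ and the diagonal correction $q^{-s-r}\to0$. The diagonal operator has top eigenvector $e_0$ with gap $1-q^{-1}$. Davis--Kahan then gives $|u_0^{(\ell)}|^2=|w_\ell^{(\ell)}|^2\to1$.
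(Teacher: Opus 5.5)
Your proposal is correct. Its skeleton is the paper's:
\begin{itemize}
\item your $M^{(\ell)}$ is the paper's $T_\ell$;
\item both arguments prove norm convergence to $\mathcal K_s$, you via a Hilbert--Schmidt bound and the paper via Schur's test;
\item both apply perturbation theory to the isolated simple top eigenvalue;
\item both handle $m-n\to\infty$ through the diagonal limit $\operatorname{diag}(q^{-r})$ with gap $1-q^{-1}$.
\end{itemize}
You get simplicity from the Jacobi recurrence (a formal eigenvector is determined by $u_0$) rather than from Jentzsch's theorem, which is more elementary.

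The genuinely different step is the tail bound.
\begin{itemize}
\item The paper splits $u^{(\ell)}=P_Ju^{(\ell)}+Q_Ju^{(\ell)}$ and inverts in $(\kappa_\ell I-Q_JT_\ell Q_J)Q_Ju^{(\ell)}=Q_JT_\ell P_Ju^{(\ell)}$. This needs $\|Q_JT_\ell Q_J\|$ small, so $J$ must be large and small $J$ are absorbed into the constant. It yields only $\|Q_Ju^{(\ell)}\|_2\le C_sq^{-J}$.
\item Your pointwise bootstrap from the eigen-equation gives $|u^{(\ell)}_r|\le Cq^{-r}$ for every $r$ at once, and super-geometric decay on iteration. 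This is strictly stronger.
\item Your bound can also be made non-asymptotic. The entry bounds $|d_r|\le 3q^{-r}$ and $|e_r|\le q^{-r-(s+1)/2}$ hold exactly for every $\ell<D$. Moreover $\kappa^{(\ell)}\ge (M^{(\ell)})_{00}=1+q^{-s}-q^{\ell-m}-q^{\ell-1-m}+q^{-1-m}\ge 1/4$. So you do not need Step~2 or the convergence to $\mathcal K_s$ for the tail estimate, and it holds for all $\ell<D$, not just sufficiently large $\ell$.
\end{itemize}
What the paper's block formulation buys is reuse: the same $P_J/Q_J$ estimates feed directly into the Rayleigh-quotient bound of Corollary~\ref{cor:effective-rank-proxy-boundary-truncation}.

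One small correction to Step~1. The off-diagonal error from the factor $\sqrt{1-q^{-(\ell-r)}}$ in $\bar c_{\ell-r}$ is $O(q^{-\ell})$, not $O(q^{-r}q^{-\ell})$. For example, at $r=\ell-1$ the entry is $\sqrt{(q-1)\bar b_0}\,q^{-s/2-\ell}$, while the limit is $q^{1/2}q^{-s/2-\ell}$. So your stated row bound $C(q^{\ell-n}+q^{-\ell})q^{-r}$ fails for $r$ near $\ell$. This is harmless on both counts:
\begin{itemize}
\item these errors contribute only $O(\sqrt{\ell}\,q^{-\ell})$ to the Hilbert--Schmidt norm, so the norm convergence still holds;
\item they are still at most $q^{-r}$ for $r\le\ell$, which is all Step~3 uses.
\end{itemize}
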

\noindent The proof is in
Appendix~\ref{subsec:proof-boundary-concentration-optimised-rank-shell-weights}.

The profile estimate also gives a stability statement for the effective-rank
proxy when the optimised weights are truncated to a boundary window. For a fixed
integer $J\ge0$ and all sufficiently large $\ell$, let $P_J^{(\ell)}$ be the
coordinate projection onto the final $J+1$ shell coordinates
$\ell-J,\ldots,\ell$, and define
\begin{equation}
    \widehat w^{(\ell,J)}
    \coloneqq
    \frac{P_J^{(\ell)}w^{(\ell)}}{\|P_J^{(\ell)}w^{(\ell)}\|_2}.
\end{equation}

\begin{corollary}[Effective-rank proxy stability under boundary truncation]
    \label{cor:effective-rank-proxy-boundary-truncation}
    Assume the setting of
    Lemma~\ref{lem:boundary-concentration-optimised-rank-shell-weights} with
    $m-n=s$ fixed. There is a constant $C_s'>0$, depending only on $q$ and $s$,
    such that, for all sufficiently large $\ell$ and every integer $0\le
        J\le\ell$,
    \begin{equation}
        0
        \le
        R_\ell(\widehat w^{(\ell,J)})-R_\ell(w^{(\ell)})
        \le
        C_s' q^{-2J}.
    \end{equation}
    Hence truncating to the final $J=O(\log_q(1/\varepsilon))$ rank shells
    increases the optimised effective-rank proxy by at most $O(\varepsilon)$,
    uniformly in the boundary-window width.
\end{corollary}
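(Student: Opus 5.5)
The plan is to compare the normalised ladder energies of $\widehat w^{(\ell,J)}$ and $w^{(\ell)}$. Since $R_\ell(w)=-\log_q(1-\langle w,\bar L^{(\ell)}w\rangle)$ and $-\log_q(1-\mu)$ is increasing, the lower bound $0\le R_\ell(\widehat w^{(\ell,J)})-R_\ell(w^{(\ell)})$ is immediate. Indeed, $w^{(\ell)}$ minimises the Rayleigh quotient of $\bar L^{(\ell)}$ over unit vectors, and $\widehat w^{(\ell,J)}$ is a unit vector. For the upper bound, write $\mu=\mu^{(\ell)}_{\min}$ and $\widehat\mu=\langle \widehat w,\bar L^{(\ell)}\widehat w\rangle$. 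By the mean value theorem,
\[
R_\ell(\widehat w)-R_\ell(w^{(\ell)})\le \frac{\widehat\mu-\mu}{(1-\widehat\mu)\ln q}.
\]
So it suffices to bound $\widehat\mu-\mu$ by a multiple of $q^{-2J}$ times a quantity comparable to $1-\widehat\mu$.

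The key step is the standard eigenvector-perturbation identity. Write $w=w^{(\ell)}=Pw+P^\perp w$ with $P=P_J^{(\ell)}$, and set $\epsilon^2=\|P^\perp w\|^2\le C_s q^{-2J}$ by Lemma~\ref{lem:boundary-concentration-optimised-rank-shell-weights}. Let $M=I-\bar L^{(\ell)}$, and work with the top eigenvalue $\nu=1-\mu$ of $M$. Then
\[
\langle Pw,(M-\nu)Pw\rangle=\langle P^\perp w,(M-\nu)P^\perp w\rangle,
\]
because $(M-\nu)w=0$ gives $\langle Pw,(M-\nu)Pw\rangle=-\langle Pw,(M-\nu)P^\perp w\rangle=\langle P^\perp w,(M-\nu) P^\perp w\rangle$. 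Hence
\[
\nu-(1-\widehat\mu)=\frac{\langle P^\perp w,(\nu-M)P^\perp w\rangle}{1-\epsilon^2}\le \frac{\epsilon^2\,(\nu-\lambda_{\min}(M))}{1-\epsilon^2}.
\]
We have $\nu\le\|M\|$, and we bound $\|M\|$ using the scaling. By the operator convergence $q^{D-\ell}U_\ell M U_\ell^\dagger\to\mathcal K_s$, or directly by Lemma~\ref{lem:truncated-rank-jacobi-spectral-bounds}, both $\nu$ and $\|M\|$ are of order $q^{\ell-D}$. Since $\bar L^{(\ell)}\ge 0$, we also get $-\lambda_{\min}(M)\le\|\bar L^{(\ell)}\|-1$. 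The delicate point is that $\lambda_{\min}(M)$ may be of order $-1$ rather than $-q^{\ell-D}$, so this crude bound would fail.

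The obstacle is therefore to show that the tail $P^\perp w$ only sees the part of $M$ deep inside the window, where the rescaled entries are small. I will not use the crude $\lambda_{\min}$ bound. Instead I will use the geometric decay of the coefficients of $\mathcal K_s$: in the reversed coordinates, the entries of $q^{D-\ell}M$ at depth $r$ are $O(q^{-r})$ uniformly for large $\ell$, up to the $1-q^{\ell-D}$ shift. This decay is the operator-norm approximation stated before Lemma~\ref{lem:boundary-concentration-optimised-rank-shell-weights}, and I would reprove it directly from the formulas for $b_k,c_k$ in Lemma~\ref{lem:rank-metric-shell-sizes-intersection-numbers}. It gives
\[
|\langle P^\perp w,(\nu-M)P^\perp w\rangle|\le C\,q^{\ell-D}\epsilon^2.
\]
The same decay shows that $1-\widehat\mu\ge \nu/2\ge c\,q^{\ell-D}$ for large $J$. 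Small $J$ is absorbed into the constant, since $R_\ell\le D-\ell$ gives a uniform bound after rescaling. Combining the estimates, the factors $q^{\ell-D}$ cancel in the ratio, and we obtain $R_\ell(\widehat w)-R_\ell(w^{(\ell)})\le C_s' q^{-2J}$ uniformly in $J$. The main work is the uniform control of the deep part of $q^{D-\ell}M$ in the reversed basis; if that fails near the bottom shells $k\approx 0$, I would instead bound those entries directly using $b_k\approx q^{m+n}/(q-1)$ and $\ell\to\infty$.
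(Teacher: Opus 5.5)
Your large-$J$ argument is correct and is essentially the paper's argument. After rescaling by $q^{D-\ell}$ and reversing coordinates, your identity $\nu-\widehat\nu=\langle P^\perp w,(\nu-M)P^\perp w\rangle/\|Pw\|^2$ is the same as the paper's $\kappa_\ell-\tau_{\ell,J}=\langle p|T_\ell|q_J\rangle/\|p\|_2^2$; this follows from $(T_\ell-\kappa_\ell)(p+q_J)=0$ and $p\perp q_J$. You bound it by the squared tail mass from Lemma~\ref{lem:boundary-concentration-optimised-rank-shell-weights}, while the paper bounds it by $\|Q_JT_\ell P_J\|\,\|q_J\|$.

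The ``delicate point'' you raise does not exist, and it contradicts your own claim $\|M\|=O(q^{\ell-D})$. By Cauchy interlacing, the spectrum of $\tilde A^{(\ell)}$ lies in $[\lambda_1(D),\lambda_1(0)]$ with $\lambda_1(D)=-(q^D-1)/(q-1)$. Hence $\mu^{(\ell)}_{\max}\le 1-q^{-D}$ and $M\succeq q^{-D}I\succeq 0$. Your ``crude'' bound then already gives $0\le\nu-\widehat\nu\le\nu\epsilon^2/(1-\epsilon^2)$. Therefore $R_\ell(\widehat w^{(\ell,J)})-R_\ell(w^{(\ell)})=\log_q(\nu/\widehat\nu)\le\log_q\frac{1-\epsilon^2}{1-2\epsilon^2}\le 2\epsilon^2/\ln q$ once $\epsilon^2\le 1/4$. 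This needs no decay estimates or rescaling, so it is simpler than both your route and the paper's.

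The genuine gap is the small-$J$ case. For the finitely many $J$ with $C_sq^{-2J}$ not small, the perturbation identity gives no lower bound on $1-\widehat\mu$. Your justification, that ``$R_\ell\le D-\ell$ gives a uniform bound'', does not apply. The bound $R_\ell(w^{(\ell)})\le D-\ell$ concerns the optimiser, whereas you need an upper bound on $R_\ell(\widehat w^{(\ell,J)})$, i.e.\ $1-\widehat\mu\ge c_s q^{\ell-D}$. That fails for general unit vectors: all weight on shell $0$ gives $1-\mu=q^{-m}+q^{-n}-q^{-m-n}$ and hence $R_\ell\ge D-1$. So the structure of $\widehat w^{(\ell,J)}$ must enter.

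The paper supplies this by positivity, uniformly in $J$. The matrix $M=\frac{q-1}{q^{m+n}}\tilde A^{(\ell)}+(q^{-m}+q^{-n}-q^{-m-n})I$ is entrywise nonnegative, and $w^{(\ell)}$ is a nonnegative Perron vector. Hence $1-\widehat\mu\ge M_{\ell\ell}\,|w^{(\ell)}_\ell|^2$. Here $q^{D-\ell}M_{\ell\ell}\to 1+q^{-s}$, and $|w^{(\ell)}_\ell|=|u^{(\ell)}_0|\to u^{(s)}_0>0$ by Lemma~\ref{lem:boundary-concentration-optimised-rank-shell-weights}. With this lower bound your mean-value estimate closes for every $0\le J\le\ell$.
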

\noindent The proof is in
Appendix~\ref{subsec:proof-effective-rank-proxy-boundary-truncation}.

The boundary-profile estimate differs from the corresponding analysis in the
Hamming case of Ref.~\cite{jordan25:DQI-original}, where a flat trial vector on
a $\sqrt{\ell}$-wide boundary window suffices for the eigenvalue asymptotics.
Here the $q$-geometric coefficients produce a constant-width boundary profile.

\section{Conclusion and outlook}
\label{sec:conclusion-outlook}

We have shown that translation association schemes provide a natural language
for separating the core DQI mechanism from the special geometry of Hamming
space. The common DQI routine for singleton-target optimisation problems
remains: preparing weighted superpositions of states with fixed weights (shell
Dicke states), encoding the optimisation target into Fourier phases, coherently
decoding a dual code on low shells, and applying a quantum Fourier transform. In
a translation scheme, this produces a quantum state with amplitudes determined
only by the residual distance. For $P$-polynomial schemes and reversible
decoders, DQI performance analysis reduces to a finite tridiagonal problem on
shells.

As an example beyond Hamming we study the rank-metric case with a rank-metric
nearest-codeword optimisation problem. Here, the shells are fixed-rank matrices
in $\F_q^{m\times n}$, shell Dicke states are uniform superpositions over such
matrices, and the relevant Fourier eigenvalues are the $q$-Krawtchouk
polynomials. If the dual code uniquely decodes all rank errors up to the cutoff
$\ell$, the expectation and variance of a normalised proxy score reduce to a
truncated Jacobi matrix. Optimising the shell weights is therefore an eigenvalue
problem for this finite tridiagonal matrix. In the large-system regime,
optimised initial weights lead to an effective-rank proxy close to
$\min(m,n)-\ell$, and the optimised weights concentrate near the outer
rank-$\ell$ shell. For an efficient DQI protocol we require efficient unique
dual-decoding with a reversible implementation and efficient rank-shell Dicke
state preparation. We propose Gabidulin codes for the former and show the
latter.

The resulting expected effective-rank proxy can be converted into a tail bound
for the actual residual rank of a sampled solution, as shown in
Appendix~\ref{sec:expected-proxy-score-rank-tail-bound}. However, the same
appendix also shows that for Gabidulin primal codes, and more generally for
standard $\F_{q^m}$-linear singleton nearest-codeword instances in the dual
unique-decoding regime, this tail bound is separated from the OPT scale by a
covering-radius obstruction. Thus an OPT-relative guarantee appears to require
either stronger coherent decoding, nonstandard code families, or a different
objective ensemble.

This points naturally toward objective ensembles closer to max-LINSAT. The
present rank-metric objective has a singleton target, while max-LINSAT uses a
product of local target sets. The sum-rank metric provides a bridge: it reduces
to ordinary rank metric for one matrix block and to Hamming distance for scalar
blocks~\cite{gorla2023sumrankmetriccodes}. This suggests a ``max sum-rank''
problem in which each local constraint contributes a rank penalty rather than a
binary violation, with maximum sum-rank distance codes, such as linearised
Reed--Solomon codes, as possible structured code
families~\cite{martinezpenas2018lrs,campsmoreno2022optimal}. An interesting open
problem is whether such a formulation avoids the singleton-target
covering-radius obstruction and can support a DQI-compatible regime analogous to
OPI.

Other directions include a rank-metric analogue of Hamiltonian DQI and
extensions beyond translation schemes, such as Johnson and Grassmann geometries.
The broader question is which finite geometries provide efficient shell
preparation, efficient decoding primitives, and optimisation problems for which
the induced distance-dependent amplitude bias is algorithmically meaningful.

\subsection*{Acknowledgements}
We thank Konstantinos Meichanetzidis and Frederic Sauvage for enlightening
discussions and ongoing collaborations.

\bibliography{main}

\clearpage

\appendix

\section{\texorpdfstring{Rank-metric $q$-Krawtchouk polynomials}{Rank-metric q-Krawtchouk polynomials}}
\label{sec:rank-metric-q-krawtchouk-polynomials}

We provide in this section additional details on the radial Fourier eigenvalues
$\lambda_r(\cdot)$ in the rank-metric scheme. These are the $q$-Krawtchouk
polynomials. First, recall that Lemma~\ref{lem:rank-score-first-fourier-mode} gives the
first $q$-Krawtchouk polynomial as
\begin{equation}
    \lambda_1(t) = \frac{q^{m+n-t}-q^m-q^n+1}{q-1}, \quad t\in [0,D]
\end{equation}
Additionally, from the definition of $\lambda_r(\cdot)$ in
\eqref{eq:distance-matrix-character-sum} we have
\begin{equation}
    \lambda_r(0)=\abs{\Gamma_r}
\end{equation}

We now provide an explicit representation of the $q$-Krawtchouk polynomials
$\lambda_r(t)$ in terms of $q$-Pochhammer and $q$-binomial coefficients (see
Eqs.~\eqref{eq:finite-q-pochhammer-symbol} and
\eqref{eq:gaussian-binomial-coefficient} for definition).  Let $D=\min(m,n)$ and
$M=\max(m,n)$. For $0\le r,t\le D$, an explicit expression for the
$q$-Krawtchouk polynomials $\lambda_r(t)$ is given by
\begin{equation}\label{eq:q-krawtchouk-polynomial}
    \lambda_r(t)
    =
    \sum_{j=0}^{r}
    (-1)^{\,r-j}\;
    q^{\,jM+\binom{r-j}{2}}\;
    \binom{D-j}{\,D-r\,}_q\;
    \binom{D-t}{\,j\,}_q
\end{equation}
This representation is standard, see e.g., Ref.~\cite{delsarte1978bilinear}. For
concreteness we represent numerically the $q$-Krawtchouk polynomials in
Fig.~\ref{fig:q-krawtchouk-polynomial-plot} for $q=2$, $m=5$, $n=6$.

\begin{figure}[h!]
    \centering
    \includegraphics[width=0.8\linewidth]{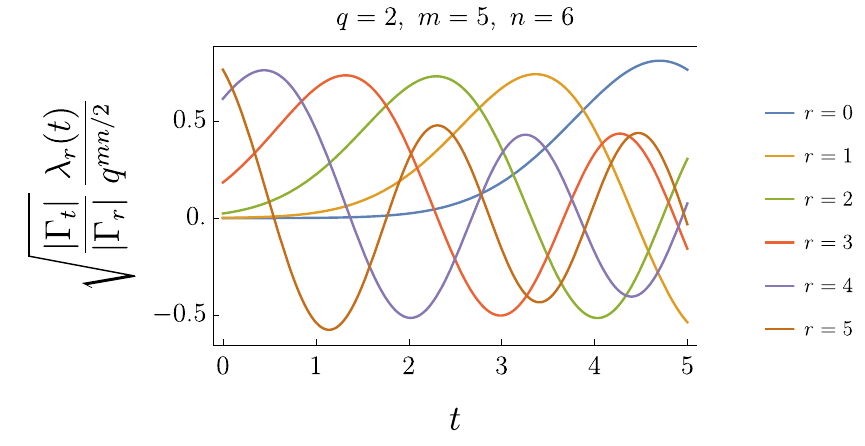}
    \caption{Plot of the rescaled $q$-Krawtchouk polynomial
        $\sqrt{\abs{\Gamma_t}/\abs{\Gamma_r}}\lambda_r(t) q^{-mn/2}$ as a
        function of $t\in [0,D]$ for all $r\in [0,D]$. The evaluation of the
        function has been extrapolated from the integer points to continuous
        points by using the connection between the $q$-Krawtchouk polynomial
        and the $q$-deformed hypergeometric function $\lambda_r(t) =
            (-1)^r\,q^{\binom{r}{2}} \binom{D}{r}_q\; {}_2\phi_1\!\left(
            \begin{matrix}
                q^{-r},\; q^{\,t-D} \\[2pt]
                q^{-D}
            \end{matrix}
            ;\, q,\; q^{\,M-t+1}
            \right)$ with $D=\min(m,n)$ and $M=\max(m,n)$ \cite{koekoekHypergeometricOrthogonalPolynomials2010}.}
    \label{fig:q-krawtchouk-polynomial-plot}
\end{figure}

Upon rescaling, the $q$-Krawtchouk polynomials are symmetric upon exchanging
their variable and index, i.e.,
\begin{equation}
    \label{eq:q-krawtchouk-spectral-symmetry}
    \abs{\Gamma_t}\lambda_r(t) = \abs{\Gamma_r}\lambda_t(r)
\end{equation}
This symmetry indicates that the problem is bi-spectral and that a
$q$-Krawtchouk transform is self-dual.

Moreover, the $q$-Krawtchouk polynomials
also enjoy a Christoffel--Darboux identity.

\begin{lemma}[Christoffel--Darboux identity]
    \label{lem:christoffel-darboux-identity}
    Fix a $P$-polynomial translation association scheme and let $\lambda_r(t)$
    denote the possibly complex radial Fourier eigenvalue of the distance
    operator $A_r$ on the frequency shell indexed by $t\in [0,D]$. Write
    $\lambda_r^*(u)\coloneqq\overline{\lambda_r(u)}$. Assume the three-term
    recurrence (in $r$) holds for every $t$ with real recurrence coefficients:
    \begin{equation}\label{eq:q-krawtchouk-three-term-recurrence}
        \lambda_1(t)\lambda_r(t)=b_{r-1}\lambda_{r-1}(t)+a_r\lambda_r(t)+c_{r+1}\lambda_{r+1}(t),
        \qquad r=0,\dots,D,
    \end{equation}
    with $b_{-1}=c_{D+1}=0$. We recall the orthogonality relation
    \eqref{eq:character-sum-orthogonality}
    \begin{equation}\label{eq:christoffel-darboux-orthogonality}
        \sum_{t=0}^D \abs{\Gamma_t} \lambda_i(t) \lambda_j^*(t) = \abs{\X} \abs{\Gamma_i}\delta_{ij}
    \end{equation}
    Then for $0 \leq L < D$ and all $t,u$ such that
    $\lambda_1(t)\neq \lambda_1^*(u)$,
    \begin{equation}\label{eq:christoffel-darboux-lambda}
        \sum_{r=0}^{L}\frac{\lambda_r(t)\lambda_r^*(u)}{\abs{\Gamma_r}}
        =
        \frac{c_{L+1}}{\abs{\Gamma_L}}\,
        \frac{\lambda_{L+1}(t)\lambda_L^*(u)-\lambda_L(t)\lambda_{L+1}^*(u)}
        {\lambda_1(t)-\lambda_1^*(u)}.
    \end{equation}
\end{lemma}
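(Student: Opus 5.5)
The plan is the standard Christoffel--Darboux telescoping argument. Fix $t,u$ with $\lambda_1(t)\neq\lambda_1^*(u)$. Write the recurrence \eqref{eq:q-krawtchouk-three-term-recurrence} at $t$, and the complex conjugate of the recurrence at $u$. The latter is valid because $a_r,b_r,c_r$ are real:
\begin{equation*}
    \lambda_1^*(u)\lambda_r^*(u)=b_{r-1}\lambda_{r-1}^*(u)+a_r\lambda_r^*(u)+c_{r+1}\lambda_{r+1}^*(u).
\end{equation*}
Multiply the first identity by $\lambda_r^*(u)$ and the second by $\lambda_r(t)$, then subtract. The diagonal terms $a_r$ cancel, leaving
\begin{equation*}
    (\lambda_1(t)-\lambda_1^*(u))\lambda_r(t)\lambda_r^*(u)
    = c_{r+1}W_r - b_{r-1}W_{r-1},
\end{equation*}
where
\begin{equation*}
    W_r\coloneqq\lambda_{r+1}(t)\lambda_r^*(u)-\lambda_r(t)\lambda_{r+1}^*(u).
\end{equation*}
For $r=0$ the $b_{-1}$ term is absent, consistent with $b_{-1}=0$.

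Next I divide by $\abs{\Gamma_r}$ to make the right-hand side telescope. The needed identity is $b_{r-1}/\abs{\Gamma_r}=c_r/\abs{\Gamma_{r-1}}$, which is exactly \eqref{eq:ratio-off-diagonal} of Lemma~\ref{lem:recursion-coefficient-identities} with $i=r-1$, valid for $1\le r\le D$. Setting $T_r\coloneqq c_{r+1}W_r/\abs{\Gamma_r}$, the identity becomes
\begin{equation*}
    (\lambda_1(t)-\lambda_1^*(u))\frac{\lambda_r(t)\lambda_r^*(u)}{\abs{\Gamma_r}}=T_r-T_{r-1},
\end{equation*}
with $T_{-1}\coloneqq 0$. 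Summing over $r=0,\dots,L$ collapses the right-hand side to $T_L$. Dividing by the nonzero factor $\lambda_1(t)-\lambda_1^*(u)$ then gives \eqref{eq:christoffel-darboux-lambda}. The restriction $L<D$ ensures that $\lambda_{L+1}$ and $c_{L+1}$ are defined and that the recurrence at $r\le L$ has no omitted top term.

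The only delicate point is the symmetrisation step. It requires Lemma~\ref{lem:recursion-coefficient-identities}, which was proved for translation schemes, to apply to the coefficients in the hypothesis. This holds because the lemma's hypothesis is a $P$-polynomial translation scheme, matching the present setting. The orthogonality relation \eqref{eq:christoffel-darboux-orthogonality} is not needed for the identity itself. It would only enter if one wanted to rederive \eqref{eq:ratio-off-diagonal} spectrally, by pairing the recurrence with $\lambda_{r\pm1}^*$ and using orthogonality to read off $b_{r-1}\abs{\Gamma_{r-1}}$-type relations. I would mention that route as an alternative but rely on Lemma~\ref{lem:recursion-coefficient-identities}.
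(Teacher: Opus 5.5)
Your proposal is correct and follows the paper's proof. You conjugate the recurrence at $u$ using the real coefficients, subtract so that the $a_r$ terms cancel, divide by $\abs{\Gamma_r}$, and telescope via $b_{r-1}/\abs{\Gamma_r}=c_r/\abs{\Gamma_{r-1}}$ from Lemma~\ref{lem:recursion-coefficient-identities}, which leaves only the $r=L$ boundary term. Your $W_r,T_r$ notation just makes that telescoping explicit.
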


\begin{proof}
    Fix $t,u$ with $\lambda_1(t)\neq\lambda_1^*(u)$. The coefficients
    $a_r,b_r,c_r$ are real intersection numbers, so conjugating the recurrence
    \eqref{eq:q-krawtchouk-three-term-recurrence} at $u$ gives
    \begin{equation}
        \lambda_1^*(u)\lambda_r^*(u)
        =b_{r-1}\lambda_{r-1}^*(u)+a_r\lambda_r^*(u)+c_{r+1}\lambda_{r+1}^*(u).
    \end{equation}
    Write the recurrence at $t$, multiply it by $\lambda_r^*(u)$, multiply the
    conjugated recurrence at $u$ by $\lambda_r(t)$, and subtract:
    \begin{align*}
        (\lambda_1(t)-\lambda_1^*(u))\,\lambda_r(t)\lambda_r^*(u)
         & =
        \big(b_{r-1}\lambda_{r-1}(t)+a_r\lambda_r(t)+c_{r+1}\lambda_{r+1}(t)\big)\lambda_r^*(u)              \\
         & \quad-\lambda_r(t)\big(b_{r-1}\lambda_{r-1}^*(u)+a_r\lambda_r^*(u)+c_{r+1}\lambda_{r+1}^*(u)\big) \\
         & =
        b_{r-1}\big(\lambda_{r-1}(t)\lambda_r^*(u)-\lambda_r(t)\lambda_{r-1}^*(u)\big)
        +c_{r+1}\big(\lambda_{r+1}(t)\lambda_r^*(u)-\lambda_r(t)\lambda_{r+1}^*(u)\big),
    \end{align*}
    since the $a_r$ terms cancel. Divide by $\abs{\Gamma_r}$ and sum over
    $r=0,\dots,L$:
    \begin{equation}
        \label{eq:christoffel-darboux-telescoping-sum}
        \begin{split}
             & (\lambda_1(t)-\lambda_1^*(u))\sum_{r=0}^{L}\frac{\lambda_r(t)\lambda_r^*(u)}{\abs{\Gamma_r}} \\
             & =
            \sum_{r=0}^{L} \left[ \frac{b_{r-1}}{\abs{\Gamma_r}}\big(\lambda_{r-1}(t)\lambda_r^*(u)-\lambda_r(t)\lambda_{r-1}^*(u)\big)+\frac{c_{r+1}}{\abs{\Gamma_r}}\big(\lambda_{r+1}(t)\lambda_r^*(u)-\lambda_r(t)\lambda_{r+1}^*(u)\big) \right]
        \end{split}
    \end{equation}
    We can now use the ratio relation \eqref{eq:ratio-off-diagonal} in the form
    $b_{r-1}/\abs{\Gamma_r}=c_r/\abs{\Gamma_{r-1}}$. Thus the $b_{r-1}$ summand
    for index $r$ cancels the $c_r$ summand for index $r-1$. The lower boundary
    term vanishes because $b_{-1}=0$, so only the upper boundary term remains:
    \begin{equation}
        (\lambda_1(t)-\lambda_1^*(u))\sum_{r=0}^{L}\frac{\lambda_r(t)\lambda_r^*(u)}{\abs{\Gamma_r}}
        =
        \frac{c_{L+1}}{\abs{\Gamma_L}}\Big(\lambda_{L+1}(t)\lambda_L^*(u)-\lambda_L(t)\lambda_{L+1}^*(u)\Big).
    \end{equation}
\end{proof}

\begin{remark}
    In the rank-metric scheme the radial Fourier eigenvalues are real. Hence the
    complex conjugations in \eqref{eq:christoffel-darboux-lambda} disappear.
    Since $\lambda_1(t)=(q^{m+n-t}-q^m-q^n+1)/(q-1)$ is strictly decreasing in
    $t$, the condition $\lambda_1(t)\neq\lambda_1(u)$ is equivalent to $t\neq
        u$. The symmetry \eqref{eq:q-krawtchouk-spectral-symmetry} then translates
    the Christoffel--Darboux identity as
    \begin{equation}
        \label{eq:christoffel-darboux-lambda-symmetrised}
        \sum_{r=0}^{L}
        \abs{\Gamma_r}\,\lambda_t(r)\lambda_u(r)
        =
        c_{L+1}\abs{\Gamma_{L+1}}\,
        \frac{\lambda_t(L+1)\lambda_u(L)-\lambda_t(L)\lambda_u(L+1)}
        {\lambda_1(t)-\lambda_1(u)}.
    \end{equation}
\end{remark}
The Christoffel--Darboux identity
\eqref{eq:christoffel-darboux-lambda-symmetrised} can notably control the
overlap between states such as
\begin{equation}
    \ket{\tilde{\lambda}_r} \propto \sum_{x\in \X, \rank(x) \leq L}\lambda_r(\rank(x))\ket{x}
\end{equation}
which arise if the Fourier transform $\mathcal{F}$ is truncated to low-modes
leading to an approximate transform.\\

Finally, the $q$-Krawtchouk polynomials admit a finite summation identity
which gives an orthogonal-polynomial interpretation of the Fourier-paired
rank-noise profiles appearing in
Refs.~\cite{debris-alazardQuantumReductionFinding2024,blanvillainQuantumDecodingProblem2026}.

\begin{lemma}[Finite $q$-binomial transform identity for the $q$-Krawtchouk polynomials]
    \label{lem:zeta-q-krawtchouk-transform}
    Assume throughout the convention that $\binom{a}{b}_q=0$ whenever
    $b\notin\{0,\ldots,a\}$. For $0\le u\le D$, define
    \begin{equation}
        \zeta_u(r)
        \coloneqq
        \binom{D-r}{u-r}_q .
    \end{equation}
    One then has the following summation identity
    \begin{equation}\label{eq:zeta-q-krawtchouk-transform}
        \sum_{r=0}^{u}
        \zeta_u(r)\lambda_r(t)
        =
        q^{Mu}\binom{D-t}{u}_q
        =
        q^{Mu}\zeta_{D-u}(t).
    \end{equation}
\end{lemma}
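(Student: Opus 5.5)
The cleanest route is combinatorial, through the character-sum definition of $\lambda_r(t)$ in Lemma~\ref{lem:distance-matrix-fourier-eigenvectors}, rather than through the explicit formula~\eqref{eq:q-krawtchouk-polynomial}. Assume $D=n\le m$ (otherwise transpose), so matrices $e\in\F_q^{m\times n}$ act as maps $\F_q^n\to\F_q^m$ with $M=m$.

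\textbf{Step 1: interpret the left side as a sum over a subspace.} Fix a subspace $U\le\F_q^n$ of dimension $D-u$. Consider the set
\[
V_U=\{e\in\F_q^{m\times n}: \ker e\supseteq U\}.
\]
This is a linear subspace of matrices of $\F_q$-dimension $mu$. An element $e\in\Gamma_r$ lies in $V_U$ exactly when $U\subseteq\ker e$. Here $\ker e$ has dimension $D-r$, and the number of $(D-u)$-subspaces $U$ contained in a fixed $(D-r)$-space is $\binom{D-r}{D-u}_q=\binom{D-r}{u-r}_q=\zeta_u(r)$.

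Summing over all $U$ of dimension $D-u$ therefore gives
\[
\sum_U\sum_{e\in V_U}\adchar_x(e)=\sum_{r=0}^u\zeta_u(r)\sum_{e\in\Gamma_r}\adchar_x(e)=\sum_{r}\zeta_u(r)\lambda_r(\rank x).
\]

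\textbf{Step 2: evaluate the inner sum by subgroup character orthogonality.} Since $V_U$ is a linear subspace,
\[
\sum_{e\in V_U}\adchar_x(e)=|V_U|\,\indicator_{V_U^\perp}(x)=q^{mu}\indicator_{V_U^\perp}(x).
\]
Under the trace pairing, $V_U^\perp$ consists of the matrices $x$ whose row space, viewed as a subspace of $\F_q^n$, lies in $U$. Equivalently, $\operatorname{rowsp}(x)\subseteq U$. This identification is the one step I must verify carefully: $V_U$ consists of matrices whose rows lie in $U^{\perp}$, and its annihilator under $\Tr(xe^\top)$ consists of matrices whose rows lie in $U$. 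The dimension count is consistent, since this annihilator has dimension $m(D-u)=mD-mu$.

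\textbf{Step 3: count the subspaces.} The number of $U$ of dimension $D-u$ containing the rank-$t$ row space of $x$ is
\[
\binom{D-t}{D-u-t}_q=\binom{D-t}{u}_q=\zeta_{D-u}(t),
\]
where I apply the Gaussian-binomial convention for the case $D-u<t$. Multiplying by $q^{mu}=q^{Mu}$ yields~\eqref{eq:zeta-q-krawtchouk-transform}.

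\textbf{Expected obstacles.} The main obstacle is the bookkeeping of the duality in Step 2, namely rows versus kernels and which side the subspace lives on, together with checking that $M$ rather than $D$ appears in the exponent. If $m<n$, transposing swaps the roles of $m$ and $n$; the trace pairing is transpose-invariant, so the same argument goes through with $M=n$.

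As a sanity check I would use $u=0$. Then the left side is $\lambda_0(t)=1$ and the right side is $\binom{D-t}{0}_q=1$, so the identity holds in this case.
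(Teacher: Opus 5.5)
Your argument is correct and takes a genuinely different route from the paper. The paper substitutes the explicit $q$-Krawtchouk formula \eqref{eq:q-krawtchouk-polynomial}, which it quotes from Delsarte rather than proves. It then exchanges the $r$- and $j$-sums, applies a Gaussian-binomial product identity, and kills all terms with $j<u$ by the finite $q$-binomial theorem at $a=1$.

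You work instead from the character-sum definition of $\lambda_r$, in four steps. First, the double count $\sum_U \mathbf{1}[U\subseteq \ker e]=\zeta_u(\operatorname{rank} e)$. Second, subgroup orthogonality on $V_U$ with $|V_U|=q^{Mu}$. Third, the identification of $V_U^\perp$ as the matrices with row space inside $U$. This is valid because $V_U$ is row-wise $(U^\perp)^m$ and $(U^\perp)^\perp=U$ for the nondegenerate dot product, even when $U\cap U^\perp\neq 0$. Fourth, the count $\binom{D-t}{u}_q$ of the $U$ containing a $t$-dimensional row space. You handle the zero conventions for $r>u$ and $t>D-u$ and the reduction to $n\le m$ correctly.

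What your route buys:
\begin{itemize}
\item It is self-contained within the paper.
\item It explains the factor $q^{Mu}$ as the size of $V_U$, and the complementary index $D-u$ as annihilator duality.
\item It is essentially the classical argument from which the explicit formula itself follows by $q$-M\"obius inversion, so it reverses the paper's logical dependence.
\item It yields the Fourier-pairing part of Corollary~\ref{cor:fourier-paired-zeta-states} almost for free. Since $\sum_e \zeta_u(\operatorname{rank} e)\ket{e}=\sum_U\sum_{e\in V_U}\ket{e}$ is a sum of subspace states, the Fourier transform sends each to its annihilator state. Unitarity then forces $\mathcal Z_{D-u}=q^{M(D-2u)}\mathcal Z_u$ and $\mathcal F\ket{\Pi_u}=\ket{\Pi_{D-u}}$ without Jackson's transformation. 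Only the closed ${}_2\phi_1$ form still needs it.
\end{itemize}

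The paper's route is a short formal $q$-series computation once the explicit formula is granted. It would transfer to any family with the same terminating hypergeometric form, but it does not expose the subspace-lattice structure behind the identity.
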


\begin{proof}
    Substituting Eq.~\eqref{eq:q-krawtchouk-polynomial} and exchanging
    the $r$- and $j$-sums gives
    \begin{equation}
        \begin{aligned}
            \sum_{r=0}^{u}\zeta_u(r)\lambda_r(t)
             & =
            \sum_{j=0}^{u}
            q^{jM}\binom{D-t}{j}_q
            \sum_{r=j}^{u}
            (-1)^{r-j}q^{\binom{r-j}{2}}
            \binom{D-r}{u-r}_q
            \binom{D-j}{D-r}_q .
        \end{aligned}
    \end{equation}
    Using the Gaussian-binomial product identity
    \begin{equation}\label{eq:gaussian-binomial-product-identity}
        \binom{D-r}{u-r}_q
        \binom{D-j}{D-r}_q
        =
        \binom{D-j}{u-j}_q
        \binom{u-j}{r-j}_q ,
    \end{equation}
    and setting $h=r-j$, one obtains
    \begin{equation}
        \begin{aligned}
            \sum_{r=0}^{u}\zeta_u(r)\lambda_r(t)
             & =
            \sum_{j=0}^{u}
            q^{jM}\binom{D-t}{j}_q
            \binom{D-j}{u-j}_q
            \sum_{h=0}^{u-j}
            (-1)^h q^{\binom{h}{2}}
            \binom{u-j}{h}_q .
        \end{aligned}
    \end{equation}
    The remaining finite sum follows from the finite $q$-binomial theorem,
    namely
    \begin{equation}\label{eq:finite-q-binomial-theorem}
        (a;q)_L
        \coloneqq
        \prod_{\ell=0}^{L-1}(1-aq^\ell)
        =
        \sum_{h=0}^{L}
        (-1)^h q^{\binom{h}{2}}
        \binom{L}{h}_q a^h .
    \end{equation}
    Evaluating Eq.~\eqref{eq:finite-q-binomial-theorem} at $a=1$ gives
    \begin{equation}
        \sum_{h=0}^{u-j}
        (-1)^h q^{\binom{h}{2}}
        \binom{u-j}{h}_q
            =
            (1;q)_{u-j}
        =
        \delta_{j,u}.
    \end{equation}
    Consequently, only the term $j=u$ survives, and
    \begin{equation}
        \sum_{r=0}^{u}\zeta_u(r)\lambda_r(t)
        =
        q^{Mu}\binom{D-t}{u}_q .
    \end{equation}
    Finally, using Gaussian-binomial symmetry,
    \begin{equation}
        \binom{D-t}{u}_q
        =
        \binom{D-t}{D-u-t}_q
        =
        \zeta_{D-u}(t),
    \end{equation}
    proves Eq.~\eqref{eq:zeta-q-krawtchouk-transform}. This is a finite $q$-binomial transform identity; equivalently, its
    proof is a degenerate terminating $q$-Chu--Vandermonde evaluation.
\end{proof}

The identity in Eq.~\eqref{eq:zeta-q-krawtchouk-transform} gives an
orthogonal-polynomial formulation of the Fourier-paired rank-noise
profiles introduced in
Ref.~\cite{debris-alazardQuantumReductionFinding2024} and further
analysed in the quantum-decoding framework of
Ref.~\cite{blanvillainQuantumDecodingProblem2026}. It therefore
identifies a natural input family for soft-decoding extensions of
rank-metric DQI. We detail in the next corollary which quantum identities follow the previous lemma.

\begin{corollary}[Fourier-paired $\zeta$-states and explicit DQI amplitudes]
    \label{cor:fourier-paired-zeta-states}
    For $0\le u\le D$, define the normalised radial state
    \begin{equation}
        \ket{\Pi_u}
        \coloneqq
        \frac{1}{\sqrt{\mathcal Z_u}}
        \sum_{r=0}^{D}
        \sqrt{\abs{\Gamma_r}}\,
        \zeta_u(r)\ket{R_r},
        \qquad
        \mathcal Z_u
        \coloneqq
        \sum_{r=0}^{D}
        \abs{\Gamma_r}\zeta_u(r)^2.
    \end{equation}
    Then, using the conventions of \eqref{eq:fourier-transform-radial-state},
    \begin{equation}\label{eq:fourier-pi-complement}
        \mathcal F_{\X}\ket{\Pi_u}
        =
        \ket{\Pi_{D-u}},
    \end{equation}
    and
    \begin{equation}\label{eq:zeta-normalisation-duality}
        \mathcal Z_{D-u}
        =
        q^{M(D-2u)}\mathcal Z_u.
    \end{equation}
    Moreover, the normalisation admits the closed form
    \begin{equation}\label{eq:zeta-normalisation-hypergeometric-form}
        \mathcal Z_u
        =
        q^{Mu}
        \binom{D}{u}_q\,
        {}_2\phi_1
        \left(
        \begin{matrix}
            q^{-u},\,q^{u-D} \\
            q
        \end{matrix};
        q,\,
        q^{D-M+1}
        \right).
    \end{equation}
    Finally, if one sets
    \begin{equation}\label{eq:zeta-dqi-shell-weights}
        \ell=u,
        \qquad
        w_k^{(u)}
        \coloneqq
        \sqrt{\frac{\abs{\Gamma_k}}{\mathcal Z_u}}\,
        \zeta_u(k),
        \qquad
        0\le k\le u,
    \end{equation}
    in the DQI protocol at Step 2 (i.e., $\ket{\psi_2}=\ket{\Pi_u}$) then the explicit final state \eqref{eq:dqi-final-radial-state} of the protocol would read
    \begin{equation}\label{eq:zeta-dqi-output-state}
        \ket{\psi_6^{(u)}}
        =
        \frac{q^{Mu}}{\sqrt{\abs{\Omega}\mathcal Z_u}}
        \sum_{x\in\Omega}
        \zeta_{D-u}\!\left(\rank(\Delta_y(x))\right)\ket{x}.
    \end{equation}
    assuming decoding at Step 4 is possible.
\end{corollary}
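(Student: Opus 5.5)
The plan is to derive all four claims from the transform identity of Lemma~\ref{lem:zeta-q-krawtchouk-transform}, together with the radial Fourier formula \eqref{eq:fourier-transform-radial-state} and the spectral symmetry \eqref{eq:q-krawtchouk-spectral-symmetry}.

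\textbf{Step 1: the Fourier image of $\ket{\Pi_u}$.} Write $\ket{\Pi_u}=\mathcal Z_u^{-1/2}\sum_r\sqrt{\abs{\Gamma_r}}\zeta_u(r)\ket{R_r}$ and apply $\mathcal F_\X$ termwise, using $\mathcal F_\X\ket{R_r}=\ket{\lambda_r}$ from Eq.~\eqref{eq:fourier-transform-radial-state}. The factor $\sqrt{\abs{\Gamma_r}}$ cancels the $1/\sqrt{\abs{\Gamma_r}}$ there. The coefficient of $\ket{R_t}$ is then
\begin{equation*}
\frac{\sqrt{\abs{\Gamma_t}}}{\sqrt{\abs{\X}\mathcal Z_u}}\sum_{r}\zeta_u(r)\lambda_r(t).
\end{equation*}
By Lemma~\ref{lem:zeta-q-krawtchouk-transform} the sum equals $q^{Mu}\zeta_{D-u}(t)$; the sum runs only to $u$ because $\zeta_u(r)=0$ for $r>u$. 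Hence $\mathcal F_\X\ket{\Pi_u}=\kappa\ket{\Pi_{D-u}}$ with
\begin{equation*}
\kappa=q^{Mu}\sqrt{\mathcal Z_{D-u}}\big/\sqrt{\abs{\X}\mathcal Z_u}.
\end{equation*}

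\textbf{Step 2: fixing $\kappa$ and the normalisation duality.} Since $\mathcal F_\X$ is unitary and both states are normalised, $\abs{\kappa}=1$. As $\kappa>0$, this gives $\kappa=1$, which proves \eqref{eq:fourier-pi-complement}. Using $\abs{\X}=q^{mn}=q^{MD}$, the condition $\kappa=1$ rearranges to
\begin{equation*}
\mathcal Z_{D-u}=q^{MD-2Mu}\mathcal Z_u,
\end{equation*}
which is \eqref{eq:zeta-normalisation-duality}. Positivity of $\kappa$ needs $\mathcal Z_u>0$, and this is clear because the term $r=u$ contributes $\abs{\Gamma_u}>0$.

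\textbf{Step 3: the closed form for $\mathcal Z_u$.} This is the main obstacle. First use $\zeta_u(r)=\binom{D-r}{u-r}_q$ and the shell size \eqref{eq:rank-shell-size} to write
\begin{equation*}
\mathcal Z_u=\sum_{r\le u}\abs{\Gamma_r}\binom{D-r}{u-r}_q^2 .
\end{equation*}
Next convert each Gaussian binomial into $q$-Pochhammer ratios in $q^{-u}$ and $q^{u-D}$, taking $k=u-r$ or $r$ as the summation index. The aim is to match the series $\sum_k \frac{(q^{-u};q)_k(q^{u-D};q)_k}{(q;q)_k(q;q)_k}z^k$ with $z=q^{D-M+1}$, times the prefactor $q^{Mu}\binom{D}{u}_q$.

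A cleaner route avoids most of this bookkeeping. Write
\begin{equation*}
\mathcal Z_u=\sum_r\abs{\Gamma_r}\zeta_u(r)\zeta_u(r)
\end{equation*}
and substitute one factor $\zeta_u(r)=q^{-M(D-u)}\sum_s\zeta_{D-u}(s)\lambda_s(r)$, which is Lemma~\ref{lem:zeta-q-krawtchouk-transform} applied with $u\to D-u$. Then use \eqref{eq:q-krawtchouk-spectral-symmetry} to swap $\abs{\Gamma_r}\lambda_s(r)=\abs{\Gamma_s}\lambda_r(s)$ and apply the lemma once more. This only relates $\mathcal Z_u$ back to $\mathcal Z_{D-u}$, so it reproves Step 2 but does not evaluate $\mathcal Z_u$. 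I therefore expect to do the direct Pochhammer manipulation and carefully track the powers of $q$ (from $q^{\binom r2}$, $(-1)^r(q;q)_r$ and the inversion $(q^{-a};q)_k$ formulas) until the argument $q^{D-M+1}$ appears. A sanity check at $u=0$ gives $\mathcal Z_0=\abs{\Gamma_0}=1$, consistent with ${}_2\phi_1=1$.

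\textbf{Step 4: the DQI output state.} With $\ell=u$ and weights $w_k^{(u)}$, the normalisation $\sum_k\abs{w_k}^2=1$ holds because $\zeta_u(k)=0$ for $k>u$. Insert these weights into Eq.~\eqref{eq:dqi-final-radial-state}. The weights give $w_k/\sqrt{\abs{\Gamma_k}}=\zeta_u(k)/\sqrt{\mathcal Z_u}$, so
\begin{equation*}
P_u(r)=\mathcal Z_u^{-1/2}\sum_{k\le u}\zeta_u(k)\lambda_k(r)=\mathcal Z_u^{-1/2}\,q^{Mu}\zeta_{D-u}(r)
\end{equation*}
by Lemma~\ref{lem:zeta-q-krawtchouk-transform}. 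Substituting this with $r=\rank(\Delta_y(x))$ yields \eqref{eq:zeta-dqi-output-state}, under the stated assumption that Step 5 decoding succeeds for all errors of rank at most $u$.
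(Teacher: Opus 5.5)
Your Steps 1, 2 and 4 are correct, and Step 2 takes a different route from the paper. The paper first evaluates $\mathcal Z_u$ in closed form. It then reads off $\mathcal Z_{D-u}=q^{M(D-2u)}\mathcal Z_u$ from the symmetry of the ${}_2\phi_1$ in its numerator parameters, and only after that shows that the prefactor $q^{M(u-D/2)}\sqrt{\mathcal Z_{D-u}/\mathcal Z_u}$ in $\mathcal F_{\X}\ket{\Pi_u}$ equals $1$. Your unitarity argument ($\kappa>0$ and $\abs{\kappa}=1$, with $\mathcal Z_u,\mathcal Z_{D-u}>0$ from the terms $r=u$ and $r=D-u$) gives \eqref{eq:fourier-pi-complement} and \eqref{eq:zeta-normalisation-duality} without any hypergeometric input. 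Step 4 then needs only Lemma~\ref{lem:zeta-q-krawtchouk-transform}. So three of the four claims are established, and more economically than in the paper.

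The gap is Step 3, which you leave as ``carefully track the powers of $q$''. Bookkeeping alone cannot succeed. The sums $\sum_r\abs{\Gamma_r}\zeta_u(r)^2$ and $q^{Mu}\binom{D}{u}_q\,{}_2\phi_1(q^{-u},q^{u-D};q;q,q^{D-M+1})$ are different series that agree only after summation. For $q=2$, $D=M=4$, $u=2$, the summands of the first are $1225, 11025, 7350$ and those of the second are $8960, 10080, 560$. Both total $19600$, but no re-indexing matches them termwise.

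Rewriting shell sizes and Gaussian binomials as Pochhammer symbols only gets you to $\mathcal Z_u=\binom{D}{u}_q^2\,{}_3\phi_1(q^{-M},q^{-u},q^{-u};q^{-D};q,q^{M+2u-D})$. At that point the paper invokes a terminating ${}_3\phi_1\to{}_2\phi_1$ transformation (attributed there to Jackson), and that transformation is the ingredient you are missing.

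An elementary substitute also works. Assume $n=D$ (transpose otherwise). Then $\zeta_u(\rank e)$ counts the $V\in\mathrm{Gr}(u,D)$ containing the row space of $e$. The matrices whose rows all lie in $V$ form a subspace of size $q^{Mu}$, so $\mathcal Z_u=\sum_{V,V'}q^{M\dim(V\cap V')}$. A fixed $V$ has $q^{k^2}\binom{u}{k}_q\binom{D-u}{k}_q$ partners $V'$ with $\dim(V\cap V')=u-k$. This gives $\mathcal Z_u=q^{Mu}\binom{D}{u}_q\sum_k q^{k^2-Mk}\binom{u}{k}_q\binom{D-u}{k}_q$. That matches the ${}_2\phi_1$ term by term via $(q^{-a};q)_k/(q;q)_k=(-1)^kq^{\binom{k}{2}-ak}\binom{a}{k}_q$. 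Without one of these two ingredients, the closed form \eqref{eq:zeta-normalisation-hypergeometric-form} remains unproved.
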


\begin{proof}
    Since $\zeta_u(r)=0$ for $r>u$, the definition of $\mathcal Z_u$
    and the rank-shell cardinality formula give
    \begin{equation}\label{eq:zeta-normalisation-shell-expression}
        \begin{aligned}
            \mathcal Z_u
             & =
            \sum_{r=0}^{u}
            \abs{\Gamma_r}
            \binom{D-r}{u-r}_q^2
            \\
             & =
            \sum_{r=0}^{u}
            (-1)^r q^{\binom{r}{2}}
            (q;q)_r
            \binom{M}{r}_q
            \binom{D}{r}_q
            \binom{D-r}{u-r}_q^2 .
        \end{aligned}
    \end{equation}
    Using the Gaussian-binomial product identity
    \begin{equation}
        \binom{D}{r}_q
        \binom{D-r}{u-r}_q
        =
        \binom{D}{u}_q
        \binom{u}{r}_q ,
    \end{equation}
    twice, Eq.~\eqref{eq:zeta-normalisation-shell-expression} becomes
    \begin{equation}\label{eq:zeta-normalisation-shell-expression-reduced}
        \mathcal Z_u
        =
        \binom{D}{u}_q^2
        \sum_{r=0}^{u}
        (-1)^r q^{\binom{r}{2}}
        (q;q)_r
        \binom{M}{r}_q
        \frac{\binom{u}{r}_q^2}{\binom{D}{r}_q}.
    \end{equation}
    Now use
    \begin{equation}\label{eq:q-binomial-pochhammer-representation}
        \binom{a}{r}_q
        =
        (-1)^r
        q^{ar-\binom{r}{2}}
        \frac{(q^{-a};q)_r}{(q;q)_r}.
    \end{equation}
    Substituting Eq.~\eqref{eq:q-binomial-pochhammer-representation}
    into Eq.~\eqref{eq:zeta-normalisation-shell-expression-reduced} yields
    \begin{equation}\label{eq:zeta-normalisation-3phi1-form}
        \mathcal Z_u
        =
        \binom{D}{u}_q^2\,
        {}_3\phi_1
        \left(
        \begin{matrix}
            q^{-M},\,q^{-u},\,q^{-u} \\
            q^{-D}
        \end{matrix};
        q,\,
        q^{M+2u-D}
        \right).
    \end{equation}
    Here we use the standard convention
    \begin{equation}
        {}_3\phi_1
        \left(
        \begin{matrix}
            a_1,\,a_2,\,a_3 \\
            b_1
        \end{matrix};
        q,z
        \right)
        \coloneqq
        \sum_{r=0}^{\infty}
        \frac{
            (a_1;q)_r(a_2;q)_r(a_3;q)_r
        }{
            (b_1;q)_r(q;q)_r
        }
        (-1)^r q^{-\binom{r}{2}}z^r .
    \end{equation}
    The series in Eq.~\eqref{eq:zeta-normalisation-3phi1-form} terminates at $r=u$.
    Applying the terminating basic-hypergeometric Jackson’s transformation \cite{DLMF17}
    \begin{equation}\label{eq:terminating-3phi1-2phi1-transformation}
        \begin{aligned}
             & \binom{D}{u}_q\,
                              {}_3\phi_1
            \left(
            \begin{matrix}
                q^{-M},\,q^{-u},\,q^{-u} \\
                q^{-D}
            \end{matrix};
            q,\,
            q^{M+2u-D}
            \right)
            =
            q^{Mu}\,
                   {}_2\phi_1
            \left(
            \begin{matrix}
                q^{-u},\,q^{u-D} \\
                q
            \end{matrix};
            q,\,
            q^{D-M+1}
            \right)
        \end{aligned}
    \end{equation}
    gives Eq.~\eqref{eq:zeta-normalisation-hypergeometric-form}.

    Replacing $u$ by $D-u$ in
    Eq.~\eqref{eq:zeta-normalisation-hypergeometric-form} gives
    \begin{equation}
        \begin{aligned}
            \mathcal Z_{D-u}
             & =
            q^{M(D-u)}
            \binom{D}{D-u}_q\,
                             {}_2\phi_1
            \left(
            \begin{matrix}
                q^{u-D},\,q^{-u} \\
                q
            \end{matrix};
            q,\,
            q^{D-M+1}
            \right)
            \\
             & =
            q^{M(D-2u)}\mathcal Z_u,
        \end{aligned}
    \end{equation}
    since $\binom{D}{D-u}_q=\binom{D}{u}_q$ and
    ${}_2\phi_1$ is symmetric in its numerator parameters.

    Using Eq.~\eqref{eq:fourier-transform-radial-state},
    $\abs{\X}=q^{MD}$, and
    Eq.~\eqref{eq:zeta-q-krawtchouk-transform}, one obtains
    \begin{equation}
        \begin{aligned}
            \mathcal F_{\X}\ket{\Pi_u}
             & =
            \frac{1}{\sqrt{\abs{\X}\mathcal Z_u}}
            \sum_{j=0}^{D}
            \sqrt{\abs{\Gamma_j}}
            \left(
            \sum_{r=0}^{u}
            \zeta_u(r)\lambda_r(j)
            \right)
            \ket{R_j}
            \\
             & =
            q^{M(u-D/2)}
            \sqrt{\frac{\mathcal Z_{D-u}}{\mathcal Z_u}}\,
            \ket{\Pi_{D-u}}
            \\
             & =
            \ket{\Pi_{D-u}}.
        \end{aligned}
    \end{equation}
    Finally, substituting Eq.~\eqref{eq:zeta-dqi-shell-weights} into
    Eq.~\eqref{eq:dqi-final-radial-state} yields
    \begin{equation}
        \begin{aligned}
            \ket{\psi_6^{(u)}}
             & =
            \frac{1}{\sqrt{\abs{\Omega}\mathcal Z_u}}
            \sum_{x\in\Omega}
            \left(
            \sum_{k=0}^{u}
            \zeta_u(k)
            \lambda_k\!\left(\rank(\Delta_y(x))\right)
            \right)
            \ket{x}
            \\
             & =
            \frac{q^{Mu}}{\sqrt{\abs{\Omega}\mathcal Z_u}}
            \sum_{x\in\Omega}
            \zeta_{D-u}\!\left(\rank(\Delta_y(x))\right)\ket{x},
        \end{aligned}
    \end{equation}
    where we used Eq.~\eqref{eq:zeta-q-krawtchouk-transform} to go from the first to the second line.
\end{proof}

\section{MacWilliams transform for rank-weight distributions}
\label{sec:macwilliams-transform-rank-weight-distributions}
For any $x\in\X$ define the rank weight distribution around $x$ as
\begin{equation}\label{eq:rank-weight-distribution}
    W_r(x)\coloneqq\abs{\Set{y\in \Code \given \rank(y-x)=r}},\qquad r \in [0,D]
\end{equation}

We then have the generalised MacWilliams identity.
\begin{lemma}[Generalised MacWilliams identity in rank
        metric]
    \label{lem:rank-metric-macwilliams-identity}
    For every $x\in\X$ and every $r\in[0,D]$,
    \begin{equation}\label{eq:rank-metric-macwilliams-identity}
        \sum_{\ell=0}^{D} W_\ell(x)\,\lambda_r(\ell)
        \;=\;
        \abs{\Code}\sum_{e\in\Gamma_r}
        \adchar_x(-e)
        \,\indicator_{\Code^\perp}(e) = \abs{\Code}\sum_{e\in\Gamma_r \cap \Code^\perp}
        \adchar_x(-e) .
    \end{equation}
\end{lemma}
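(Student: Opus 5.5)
The plan is to read the left-hand side as a code-averaged Fourier eigenvalue and then use the subgroup character average from Sec.~\ref{subsec:additive-codes-dual-distance-orthogonality}. First I group the codewords by their rank distance to $x$. Since the rank-metric eigenvalues are radial (Lemma~\ref{lem:radial-fourier-eigenvalue-conditions}), $\lambda_r(y-x)=\lambda_r(\rank(y-x))$ for every $y\in\Code$. Hence
\begin{equation*}
    \sum_{\ell=0}^{D} W_\ell(x)\,\lambda_r(\ell)
    =
    \sum_{y\in\Code}\lambda_r(y-x).
\end{equation*}
This rewriting only uses the definition~\eqref{eq:rank-weight-distribution} of $W_\ell(x)$ as a count of codewords on each rank shell around $x$.

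Next I expand each $\lambda_r(y-x)$ via the character-sum formula~\eqref{eq:distance-matrix-character-sum}:
\begin{equation*}
    \lambda_r(y-x)=\sum_{e\in\Gamma_r}\adchar_{y-x}(e)=\sum_{e\in\Gamma_r}\chi(\bilinear{e}{y-x}).
\end{equation*}
Bilinearity then splits this into $\chi(\bilinear{e}{y})\,\chi(-\bilinear{e}{x})=\adchar_y(e)\,\adchar_x(-e)$, where symmetry of the trace pairing lets me swap arguments freely.

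I then exchange the two finite sums, so that the $y$-sum sits innermost: $\sum_{e\in\Gamma_r}\adchar_x(-e)\sum_{y\in\Code}\adchar_e(y)$. The subgroup character average $\sum_{y\in\Code}\adchar_e(y)=\abs{\Code}\indicator_{\Code^\perp}(e)$ collapses the inner sum. This gives the middle expression in~\eqref{eq:rank-metric-macwilliams-identity}, and the last expression is just a rewriting of the indicator as restricting the sum to $\Gamma_r\cap\Code^\perp$.

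There is no real obstacle here. The only care needed is bookkeeping of signs and argument order in the characters, namely confirming that $\adchar_{y-x}(e)=\adchar_y(e)\adchar_x(-e)$. This follows from bilinearity and symmetry of $\bilinear{\cdot}{\cdot}$ together with the identity $\adchar_a(b)=\adchar_b(a)$.
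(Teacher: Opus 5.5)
Your proposal is correct and follows essentially the same route as the paper. You rewrite the shell-weighted sum as a sum over codewords, expand $\lambda_r$ as a character sum over $\Gamma_r$, factor out $\adchar_x(-e)$, swap sums, and collapse the inner sum with the subgroup character average $\abs{\Code}\indicator_{\Code^\perp}(e)$. Your explicit check that the trace pairing is symmetric, so that $\adchar_{y-x}(e)=\adchar_y(e)\,\adchar_x(-e)$, is a detail the paper leaves implicit.
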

\begin{proof}
    By definition of $W_\ell(x)$ in Eq.~\eqref{eq:rank-weight-distribution} and
    $\lambda_r(\ell)$ in Eq.~\eqref{eq:distance-matrix-character-sum},

    \begin{equation}\label{eq:macwilliams-proof-expansion}
        \sum_{\ell=0}^{D} W_\ell(x)\,\lambda_r(\ell)
        =\sum_{y\in \Code} \lambda_r(\rank(y-x))
        =\sum_{y\in \Code}\ \sum_{e \in \Gamma_r }
        \adchar_{y-x}(e)
        =\sum_{e\in \Gamma_r}
        \adchar_x(-e)
        \sum_{y\in \Code}
        \adchar_y(e),
    \end{equation}
    where we swap sums and factor the phase for the last equality. The inner sum
    is a character sum over the additive group $\Code$:
    \begin{equation}\label{eq:macwilliams-inner-character-sum}
        \sum_{y\in \Code}
        \adchar_y(e)
        =
        \begin{cases}
            \abs{\Code}, & e\in \Code^\perp,    \\
            0,           & e\notin \Code^\perp,
        \end{cases}
    \end{equation}
    by orthogonality of characters. Inserting
    Eq.~\eqref{eq:macwilliams-inner-character-sum} into
    Eq.~\eqref{eq:macwilliams-proof-expansion} gives the last equation in
    Eq.~\eqref{eq:rank-metric-macwilliams-identity}.
\end{proof}

\section{From expected proxy score to a rank tail bound}
\label{sec:expected-proxy-score-rank-tail-bound}

The main text optimises the expectation of the normalised proxy score
$O_q(\Delta_y(x))$, or equivalently of $q^{-\rank(\Delta_y(x))}$. Since this
proxy score is monotone in the residual rank, it is natural to ask whether the
proxy-score bias gives a statement about actual sampled ranks. For the optimised
rank-metric DQI weights, the variance formula in
Lemma~\ref{lem:rank-metric-proxy-score-observable-moments} gives a simple second-moment conversion.
It yields a tail bound for the rank of a sample from the final DQI state. The
bound is not, by itself, an approximation guarantee relative to
$\mathrm{OPT}(y)$; such a guarantee would require additional information about
the rank distribution in the coset $\Code-y$ or about a structured instance
family.

\begin{proposition}[Proxy-to-tail conversion for optimised rank weights]
    \label{prop:proxy-to-tail-conversion}
    Let $D=\min(m,n)$, let $0\le \ell<D$, and assume the dual-distance condition
    $2\ell+2<d^\perp$. Let $w^{(\ell)}$ be a normalised eigenvector associated
    with the largest eigenvalue of $\tilde A^{(\ell)}$. Let $R$ be the residual
    rank of the candidate obtained by measuring the final DQI state prepared
    with weights $w^{(\ell)}$; equivalently, if $\tilde{x}\in\Omega$ is this
    random candidate, then
    \begin{equation}
        R\coloneqq \rank(\Delta_y(\tilde{x})).
    \end{equation}
    Define
    \begin{equation}
        R_\ell^*
        \coloneqq
        R_\ell(w^{(\ell)})
        =
        -\log_q(1-\mu_{\min}^{(\ell)}).
    \end{equation}
    Then $R_\ell^*\le D-\ell$ and, for every $a>0$,
    \begin{equation}
        \Pr\!\left[
            R\le R_\ell^*+a
            \right]
        \ge
        \frac{(1-q^{-a})^2}{1+q}.
    \end{equation}
\end{proposition}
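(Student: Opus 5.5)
The plan is to apply the Paley--Zygmund inequality to the nonnegative random variable $X\coloneqq q^{-R}$. Since $q^{-R}$ is strictly decreasing in $R$, the event $\{R\le R_\ell^*+a\}$ coincides with $\{X\ge q^{-a}q^{-R_\ell^*}\}$. If we show $\mathbb E[X]=q^{-R_\ell^*}$, Paley--Zygmund with $\theta=q^{-a}\in(0,1)$ gives
\begin{equation*}
    \Pr\!\left[R\le R_\ell^*+a\right]\ge(1-q^{-a})^2\,\frac{\mathbb E[X]^2}{\mathbb E[X^2]}.
\end{equation*}
The claim then reduces to the second-moment bound $\mathbb E[X^2]\le(1+q)\,\mathbb E[X]^2$, that is, $\Var[X]\le q\,\mathbb E[X]^2$.

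For the mean, note that $X=\mathbb E_{\mathrm{Unif}}[q^{-\rank}]+\sigma\,O_q(\Delta_y(\tilde x))$, where $\sigma^2=(q-1)q^{-(m+n)}(1-q^{-m})(1-q^{-n})$ is the uniform variance from Lemma~\ref{lem:rank-score-uniform-moments}. Measuring $\ket{\psi_6}$ samples $\tilde x$ with the Born distribution, so $\mathbb E[X]$ and $\Var[X]$ are affine images of $\mathbb E_{\psi_6}[\mathcal O_q]$ and $\Var_{\psi_6}[\mathcal O_q]$. Lemma~\ref{lem:rank-metric-proxy-score-observable-moments} applies because $2\ell+2<d^\perp$, which also implies $2\ell+1<d^\perp$. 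Combining Eq.~\eqref{eq:dqi-observable-expectation-ladder} with the definition of $R_\ell(w)$ gives $\mathbb E[X]=q^{-R_\ell(w^{(\ell)})}=q^{-R_\ell^*}$; this is just the identity stated after the definition of the effective-rank proxy. The upper bound $\mu_{\min}^{(\ell)}\le 1-q^{\ell-D}$ in Lemma~\ref{lem:truncated-rank-jacobi-spectral-bounds} yields $R_\ell^*\le D-\ell$, which is the first assertion, and hence $\mathbb E[X]\ge q^{\ell-D}$.

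For the variance, use that $w^{(\ell)}$ is an eigenvector of $\tilde A^{(\ell)}$. Then $w^\dagger(\tilde A^{(\ell)})^2w=(w^\dagger\tilde A^{(\ell)}w)^2$, so the first two terms of Eq.~\eqref{eq:dqi-observable-variance-jacobi} cancel, leaving
\begin{equation*}
    \Var_{\psi_6}[\mathcal O_q]=\frac{\abs{w_\ell}^2\,b_\ell c_{\ell+1}}{\abs{\Gamma_1}}\le\frac{b_\ell c_{\ell+1}}{\abs{\Gamma_1}}.
\end{equation*}
Multiplying by $\sigma^2$ and using $\abs{\Gamma_1}=(q^m-1)(q^n-1)/(q-1)$, one gets $\sigma^2/\abs{\Gamma_1}=(q-1)^2q^{-2(m+n)}$. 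Inserting $b_\ell$ and $c_{\ell+1}$ from Lemma~\ref{lem:rank-metric-shell-sizes-intersection-numbers} gives
\begin{equation*}
    \Var[X]\le q^{-2(m+n)}(q^m-q^\ell)(q^n-q^\ell)(q^{\ell+1}-1)q^\ell\le q^{2\ell+1-m-n}\le q^{2\ell+1-2D}.
\end{equation*}
Therefore $\Var[X]/\mathbb E[X]^2\le q^{2\ell+1-2D}/q^{2(\ell-D)}=q$, which is exactly the required second-moment bound.

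The only delicate step is the variance bound. It depends on the eigenvector cancellation in the variance formula, and on checking that the normalisations line up: the factor $\sigma^2/\abs{\Gamma_1}$ together with $b_\ell c_{\ell+1}$ must yield a clean power of $q$ that the lower bound $\mathbb E[X]\ge q^{\ell-D}$ exactly compensates. Because $m+n\ge 2D$, the exponent $2\ell+1-m-n$ is at most $2\ell+1-2D$, and the final constant $q$ does not depend on $m,n,\ell$.
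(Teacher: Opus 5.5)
Your proposal is correct and follows the paper's proof essentially step for step. Both arguments use the same chain: the mean $q^{-R_\ell^*}$ from the ladder form of the expectation, $R_\ell^*\le D-\ell$ from the interlacing bound $\mu_{\min}^{(\ell)}\le 1-q^{\ell-D}$, the eigenvector cancellation giving $\Var[X]=(q-1)^2q^{-2(m+n)}\abs{w_\ell}^2b_\ell c_{\ell+1}\le q^{2\ell+1-m-n}$, and Paley--Zygmund with $\theta=q^{-a}$, where your use of $m+n\ge 2D$ is exactly the paper's $q^{1-\abs{m-n}}\le q$.
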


In particular,
\begin{equation}
    \Pr\!\left[
        R\le D-\ell+a
        \right]
    \ge
    \frac{(1-q^{-a})^2}{1+q}.
\end{equation}

\begin{proof}
    Set $Y\coloneqq q^{-R}$. By Lemma~\ref{lem:rank-score-first-fourier-mode} and
    Eq.~\eqref{eq:dqi-observable-expectation-ladder}, the expectation of $Y$ in
    the final DQI state is
    \begin{equation}
        \mathbb E[Y]
        =
        1-\frac{q-1}{q^{m+n}}\,
        \|B^{(\ell)}w^{(\ell)}\|^2
        =
        1-\mu_{\min}^{(\ell)}
        =
        q^{-R_\ell^*}.
    \end{equation}
    Lemma~\ref{lem:truncated-rank-jacobi-spectral-bounds} gives
    $\mu_{\min}^{(\ell)}\le 1-q^{\ell-D}$, and hence
    $R_\ell^*\le D-\ell$.

    We next bound the relative variance of $Y$. Since $Y$ and $O_q$ differ by an
    affine rescaling, and since
    \begin{equation}
        \Var_{z\sim\mathrm{Unif}}\!\left[q^{-\rank(z)}\right]
        =
        (q-1)q^{-(m+n)}(1-q^{-m})(1-q^{-n}),
    \end{equation}
    while
    \begin{equation}
        \abs{\Gamma_1}
        =
        \frac{(q^m-1)(q^n-1)}{q-1},
    \end{equation}
    Lemma~\ref{lem:rank-metric-proxy-score-observable-moments} implies
    \begin{equation}
        \Var(Y)
        =
        (q-1)^2q^{-2(m+n)}
        \abs*{w_\ell^{(\ell)}}^2 b_\ell c_{\ell+1}.
    \end{equation}
    Here the term $w^\dagger(\tilde A^{(\ell)})^2w-(w^\dagger\tilde
        A^{(\ell)}w)^2$ vanishes because $w^{(\ell)}$ is an eigenvector of $\tilde
        A^{(\ell)}$. Using the rank-metric intersection numbers
    \begin{equation}
        b_\ell
        =
        \frac{(q^m-q^\ell)(q^n-q^\ell)}{q-1},
        \qquad
        c_{\ell+1}
        =
        \frac{(q^{\ell+1}-1)q^\ell}{q-1},
    \end{equation}
    we obtain the bound
    \begin{equation}
        b_\ell c_{\ell+1}
        \le
        \frac{q^{m+n+2\ell+1}}{(q-1)^2}.
    \end{equation}
    Since $\abs*{w_\ell^{(\ell)}}\le1$ and $\mathbb E[Y]^2=q^{-2R_\ell^*}\ge
        q^{-2(D-\ell)}$, it follows that
    \begin{equation}
        \frac{\Var(Y)}{\mathbb E[Y]^2}
        \le
        q^{1+2D-m-n}
        =
        q^{1-\abs{m-n}}
        \le q.
    \end{equation}
    Thus $\mathbb E[Y^2]\le(1+q)\mathbb E[Y]^2$.

    Paley--Zygmund's inequality applied to the nonnegative random variable $Y$
    gives, for every $a>0$,
    \begin{equation}
        \Pr\!\left[
            Y\ge q^{-a}\mathbb E[Y]
            \right]
        \ge
        \frac{(1-q^{-a})^2\mathbb E[Y]^2}{\mathbb E[Y^2]}
        \ge
        \frac{(1-q^{-a})^2}{1+q}.
    \end{equation}
    Since $\mathbb E[Y]=q^{-R_\ell^*}$, the event
    $Y\ge q^{-a}\mathbb E[Y]$ is exactly $R\le R_\ell^*+a$. The weaker bound
    with $D-\ell+a$ follows from $R_\ell^*\le D-\ell$.
\end{proof}

\begin{remark}
    Proposition~\ref{prop:proxy-to-tail-conversion} converts the optimised
    expected proxy score into an explicit tail statement about the actual
    residual rank of a sampled candidate. For fixed $q$, this gives constant
    success probability. Repetition and classical evaluation of
    $\rank(\Delta_y(x))$ can amplify this success probability and keep the best
    sample. However, the threshold $D-\ell+a$ is an absolute rank threshold
    determined by the shell cutoff and the ambient dimensions. It does not imply
    that the sampled rank is close to $\mathrm{OPT}(y)$ on every instance. An
    OPT-relative approximation guarantee would require further assumptions or a
    sharper analysis of the rank distribution in the relevant coset.
\end{remark}

\begin{proposition}[Gabidulin obstruction to additive OPT approximation]
    \label{prop:gabidulin-additive-opt-approximation-obstruction}
    Let $\Code\le\F_q^{m\times n}$ be an $[n,k]$ Gabidulin code over $\F_{q^m}$,
    with $n\le m$ and $k<n$. Run rank-metric DQI with $\Code$ as the primal
    code, optimised shell weights $w^{(\ell)}$, and cutoff $0\le\ell<n$. Assume
    the expectation condition $2\ell+1<d^\perp$, where $d^\perp=k+1$ is the rank
    distance of $\Code^\perp$. Let $\tilde{x}\in\Omega$ denote the measured
    classical output of the final DQI state and set
    \begin{equation}
        R\coloneqq \rank(\Delta_y(\tilde{x})).
    \end{equation}
    Then there is a constant $C_q=2(1+q^{-1/2})$, such that for every $A\ge0$,
    \begin{equation}
        \Pr\!\left[
            R\le \mathrm{OPT}(y)+A
            \right]
        \le
        C_q\,q^{A+\ell-k}.
    \end{equation}
    In particular, in the unique-decoding regime $\ell<k/2$, this probability is
    at most $C_q q^{A-k/2}$ up to an inessential constant factor. Hence, for any
    sequence with $k\to\infty$ and fixed additive error $A$, the current
    Gabidulin-based DQI construction cannot sample a solution with rank at most
    $\mathrm{OPT}(y)+A$ with constant probability.
\end{proposition}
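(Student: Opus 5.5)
The plan is to apply Markov's inequality to the nonnegative score $Y\coloneqq q^{-R}$, using two ingredients: an upper bound on $\mathbb E[Y]$ coming from the spectral lower bound of Lemma~\ref{lem:truncated-rank-jacobi-spectral-bounds}, and the covering-radius bound $\mathrm{OPT}(y)\le n-k$ for Gabidulin codes. The event $R\le \mathrm{OPT}(y)+A$ is the same as $Y\ge q^{-\mathrm{OPT}(y)-A}$. Markov therefore gives
\begin{equation*}
\Pr[R\le\mathrm{OPT}(y)+A]\le q^{\mathrm{OPT}(y)+A}\,\mathbb E[Y].
\end{equation*}
The statement follows once we show $\mathbb E[Y]\le C_q q^{\ell-n}$ and $\mathrm{OPT}(y)\le n-k$.

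\emph{Step 1: the expectation.} Since $2\ell+1<d^\perp$, Lemma~\ref{lem:rank-metric-proxy-score-observable-moments} applies. Combining it with Lemma~\ref{lem:rank-score-first-fourier-mode} and Eq.~\eqref{eq:dqi-observable-expectation-ladder}, exactly as in the proof of Proposition~\ref{prop:proxy-to-tail-conversion}, gives $\mathbb E[Y]=1-\mu_{\min}^{(\ell)}$. Now use the lower bound in Lemma~\ref{lem:truncated-rank-jacobi-spectral-bounds}, which holds for every $q,m,n$ and $0\le\ell<D=n$. Write $\alpha=\sqrt{(1-q^{\ell-m})(1-q^{\ell-n})}$ and $\beta=\sqrt{(q^\ell-1)q^{\ell-1-m-n}}$. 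Then
\begin{equation*}
1-(\alpha-\beta)^2\le 1-\alpha^2+2\alpha\beta\le q^{\ell-m}+q^{\ell-n}+2q^{\ell-\frac12-\frac{m+n}{2}},
\end{equation*}
using $\alpha\le1$, $\beta\le q^{\ell-\frac12-\frac{m+n}{2}}$, and $1-(1-x)(1-z)\le x+z$. Since $n\le m$, every term is at most a multiple of $q^{\ell-n}$, and the sum is at most $2q^{\ell-n}+2q^{-1/2}q^{\ell-n}=C_q q^{\ell-n}$. If $\alpha<\beta$, the bound $\mu_{\min}\ge0$ is replaced by the trivial one, but the same estimate still dominates. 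This step is a routine inequality check.

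\emph{Step 2: covering radius.} I will show directly that every $y\in(\F_{q^m})^n$ lies within rank distance $n-k$ of $\Code$. The evaluation points $\alpha_1,\dots,\alpha_k$ are $\F_q$-linearly independent, and a linearised polynomial of $q$-degree $<k$ is uniquely determined by its values on them. Concretely, the $k\times k$ Moore matrix on $\alpha_1,\dots,\alpha_k$ is invertible. Hence there is a message $\mathbf a$ with $f_{\mathbf a}(\alpha_i)=y_i$ for $i\le k$. The residual $\Phi(\mathbf a)-y$ then has at least $k$ zero columns in $\F_q^{m\times n}$, so its rank is at most $n-k$, and therefore $\mathrm{OPT}(y)\le n-k$.

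\emph{Step 3: conclusion.} Combining the steps gives
\begin{equation*}
\Pr[R\le\mathrm{OPT}(y)+A]\le q^{n-k+A}\,C_q q^{\ell-n}=C_q q^{A+\ell-k}.
\end{equation*}
In the regime $\ell<k/2$ this is at most $C_q q^{A-k/2}$, which tends to $0$ as $k\to\infty$ with $A$ fixed. The only mildly delicate point is Step 1, namely keeping the estimate non-asymptotic and uniform in $m\ge n$. The Moore-matrix invertibility in Step 2 is standard and can be cited from the Gabidulin facts in Lemma~\ref{lem:gabidulin-code-properties}.
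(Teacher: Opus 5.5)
Your proposal is correct and follows the paper's proof essentially step for step: Markov's inequality on $Y=q^{-R}$, the identity $\mathbb E[Y]=1-\mu_{\min}^{(\ell)}$, the spectral lower bound of Lemma~\ref{lem:truncated-rank-jacobi-spectral-bounds} giving $\mathbb E[Y]\le C_q q^{\ell-n}$, and the covering bound $\mathrm{OPT}(y)\le n-k$. The differences are minor: you prove the covering bound directly by interpolating on $\alpha_1,\dots,\alpha_k$ where the paper cites the Gabidulin covering-radius result, and your estimate $1-(\alpha-\beta)^2\le 1-\alpha^2+2\alpha\beta$ holds whatever the sign of $\alpha-\beta$. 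That makes it slightly cleaner than the paper's step of squaring $1-(1+q^{-1/2})s$.
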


\begin{proof}
    We use two facts about Gabidulin codes. First, the dual of an $[n,k]$
    Gabidulin code is an $[n,n-k]$ Gabidulin code, so $d^\perp=k+1$. Second, the
    rank-metric covering radius of an $[n,k]$ Gabidulin code is
    $n-k$~\cite{byrne2017covering}. Therefore, for every target $y$,
    \begin{equation}
        \mathrm{OPT}(y)
        =
        \min_{c\in\Code}\rank(c-y)
        \le n-k .
    \end{equation}

    Let $Y\coloneqq q^{-R}$. As in the proof of
    Proposition~\ref{prop:proxy-to-tail-conversion}, the optimised weights give
    \begin{equation}
        \mathbb E[Y]=1-\mu_{\min}^{(\ell)}.
    \end{equation}
    We now use Lemma~\ref{lem:truncated-rank-jacobi-spectral-bounds} to upper
    bound this expectation. Since $n\le m$, set $s=q^{\ell-n}$. The lower bound
    on $\mu_{\min}^{(\ell)}$ reads
    \begin{equation}
        \mu_{\min}^{(\ell)}
        \ge
        \left(
        \sqrt{(1-q^{\ell-m})(1-s)}
        -
        \sqrt{(q^\ell-1)q^{\ell-1-m-n}}
        \right)^2 .
    \end{equation}
    The first square root is at least $1-s$, while the second is at most
    $q^{-1/2}s$. Hence
    \begin{equation}
        \mu_{\min}^{(\ell)}
        \ge
        (1-(1+q^{-1/2})s)^2,
    \end{equation}
    and therefore
    \begin{equation}
        \mathbb E[Y]
        =
        1-\mu_{\min}^{(\ell)}
        \le
        2(1+q^{-1/2})q^{\ell-n}
        =
        C_q q^{\ell-n}.
    \end{equation}

    If $R\le\mathrm{OPT}(y)+A$, then $R\le n-k+A$, and hence $Y\ge
        q^{-(n-k+A)}$. Markov's inequality applied to the nonnegative random
    variable $Y$ gives
    \begin{equation}
        \Pr\!\left[
            R\le\mathrm{OPT}(y)+A
            \right]
        \le
        q^{n-k+A}\mathbb E[Y]
        \le
        C_q q^{A+\ell-k}.
    \end{equation}
    The final assertion follows from the unique-decoding constraint $\ell<k/2$.
\end{proof}

\begin{remark}
    Proposition~\ref{prop:gabidulin-additive-opt-approximation-obstruction} explains why the
    proxy-to-tail conversion above should not be interpreted as an approximation
    theorem for Gabidulin nearest-codeword instances. Gabidulin codes are very
    good covering codes in the rank metric: every target is within rank $n-k$ of
    the code. In contrast, the DQI cutoff is limited by unique decoding of the
    dual Gabidulin code to roughly $k/2$ shells, and the optimised
    effective-rank proxy is centred around rank $n-\ell$, which is at least
    $n-k/2$ in this regime. Thus the algorithmic obstruction is not the
    proxy-to-tail conversion, but the mismatch between the primal covering
    radius and the dual unique-decoding radius. Closing the remaining
    approximation problem would require a different structured ensemble whose
    typical optimum lies near $n-\ell$, or a stronger coherent decoding
    primitive that permits substantially larger $\ell$.
\end{remark}

The same obstruction is not merely an artefact of the Gabidulin construction.
For an $\F_{q^m}$-linear code of length $n$ and dimension $k$, the Singleton
bound applied to the dual gives $d^\perp\le k+1$. Thus unique decoding of the
dual up to rank $\ell$ forces roughly $k\ge 2\ell$. On the other hand, any
systematic $\F_{q^m}$-linear code has covering radius at most $n-k$, since one
can match an arbitrary received word on the $k$ information positions. Hence the
singleton nearest-codeword optimum is bounded by approximately $n-2\ell$ in the
regime where the present DQI protocol is valid, while the optimised DQI tail
bound is centred near $n-\ell$. This suggests that an OPT-relative theorem for
the singleton rank-metric objective would require changing one of the structural
ingredients: a coherent decoding primitive beyond the unique-decoding radius, a
nonstandard code family with unusually large covering radius and efficiently
decodable dual, or a different optimisation problem, such as a product-target or
sum-rank variant, whose optimum scale is not fixed by the covering radius of a
single code.

\section{Deferred proofs}
\label{sec:deferred-proofs}

\subsection{Proof of Lemma~\ref{lem:recursion-coefficient-identities} (Recursion coefficient identities)}
\label{subsec:proof-recursion-coefficient-identities}

\begin{proof}
    Fix $i\in[0,D]$ and $x\in\Gamma_i$. In a $P$-polynomial translation scheme,
    the $A_1$-neighbours of $x$ lie only in the shells $\Gamma_{i-1}$,
    $\Gamma_i$, and $\Gamma_{i+1}$, with out-of-range shells omitted. By the
    definitions of the intersection numbers, the numbers of such neighbours are
    respectively $c_i$, $a_i$, and $b_i$, using $c_0=b_D=0$ at the endpoints.
    These three classes partition the $A_1$-neighbourhood of $x$, whose size is
    $\abs{\Gamma_1}$, and therefore
    \begin{equation}
        a_i+b_i+c_i=\abs{\Gamma_1}.
    \end{equation}

    Now fix $0\le i<D$ and consider the set of edges between consecutive shells
    \begin{equation}
        E_{i,i+1}\coloneqq
        \Set*{(x,z)\in\Gamma_i\times\Gamma_{i+1}\given (A_1)_{xz}=1}.
    \end{equation}
    We count $\abs{E_{i,i+1}}$ in two ways. First, fix $x \in \Gamma_i$. By
    definition, $b_i$ counts its $A_1$-neighbours in $\Gamma_{i+1}$. Therefore
    \begin{equation}
        \abs{E_{i,i+1}} = b_i\abs{\Gamma_i}.
    \end{equation}
    Next, fix $z \in \Gamma_{i+1}$. For each $z$ there are exactly $c_{i+1}$
    $A_1$-neighbours in $\Gamma_i$. Therefore
    \begin{equation}
        \abs{E_{i,i+1}} = c_{i+1}\abs{\Gamma_{i+1}}.
    \end{equation}
    Equating the two counts gives
    \begin{equation}
        b_i\abs{\Gamma_i} = c_{i+1}\abs{\Gamma_{i+1}}
    \end{equation}
    Since $c_{i+1}>0$ in the $P$-polynomial ordering, division by
    $c_{i+1}\abs{\Gamma_i}$ gives
    \begin{equation}
        \frac{\abs{\Gamma_{i+1}}}{\abs{\Gamma_i}}=\frac{b_i}{c_{i+1}}.
    \end{equation}
\end{proof}

\subsection{Proof of Lemma~\ref{lem:distance-matrix-fourier-eigenvectors} (Eigenvectors and eigenvalues of distance matrices)}
\label{subsec:proof-distance-matrix-fourier-eigenvectors}

\begin{proof}
    We prove that the Fourier transform diagonalises the distance matrices of
    translation schemes. First we use the definition of the Fourier transform,
    Eq.~\eqref{eq:fourier-basis-state}, the action of distance matrices,
    Eq.~\eqref{eq:action-distance-matrix}, and translation invariance to obtain
    \begin{equation}
        A_i \mathcal{F}\ket{x}
        =
        \frac{1}{\sqrt{\abs*{\X}}}\sum_{y\in \X}
        \adchar_x(y)
        \sum_{e\in\Gamma_i}\ket{y-e}
        =
        \frac{1}{\sqrt{\abs*{\X}}}\sum_{e\in\Gamma_i}\sum_{y^\prime\in\X}
        \adchar_x(y^\prime+e)
        \ket{y^\prime}.
    \end{equation}
    Pull out $\adchar_x(e)$ and rearrange:
    \begin{equation}
        A_i\mathcal{F}\ket{x}
        =
        \left(\sum_{e\in\Gamma_i}
        \adchar_x(e)
        \right)\cdot
        \frac{1}{\sqrt{\abs*{\X}}}\sum_{y^\prime\in\X}
        \adchar_x(y^\prime)
        \ket{y^\prime}
        =
        \lambda_i(x)\mathcal{F}\ket{x}.
    \end{equation}
\end{proof}

\subsection{Proof of Lemma~\ref{lem:radial-fourier-eigenvalue-conditions} (Sufficient conditions for radial Fourier eigenvalues)}
\label{subsec:proof-radial-fourier-eigenvalue-conditions}

\begin{proof}
    Let $x\in \X$ and $g\in G_0$. By the definition of the character sums,
    \begin{equation}
        \lambda_i(g(x)) = \sum_{e\in \Gamma_i} \adchar_{g(x)}(e) = \sum_{e\in \Gamma_i} \adchar_x(g^\top(e)),
    \end{equation}
    where $g^\top$ is the adjoint of $g$ defined by $\bilinear{g(x)}{y} =
        \bilinear{x}{g^\top(y)}$ with respect to the bilinear form used to define
    the characters. By assumption, $g^\top$ preserves each shell, and hence
    permutes $\Gamma_i$. Substituting $e'=g^\top(e)$, the above sum is
    \begin{equation}
        \sum_{e\in \Gamma_i} \adchar_x(g^\top(e)) = \sum_{e'\in \Gamma_i} \adchar_x(e') = \lambda_i(x)
    \end{equation}
    so $\lambda_i(g(x)) = \lambda_i(x)$. Hence, the function $\lambda_i$ is
    constant on the orbits of $G_0$.

    By the transitivity assumption, for any two points $x,y$ in the same shell,
    there exists an $h\in G_0$ such that $h(x)=y$. Hence,
    \begin{equation}
        \lambda_i(y) = \lambda_i(h(x)) = \lambda_i(x).
    \end{equation}
    Thus $\lambda_i$ is constant on each shell, so the Fourier eigenvalues are
    radial.
\end{proof}

\subsection{Proof of Lemma~\ref{lem:shell-orthogonality-dual-distance} (Shell orthogonality below the dual code distance)}
\label{subsec:proof-shell-orthogonality-dual-distance}

\begin{proof}
    We prove the stated orthogonality relation by expanding the character sums
    $\lambda_i$. Since $\overline{\chi(a)}=\chi(a)^*=\chi(-a)$, for every
    $c,e,e^\prime$ the product of characters in
    Eq.~\eqref{eq:code-averaged-character-sum} combines as follows:
    \begin{equation}
        \begin{split}
            S_{i,j}(y)
             & =
            \sum_{c\in \Code}\sum_{e\in \Gamma_i}\sum_{e^\prime\in \Gamma_j}
            \adchar_{c-y}(e)\adchar^*_{c-y}(e^\prime) \\
             & =
            \sum_{e\in \Gamma_i}\sum_{e^\prime\in \Gamma_j}
            \adchar_{-y}(e-e^\prime)
            \sum_{c\in \Code}
            \adchar_c(e-e^\prime).
        \end{split}
    \end{equation}

    \smallskip \noindent Using the subgroup average over $\Code$
    \begin{equation}
        \sum_{c\in \Code}
        \adchar_c(e-e^\prime)
        =
        \begin{cases}
            \abs{\Code}, & e-e^\prime\in \Code^\perp,    \\
            0,           & e-e^\prime\notin \Code^\perp.
        \end{cases}
    \end{equation}
    we obtain
    \begin{equation}
        S_{i,j}(y)
        =
        \abs{\Code}
        \sum_{e\in \Gamma_i}\sum_{e^\prime\in \Gamma_j}
        \indicator_{\Code^\perp}(e-e^\prime)\,
        \adchar_{-y}(e-e^\prime).
    \end{equation}

    Assume $i+j<d^\perp$. By the triangle inequality and translation invariance
    of $d$, we have
    \begin{equation}
        d(e-e^\prime, 0)
        \le d(e, 0) + d(-e^\prime, 0)
        = d(e, 0)+d(e^\prime, 0)
        = i+j
        < d^\perp.
    \end{equation}
    Hence, if $e-e^\prime\in\Code^\perp$, then $e-e^\prime$ is a dual codeword
    of distance strictly smaller than $d^\perp$. By definition of $d^\perp$,
    this forces $e-e^\prime=0$, and therefore $e=e^\prime$. In particular,
    $e\in\Gamma_i$ and $e^\prime\in\Gamma_j$ then imply $i=j$. Therefore
    \begin{equation}
        S_{i,j}(y)
        =
        \abs{\Code}
        \sum_{e\in \Gamma_i}\sum_{e^\prime\in \Gamma_j}
        \delta_{e, e^\prime}
        =
        \abs{\Code}\,\abs{\Gamma_i}\,\delta_{i,j}.
    \end{equation}
\end{proof}

\subsection{Deferred rank-metric code facts}
\label{subsec:deferred-rank-metric-code-facts}

We recall the projection argument behind the rank-metric Singleton bound used in
Sec.~\ref{subsec:rank-metric-codes}. Let $\Code\le\F_q^{m\times n}$ have minimum
rank distance $d_\mathrm{rk}$, and first assume $n\le m$. Project onto the first
$n-d_\mathrm{rk}+1$ columns:
\begin{equation}
    \pi\colon \F_q^{m\times n}\to\F_q^{m\times(n-d_\mathrm{rk}+1)}.
\end{equation}
If $c\ne c^\prime\in\Code$ and $\pi(c)=\pi(c^\prime)$, then $c-c^\prime$ is
supported on the final $d_\mathrm{rk}-1$ columns, so
\begin{equation}
    \rank(c-c^\prime)\le d_\mathrm{rk}-1,
\end{equation}
contradicting the definition of $d_\mathrm{rk}$. Thus $\pi|_{\Code}$ is
injective and
\begin{equation}
    \dim_{\F_q}\Code\le m(n-d_\mathrm{rk}+1).
\end{equation}
The case $m<n$ is symmetric, giving
\begin{equation}
    \dim_{\F_q}\Code\le \max(m,n)(D-d_\mathrm{rk}+1).
\end{equation}

We also recall why an $[n,k]$ Gabidulin code has distance $n-k+1$. Identify
$\F_q^{m\times n}$ with $(\F_{q^m})^n$, so matrix rank is the $\F_q$-dimension
of the span of the coordinates in $\F_{q^m}$, and let
$U=\operatorname{span}_{\F_q}\Set{\alpha_1,\ldots,\alpha_n}$, where the
evaluation points are $\F_q$-linearly independent. A nonzero linearised
polynomial
\begin{equation}
    f(z)=\sum_{i=0}^r a_i z^{q^i}, \qquad a_r\ne 0,
\end{equation}
has an $\F_q$-linear kernel of dimension at most $r$. Hence, if $f=f_{\mathbf
            a}$ has $q$-degree $<k$, then
\begin{equation}
    \dim_{\F_q}\ker(f)\le k-1.
\end{equation}
The rank of the Gabidulin codeword is
\begin{equation}
    \rank(f(\alpha_1),\ldots,f(\alpha_n))
    = \dim_{\F_q}f(U)
    = n-\dim_{\F_q}(\ker(f)\cap U)
    \ge n-k+1.
\end{equation}
Equality is attained by taking the $q$-annihilator polynomial of
$\operatorname{span}_{\F_q}\Set{\alpha_1,\ldots,\alpha_{k-1}}$ when $k>1$, and
by taking $f(z)=a_0z$ when $k=1$. Therefore $d_{\mathrm{rk}}=n-k+1$. Since
$\dim_{\F_q}\Code=mk$, the Singleton bound is met with equality:
\begin{equation}
    m(n-d_{\mathrm{rk}}+1)=m(n-(n-k+1)+1)=mk.
\end{equation}
The remaining standard facts used in Lemma~\ref{lem:gabidulin-code-properties},
that duals of Gabidulin codes are Gabidulin and that unique decoding is
polynomial time up to $\floor{(d-1)/2}$ rank errors, are due to
Gabidulin~\cite{gabidulin1985theory}.

\subsection{Proof of Lemma~\ref{lem:rank-metric-shell-sizes-intersection-numbers} (Rank-metric shell sizes and intersection numbers)}
\label{subsec:proof-rank-metric-shell-sizes-intersection-numbers}
\begin{proof}[Proof of part~\ref{item:rank-metric-p-polynomial}]
    We will show that $p_{1i}^k=0$ unless $k \in \Set{i-1,i,i+1}$. Take
    $x\in\Gamma_i$ and a rank-one matrix $y \in \Gamma_1$. Then
    \begin{equation}
        \rank(x+y) \leq \rank(x) + \rank(y) = \rank(x) +1
    \end{equation}
    and
    \begin{equation}
        \rank(x) = \rank((x+y)-y) \leq \rank(x+y) +1.
    \end{equation}
    Hence
    \begin{equation}
        \rank(x)-1 \leq \rank(x+y)  \leq \rank(x) +1
    \end{equation}
    and multiplication by $A_1$ sends the $i$th distance matrix only to the
    neighbouring shells $i-1,i,i+1$. Since the rank-metric scheme is transitive
    on each shell, the corresponding counts depend only on $i$. Therefore
    $A_1A_i$ is a linear combination of $A_{i-1}$, $A_i$, and $A_{i+1}$, so the
    scheme is $P$-polynomial.
\end{proof}

\begin{proof}[Proof of part~\ref{item:rank-metric-shell-size}]
    First, choose the column space. A rank-$i$ matrix $x \in \X$ has an
    $i$-dimensional column space $\mathrm{col}(x)\le \F_q^m$. The number of
    choices is:
    \begin{equation}
        \binom{m}{i}_q.
    \end{equation}
    Similarly, the row space of $x$ is an $i$-dimensional subspace
    $\mathrm{row}(x)\le \mathbb{F}_q^n$. The number of such choices is:
    \begin{equation}
        \binom{n}{i}_q.
    \end{equation}

    Finally, for a fixed pair $(u,v)$ of $i$-dimensional subspaces with $u \le
        \F_q^m$ and $v \le \F_q^n$, we need to count the matrices $x$ with
    $\mathrm{col}(x)=u$ and $\mathrm{row}(x)=v$. A matrix of this form is
    equivalent to an invertible linear map $ v \to u $ since restricting the
    domain and codomain to these $i$-spaces gives an $i\times i$ matrix of full
    rank. The number of such maps is
    \begin{equation}
        \abs{\mathrm{GL}_i(\F_q)} =\prod_{t=0}^{i-1} (q^i-q^t)
        .
    \end{equation}
    Finally, the number of rank-$i$ matrices in $\F_q^{m \times n}$ is
    \begin{align*}
        \abs{\Gamma_i} & =\binom{m}{i}_q\binom{n}{i}_q\prod_{t=0}^{i-1}{(q^i-q^t)}                                                                                               \\
                       & = \left(\prod_{t=0}^{i-1}\frac{q^m-q^t}{q^i-q^t} \right) \left(\prod_{t=0}^{i-1}\frac{q^n-q^t}{q^i-q^t}\right)\left(\prod_{t=0}^{i-1} (q^i-q^t) \right) \\
                       & = \frac{
                               q^{2\sum_{t=0}^{i-1}t}
                               \prod_{t=0}^{i-1}(q^{m-t}-1)(q^{n-t}-1)}{q^{\sum_{t=0}^{i-1}t}
                               \prod_{t=0}^{i-1}(q^{i-t}-1)}                                                                                     \\
                       & = q^{i(i-1)/2}
        \frac{
            \prod_{t=0}^{i-1}(q^{m-t}-1)(q^{n-t}-1)
        }{
            \prod_{t=0}^{i-1}(q^{i-t}-1)
        }                                                                                                                               \\
                       & = q^{\frac{i(i-1)}{2}}
        \left(
        \prod_{t=0}^{i-1}\frac{q^{m-t}-1}
                         {q^{i-t}-1}
        \right)
        \left(
        \prod_{t=0}^{i-1}\frac{q^{n-t}-1}
                         {q^{i-t}-1}
        \right)
        \prod_{t=0}^{i-1}(q^{i-t}-1)                                                                                                                                             \\
                       & = q^{\frac{i(i-1)}{2}}
        \binom{m}{i}_q
        \binom{n}{i}_q
        \prod_{s=1}^{i}(q^s-1)                                                                                                                                                   \\
                       & = (-1)^i\binom{m}{i}_q
        \binom{n}{i}_q
        (q;q)_i
        q^{\frac{i(i-1)}{2}}.
    \end{align*}
\end{proof}

\begin{proof}[Proof of part~\ref{item:rank-metric-intersection-parameters}]
    For any $z \in \Gamma_i$, there exists a basis in which
    \begin{equation}
        g z h=\left(\begin{array}{cc}
            I_i & 0 \\
            0   & 0
        \end{array}\right) \eqqcolon z_0 \quad \text{for some } g \in \mathrm{GL}_m(\F_q), h \in \mathrm{GL}_n(\F_q).
    \end{equation}
    Notice that for every $e \in \Gamma_1$,
    \begin{equation}
        \rank(z+e) = \rank(g(z+e)h) = \rank(gzh + geh).
    \end{equation}
    Moreover, since the map $e \mapsto geh$ gives a bijection on $\Gamma_1$, we
    have
    \begin{equation}
        p^i_{1k}
        = \abs*{\Set{e \in \Gamma_1 \given \rank(z+e) = k}}
        = \abs*{\Set{e \in \Gamma_1 \given \rank(z_0+e) = k}} .
    \end{equation}

    Consequently, the intersection counts depend only on $i$ and we can assume
    \begin{equation}
        z = z_0 =\left(\begin{array}{cc}
            I_i & 0 \\
            0   & 0
        \end{array}\right).
    \end{equation}
    Every rank-one matrix can be written as
    \begin{equation}
        e=u v^{\top} \quad \text{with } u \in \F_q^m \setminus\Set{0},  v \in \F_q^n \setminus\Set{0} .
    \end{equation}

    Two pairs represent the same matrix precisely when $ (u, v) =\left(\alpha u,
        \alpha^{-1} v\right) $ for some $ \alpha \in \mathbb{F}_q ^\times$. This
    means we only need to count pairs, and divide by $q-1$ to compensate for
    overcounting. Now write
    \begin{equation}
        u=
        \begin{pmatrix}
            u_1 \\
            u_2
        \end{pmatrix},
        \qquad
        v=
        \begin{pmatrix}
            v_1 \\
            v_2
        \end{pmatrix},
    \end{equation}
    with
    \begin{equation}
        u_1 \in \F_q^{i}, \qquad
        u_2 \in \F_q^{m-i}, \qquad
        v_1 \in \F_q^{i}, \qquad
        v_2 \in \F_q^{n-i}.
    \end{equation}

    \begin{lemma}
        \label{lem:rank-increase-by-one-condition}
        Given a rank-one matrix $e = uv^\top \in \Gamma_1$ then $\rank(z + e) =
            i+1 $ if and only if $u_2 \neq 0 $ and $v_2  \neq 0$.
    \end{lemma}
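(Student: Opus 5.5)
We need to show that for $z=z_0=\begin{pmatrix}I_i&0\\0&0\end{pmatrix}$ and $e=uv^\top$, $\rank(z_0+uv^\top)=i+1$ exactly when $u_2\neq0$ and $v_2\neq0$. The plan is to compute the rank of $z_0+uv^\top$ directly in the block decomposition
\begin{equation}
    z_0+uv^\top=
    \begin{pmatrix}
        I_i+u_1v_1^\top & u_1v_2^\top \\
        u_2v_1^\top     & u_2v_2^\top
    \end{pmatrix}.
\end{equation}
This matrix always has rank at most $i+1$ by subadditivity. The argument splits according to whether $u_2$ and $v_2$ vanish.

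\textbf{If $u_2=0$ or $v_2=0$.} Suppose $u_2=0$. Then all rows of $z_0+uv^\top$ beyond the first $i$ vanish, so the rank is at most $i$. If $v_2=0$, all columns beyond the first $i$ vanish, and the rank is again at most $i$. Hence $\rank(z_0+e)\ne i+1$ in these cases, which proves the forward implication by contraposition.

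\textbf{If $u_2\ne0$ and $v_2\ne0$.} For the converse, I would exhibit an $(i+1)\times(i+1)$ nonsingular minor.

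Choose an index $a$ with $(u_2)_a\ne0$. Use row operations inside the lower block, subtracting multiples of row $a$ of that block, to make every other lower row zero. This works because the lower block is $u_2(v_1^\top,v_2^\top)$, so all its rows are multiples of $(v_1^\top, v_2^\top)$. Equivalently, we may replace $u_2$ by a standard basis vector times a scalar without changing the rank. Do the same with columns, choosing $b$ with $(v_2)_b\ne0$.

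This reduces to the $(i+1)\times(i+1)$ matrix
\begin{equation}
    M=\begin{pmatrix} I_i+u_1v_1^\top & \beta u_1\\ \alpha v_1^\top & \alpha\beta\end{pmatrix},
    \qquad \alpha=(u_2)_a,\ \beta=(v_2)_b .
\end{equation}
Now subtract $\beta^{-1}$ times the last column, multiplied on the right by $v_1^\top$, from the first $i$ columns. This kills both rank-one corrections and gives the block triangular matrix $\begin{pmatrix} I_i & \beta u_1\\ 0&\alpha\beta\end{pmatrix}$. Its determinant is $\alpha\beta\ne0$.

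Therefore $\rank(z_0+e)\ge i+1$, and combined with the upper bound, $\rank(z_0+e)=i+1$.

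The only delicate step is the converse. One must check that the reduction to a single nonzero lower row and column is a legitimate sequence of invertible row and column operations. Equivalently, one can use the block elimination $\begin{pmatrix} I & 0\\ -\,\cdot & I\end{pmatrix}$ applied on both sides. One must also confirm that the column elimination indeed removes $u_1v_1^\top$.

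A cleaner alternative I would also consider is to argue via column spaces. Since $u_2\ne0$, the vector $u\notin\operatorname{col}(z_0)$. Since $v_2\ne0$, the vector $v^\top$ is not in the row space of $z_0$. A standard fact then gives $\rank(z_0+uv^\top)=\rank z_0+1$ whenever $u\notin\operatorname{col}(z_0)$ and $v\notin\operatorname{row}(z_0)$. This fact follows from $\ker(z_0+uv^\top)=\ker z_0\cap v^\perp$: for $x$ in the kernel, $z_0x=-u(v^\top x)$ forces $v^\top x=0$. Because $v\notin\operatorname{row}(z_0)$, that intersection has codimension one in $\ker z_0$, so the rank rises by exactly one.

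I would likely present this kernel argument, as it avoids the explicit elimination.
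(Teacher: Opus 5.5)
Your proof is correct. The contrapositive half is the paper's column/row-space containment argument written in coordinates: when $u_2=0$ the lower rows vanish, and when $v_2=0$ the right-hand columns vanish. Your elimination route for the converse ends at exactly the paper's witness. The reduced matrix $M$ is the $(i+1)\times(i+1)$ submatrix of $z_0+uv^\top$ on rows $1,\dots,i,i+a$ and columns $1,\dots,i,i+b$. Your column elimination supplies the evaluation $\det M=(u_2)_a(v_2)_b$, which the paper only asserts. The preliminary row and column reductions are unnecessary: the rank of a matrix is at least that of any of its submatrices, so you can read off $M$ directly.

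The kernel argument you say you would present is a genuinely different route for the converse. The paper pairs a nonvanishing minor with the subadditivity bound $\rank(z_0+e)\le i+1$. Your argument instead identifies $\ker(z_0+uv^\top)=\ker z_0\cap\ker v^\top$ and obtains the exact rank in one step. In doing so it proves the general fact that $\rank(z+uv^\top)=\rank z+1$ whenever $u\notin\mathrm{col}(z)$ and $v\notin\mathrm{row}(z)$, for arbitrary $z$ rather than only the normal form $z_0$. It also matches the column-space formulation of the other direction. The paper's minor, by contrast, is shorter and fully explicit. For $z_0$ the codimension-one step is immediate: $\ker z_0$ consists of the vectors whose first $i$ coordinates vanish, and $v^\top$ restricted to it is given by $v_2$. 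So no orthogonal-complement subtleties over $\F_q$ arise.
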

    \begin{proof}
        If $u_2 = 0$, then $u$ belongs to the column space of $z_0$. Since the columns of
        $e$ are $(uv^\top)_j = v_j u$, each column is a scaling of $u$ and therefore
        every column of $e$ is also in the column space of $z_0$. Consequently,
        $\mathrm{col}(z_0 + e) \subseteq \mathrm{col}(z_0)$ and therefore
        \begin{equation}
            \rank(z_0 +e) \leq i.
        \end{equation}
        Similarly, if $v_2=0$ the row space of $z_0+e$ is contained in that of
        $z_0$.

        Conversely, suppose $u_2,v_2 \neq 0 $. Then choose $a,b$ such that
        $(u_2)_a \neq 0 $ and $(v_2)_b \neq 0$. The $(i+1) \times (i + 1)$ minor
        of $z_0 + e$ formed from the first $i$ rows and columns together with
        row $i+a$ and column $i+b$ has determinant $(u_2)_a(v_2)_b \neq 0$.
        Therefore $\rank(z_0 +e) \geq i+1$. Since the rank change is bounded by
        $\rank(e) =1$, we have
        \begin{equation}
            \rank(z_0 + e) = i+1
        \end{equation}
        in this case.
    \end{proof}

    It remains to count the number of rank-one matrices $e$ satisfying the
    conditions of Lemma \ref{lem:rank-increase-by-one-condition}. First, note
    that $\abs*{\Set{u_2 \in \F_q^{m-i} \given u_2 \neq 0}}= q^{m-i} -1$.
    Similarly, $\abs*{\Set{v_2 \in \F_q^{n-i} \given v_2 \neq 0}}= q^{n-i} -1$.
    After accounting for the free choices for $u_1 \in \F_q^i$ and $v_1 \in
        \F_q^i$, the number of distinct matrices $e \in \Gamma_1$ such that
    $\rank(z_0 + e) = i+1$ is
    \begin{equation}
        \label{eq:rank-b-coefficient-proof}
        b_i \coloneqq \frac{(q^m-q^i)(q^n-q^i)}{q-1}.
    \end{equation}

    Now, the rank can only decrease if $u_2 = v_2 = 0$, in which case
    \begin{equation}
        z_0 + e =
        \left(\begin{array}{cc}
            I_i + u_1v_1^\top & 0 \\
            0                 & 0
        \end{array}\right).
    \end{equation}
    For vectors $ u_1,v_1$ we have $\mathrm{det}(I_i + u_1v_1^\top) =  1 +
        v_1^\top u_1$, and so the matrix is only singular when $v_1^\top u_1 = -1$.
    In this case the rank must be $i-1$, and so
    \begin{equation}
        \rank(I_i+ u_1v_1^\top) = i-1.
    \end{equation}
    Again, we count such matrices. There are $q^i-1$ choices for $u_1 \neq 0$.
    Upon fixing $u_1$, we have a single linear constraint $v_1^\top u_1=-1$
    leaving $q^{i-1}$ choices for $v_1$. Therefore the number of distinct
    matrices is
    \begin{equation}
        \label{eq:rank-c-coefficient-proof}
        c_i \coloneqq \frac{(q^i-1)q^{i-1}}{q-1}.
    \end{equation}
    Finally, we can now solve for the remaining recursion coefficient. Recall
    from Lemma~\ref{lem:rank-metric-shell-sizes-intersection-numbers}, the total number of rank-one
    matrices is
    \begin{equation}
        \abs{\Gamma_1} = \frac{(q^m-1)(q^n-1)}{q-1}.
    \end{equation}
    Now
    \begin{equation}
        \label{eq:rank-a-coefficient-proof}
        a_i = \abs{\Gamma_1} - b_i -c_i = \frac{(q^m-1)(q^n-1)}{q-1}- b_i -c_i.
    \end{equation}
\end{proof}

\subsection{Proof of Lemma~\ref{lem:rank-shell-dicke-preparation} (Rank-shell
    Dicke state preparation)}
\label{subsec:proof-rank-shell-dicke-preparation}

The construction uses the standard parametrisation of a rank-$r$ matrix
by its column space, its row space, and an invertible coordinate matrix.
For $0\le r\le s$, we denote by
\begin{equation}
    \mathrm{Gr}(r,s)
    \coloneqq
    \left\{
    U\le \F_q^s:
    \dim(U)=r
    \right\}
\end{equation}
the finite Grassmannian of $r$-dimensional subspaces of $\F_q^s$.
Its cardinality is the Gaussian binomial coefficient
\begin{equation}
    \abs{\mathrm{Gr}(r,s)}
    =
    \binom{s}{r}_q.
\end{equation}
For each $U\in\mathrm{Gr}(r,s)$, let $R_U\in\F_q^{r\times s}$ be
the unique reduced-row-echelon matrix whose rows span $U$, and set
$B_U\coloneqq R_U^\top\in\F_q^{s\times r}$. Thus the columns of
$B_U$ form a canonical ordered basis of $U$. We write
\begin{equation}
    \ket{\mathrm{Gr}(r,s)}
    \coloneqq
    \frac{1}{\sqrt{\binom{s}{r}_q}}
    \sum_{U\in\mathrm{Gr}(r,s)}
    \ket{B_U}
\end{equation}
for the corresponding uniform superposition over canonical basis matrices.
Similarly, let
\begin{equation}
    \ket{\mathrm{GL}_r}
    \coloneqq
    \frac{1}{\sqrt{\abs{\mathrm{GL}_r(\F_q)}}}
    \sum_{M\in\mathrm{GL}_r(\F_q)} \ket{M},
\end{equation}
with the convention that $\mathrm{GL}_0(\F_q)$ is the singleton containing the
empty matrix.

Now let $e\in\F_q^{m\times n}$ have rank $r$. Its column space and
row space determine two points
\begin{equation}
    U=\operatorname{Col}(e)\in\mathrm{Gr}(r,m),
    \qquad
    V=\operatorname{Row}(e)\in\mathrm{Gr}(r,n).
\end{equation}
We use the same convention for the row space $V$: if $R_V$ is the
reduced-row-echelon matrix whose rows span $V$, then $B_V=R_V^\top\in
    \F_q^{n\times r}$, so $B_V^\top$ is the canonical row-basis matrix for
$V$. Since $B_U$ and $B_V$ have full column rank, there exists a unique
matrix $M\in\mathrm{GL}_r(\F_q)$ such that
\begin{equation}\label{eq:rank-r-grassmannian-factorisation}
    e=B_U M B_V^\top .
\end{equation}
Conversely, every triple
\begin{equation}
    (U,V,M)
    \in
    \mathrm{Gr}(r,m)\times
    \mathrm{Gr}(r,n)\times
    \mathrm{GL}_r(\F_q)
\end{equation}
defines through Eq.~\eqref{eq:rank-r-grassmannian-factorisation} a
matrix of rank exactly $r$, with column space $U$ and row space $V$.
Hence
\begin{equation}\label{eq:rank-r-parametrisation-bijection}
    \mathrm{Gr}(r,m)\times
    \mathrm{Gr}(r,n)\times
    \mathrm{GL}_r(\F_q)
    \longrightarrow
    \Gamma_r,
    \qquad
    (U,V,M)\longmapsto B_U M B_V^\top,
\end{equation}
is a bijection. Consequently,
\begin{equation}
    \abs{\Gamma_r}
    =
    \binom{m}{r}_q\binom{n}{r}_q\abs{\mathrm{GL}_r(\F_q)}.
\end{equation}
The strategy for preparing $\ket{R_r}$ is therefore to prepare
$\ket{\mathrm{Gr}(r,m)}\ket{\mathrm{Gr}(r,n)}\ket{\mathrm{GL}_r}$, compute
$e=B_U M B_V^\top$, and reversibly erase the parametrising registers.

In the following sublemmas, a circuit is called \emph{clean} if all auxiliary
registers are returned to their initial $\ket{0}$ state.
Figure~\ref{fig:rank-shell-preparation-circuit} shows the three reversible
compute-uncompute patterns used in the construction.

Zhandry's quantum lightning construction gives an important precedent for using
low-rank matrix superpositions in quantum algorithms
\cite{zhandryQuantumLightningNever2021}. That work prepares constrained
superpositions over upper-triangular representations of low-rank symmetric
matrices. Our
construction prepares uniform superpositions over all rectangular matrices of a
fixed rank.

\begin{figure}[H]
    \centering
    \begin{minipage}{0.49\linewidth}
        \centering
        \textbf{(a) Lemma~\ref{lem:clean-grassmannian-preparation}}
        \vspace{0.3ex}

        \resizebox{\linewidth}{!}{%
            \begin{quantikz}[row sep=0.32cm,column sep=0.28cm]
                \lstick{$B:\ket{0}$}
                & \qw
                & \gate[2]{F_{\mathrm{Gr}}}
                & \gate[3]{F_{\mathrm{Gr}}^{-1}}
                & \qw
                & \gate[3]{(F_{\mathrm{Gr}}^{-1})^\dagger}
                & \rstick{$\ket{\mathrm{Gr}(\rho,s)}$} \\
                \lstick{$\alpha:\ket{0}$}
                & \gate{P_{\mathrm{Gr}}}
                & \qw
                & \qw
                & \gate[2]{\mathrm{subtract}}
                & \qw
                & \rstick{$\ket{0}$} \\
                \lstick{$a:\ket{0}$}
                & \qw
                & \qw
                & \qw
                & \qw
                & \qw
                & \rstick{$\ket{0}$}
            \end{quantikz}}
    \end{minipage}
    \hfill
    \begin{minipage}{0.49\linewidth}
        \centering
        \textbf{(b) Lemma~\ref{lem:clean-general-linear-group-preparation}}
        \vspace{0.3ex}

        \resizebox{\linewidth}{!}{%
            \begin{quantikz}[row sep=0.32cm,column sep=0.28cm]
                \lstick{$M:\ket{0}$}
                & \qw
                & \gate[2]{F_{\mathrm{GL}}}
                & \gate[3]{F_{\mathrm{GL}}^{-1}}
                & \qw
                & \gate[3]{(F_{\mathrm{GL}}^{-1})^\dagger}
                & \rstick{$\ket{\mathrm{GL}_r}$} \\
                \lstick{$\beta:\ket{0}$}
                & \gate{P_{\mathrm{GL}}}
                & \qw
                & \qw
                & \gate[2]{\mathrm{subtract}}
                & \qw
                & \rstick{$\ket{0}$} \\
                \lstick{$a:\ket{0}$}
                & \qw
                & \qw
                & \qw
                & \qw
                & \qw
                & \rstick{$\ket{0}$}
            \end{quantikz}}
    \end{minipage}

    \vspace{1.2ex}
    \textbf{(c) Lemma~\ref{lem:rank-shell-dicke-preparation}}
    \vspace{0.3ex}

    \resizebox{\linewidth}{!}{%
        \begin{quantikz}[row sep=0.32cm,column sep=0.30cm]
            \lstick{$r:\ket{r}$}
            & \ctrl{1}
            & \ctrl{2}
            & \ctrl{3}
            & \qw
            & \qw
            & \qw
            & \qw
            & \rstick{$\ket{r}$} \\
            \lstick{$B_U:\ket{0}$}
            & \gate{P_{\mathrm{Gr}(m)}(r)}
            & \qw
            & \qw
            & \gate[4]{e=B_UMB_V^\top}
            & \qw
            & \gate[3]{\mathrm{subtract}}
            & \qw
            & \rstick{$\ket{0}$} \\
            \lstick{$B_V:\ket{0}$}
            & \qw
            & \gate{P_{\mathrm{Gr}(n)}(r)}
            & \qw
            & \qw
            & \qw
            & \qw
            & \qw
            & \rstick{$\ket{0}$} \\
            \lstick{$M:\ket{0}$}
            & \qw
            & \qw
            & \gate{P_{\mathrm{GL}}(r)}
            & \qw
            & \qw
            & \qw
            & \qw
            & \rstick{$\ket{0}$} \\
            \lstick{$e:\ket{0}$}
            & \qw
            & \qw
            & \qw
            & \qw
            & \gate[2]{G(e)}
            & \qw
            & \gate[2]{G(e)^\dagger}
            & \rstick{$\ket{R_r}$} \\
            \lstick{$a:\ket{0}$}
            & \qw
            & \qw
            & \qw
            & \qw
            & \qw
            & \ctrl{-4}
            & \qw
            & \rstick{$\ket{0}$}
        \end{quantikz}}
    \caption{Schematic state-preparation circuits used in
        Lemma~\ref{lem:rank-shell-dicke-preparation}.
        In panels (a) and (b), $P_{\mathrm{Gr}}$ and $P_{\mathrm{GL}}$ prepare
        uniform superpositions over enumeration addresses. The maps
        $F_{\mathrm{Gr}}$ and $F_{\mathrm{GL}}$ compute the canonical object
        from the address, while $F_{\mathrm{Gr}}^{-1}$ and
        $F_{\mathrm{GL}}^{-1}$ recompute the unique address from the output so
        it can be subtracted and uncomputed. In panel (c), the black controls
        are shown only on the state-preparation blocks whose target distribution
        is indexed by the input rank; the rank register is never a target. The
        multiplication, subtraction, and $G(e)$ blocks act on fixed padded
        registers. Here $G(e)$ denotes the reversible canonicalisation that
        recovers $(B_U,B_V,M)$ from $e$ into $a$, and the vertical source line
        into the subtract block shows that this recovered triple is subtracted
        from the work registers. All auxiliary registers return
        to $\ket{0}$.}
    \label{fig:rank-shell-preparation-circuit}
\end{figure}

\begin{lemma}[Clean Grassmannian state preparation]
    \label{lem:clean-grassmannian-preparation}
    Let $0\le \rho\le s$. With the canonical basis convention above, in the
    model of reversible $\F_q$-arithmetic, controlled single-register rotations,
    and clean uniform $\F_q$-register preparations, there is a circuit
    \begin{equation}
        \ket{0}\ket{0}
        \longmapsto
        \ket{\mathrm{Gr}(\rho,s)}\ket{0}
    \end{equation}
    using $\mathscr{O}(s\rho^2)$ field operations and $\mathscr{O}(s\rho)$
    state-preparation primitives.
\end{lemma}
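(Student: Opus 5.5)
We prepare $\ket{\mathrm{Gr}(\rho,s)}$ through a bijective ``address'' parametrisation of reduced-row-echelon matrices, following the pattern of Fig.~\ref{fig:rank-shell-preparation-circuit}(a). An RREF matrix $R_U\in\F_q^{\rho\times s}$ is determined by two pieces of data. The first is its pivot set $1\le p_1<\cdots<p_\rho\le s$. The second is the free entries in row $i$ at the non-pivot columns $j>p_i$. There are $(s-p_i)-(\rho-i)$ such columns.

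The address register $\alpha$ therefore consists of a pivot pattern together with a padded block of $\F_q$-registers holding the free entries. Given the pivots, the free entries are uniform over $\F_q^{N(p)}$, where
\begin{equation*}
    N(p)=\sum_i (s-p_i-\rho+i).
\end{equation*}
Hence the target distribution on pivot patterns is $\Pr[p]=q^{N(p)}/\binom{s}{\rho}_q$. This is the standard $q$-counting of Schubert cells, and it sums correctly by the $q$-Pascal recursion $\binom{s}{\rho}_q=\binom{s-1}{\rho-1}_q+q^{\rho}\binom{s-1}{\rho}_q$.

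\textbf{Step 1: pivot pattern.} We prepare $\sum_p\sqrt{\Pr[p]}\ket{p}$ column by column. Scan $j=1,\dots,s$ with a counter $c$ of pivots already placed. At each column apply a controlled single-qubit rotation that marks column $j$ as a pivot with amplitude
\begin{equation*}
    \sqrt{\binom{s-j}{\rho-c-1}_q\Big/\binom{s-j+1}{\rho-c}_q}.
\end{equation*}
The complementary branch carries the factor $q^{\rho-c}$ through the $q$-Pascal split. The rotation is controlled on the counter $c\in[0,\rho]$, and the angles are classically computable. This uses $\mathscr{O}(s)$ columns, each with an $\mathscr{O}(\rho)$-valued control, i.e.\ $\mathscr{O}(s\rho)$ controlled rotations. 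The counter itself is uncomputed at the end by recounting the pivot bits.

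\textbf{Step 2: free entries.} Conditioned on the pivot bits, for each row $i$ and column $j$ we prepare a uniform $\F_q$ superposition in entry $(i,j)$ exactly when that position is free. Otherwise the entry is left at $0$, or set to the forced pivot value $1$. This gives $\mathscr{O}(s\rho)$ controlled uniform-register preparations. Together with Step~1, this is the gate $P_{\mathrm{Gr}}$.

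\textbf{Step 3: compute and uncompute.} $F_{\mathrm{Gr}}$ writes $B_U=R_U^\top$ into the $B$ register, at a cost of $\mathscr{O}(s\rho)$ copies. Now $B$ alone determines the address, so the address must be erased. Apply $F_{\mathrm{Gr}}^{-1}$: run Gaussian elimination on a scratch copy of $B^\top$ into the ancilla $a$, which costs $\mathscr{O}(s\rho^2)$ field operations. Because the stored matrix is already in RREF, this step could even be a direct read-off of pivots and entries; we nonetheless use the reversible elimination so the circuit is uniform. Then subtract the recomputed address from $\alpha$, leaving $\alpha=0$. Finally, run $(F_{\mathrm{Gr}}^{-1})^\dagger$ to clean $a$. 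Since $\alpha$ and $a$ return to $\ket{0}$, the circuit is clean. The output amplitude on each $B_U$ is then
\begin{equation*}
    \sqrt{\Pr[p]\,q^{-N(p)}}=\binom{s}{\rho}_q^{-1/2},
\end{equation*}
which is uniform, as required.

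\textbf{Main obstacle.} The delicate step is Step~1: checking that the sequential conditional rotations reproduce exactly the marginals $q^{N(p)}/\binom{s}{\rho}_q$. We will verify this by induction on $s$ using the $q$-Pascal identity, tracking how each non-pivot column contributes a factor $q^{\#\text{remaining pivots}}$ to $N(p)$. A secondary issue is making the reversible elimination data-independent in gate count, so that it stays within $\mathscr{O}(s\rho^2)$ when the pivot positions are held in superposition. We handle this by performing fixed-schedule controlled row operations over all $s$ columns, masked by the pivot bits.
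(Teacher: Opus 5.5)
\textbf{Gap: the Step~1 rotation angles are wrong for the left-to-right scan.} Your overall strategy is essentially the paper's. You parametrise by Schubert cells (pivot pattern plus free entries), drive sequential controlled rotations with the $q$-Pascal identity, and erase the address by Bennett's compute--copy--uncompute. The paper's recursion branches on whether $e_s\in U$ (column $s$ of $R_U$ is a pivot) or $U$ is a graph with $q^\rho$ values of $\varphi$. That is exactly a pivot scan from the last column to the first.

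However, Step~1 as written is wrong, and the induction you defer to the ``main obstacle'' would fail. Under the standard RREF convention fixed by $R_U$, which your own $N(p)$ also uses, a non-pivot column $j$ has free entries exactly in the rows whose pivots lie to its \emph{left}. So in a left-to-right scan it contributes $q^{c}$, not $q^{\#\text{remaining pivots}}=q^{\rho-c}$. Consequently a prefix has $q^{c((s-j+1)-(\rho-c))}\binom{s-j+1}{\rho-c}_q$ completions, not $\binom{s-j+1}{\rho-c}_q$. Your angles instead produce the distribution $q^{N'(p)}/\binom{s}{\rho}_q$ with $N'(p)=\sum_{j\notin p}\#\{i: p_i>j\}$, which is the cell count for the reversed echelon convention. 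The identity you quote, $\binom{s}{\rho}_q=\binom{s-1}{\rho-1}_q+q^{\rho}\binom{s-1}{\rho}_q$, splits on the last column, not the first.

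Concretely, take $s=2$, $\rho=1$. Your rotation at $j=1$, $c=0$ marks a pivot with probability $\binom{1}{0}_q/\binom{2}{1}_q=1/(q+1)$. But $q$ of the $q+1$ lines have their pivot in column $1$. After Step~2 the lines $\langle(1,a)\rangle$ each get amplitude $1/\sqrt{q(q+1)}$ and $\langle e_2\rangle$ gets $\sqrt{q/(q+1)}$, so the output is not $\ket{\mathrm{Gr}(1,2)}$.

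The repair is easy, and either of two options works.
\begin{enumerate}
\item Scan the columns in the order $j=s,s-1,\dots,1$, with $\rho'$ pivots still to place, and use pivot amplitude $\sqrt{\binom{j-1}{\rho'-1}_q/\binom{j}{\rho'}_q}$. This uses the quoted identity at every level and is exactly the paper's recursion. Your formula is precisely this one with the column index reflected.
\item Keep the left-to-right order but split with the other $q$-Pascal identity, $\binom{n}{k}_q=q^{n-k}\binom{n-1}{k-1}_q+\binom{n-1}{k}_q$. That is, mark column $j$ as a pivot with probability $q^{(s-j+1)-(\rho-c)}\binom{s-j}{\rho-c-1}_q/\binom{s-j+1}{\rho-c}_q$.
\end{enumerate}
With either correction, the rest of your argument goes through: the controlled uniform preparations on the free positions, the copy to $B_U$, erasing the address from $B_U$, and the $\mathscr{O}(s\rho)$ primitive and $\mathscr{O}(s\rho^2)$ field-operation counts.
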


\begin{proof}
    We use the standard recursive decomposition of the finite Grassmannian. The
    reduced-row-echelon representation and the associated ranking and unranking
    viewpoint are standard in enumerative coding for Grassmannian
    spaces~\cite{silberstein11:grassmannian-enumerative}. Write
    $\F_q^s=\F_q^{s-1}\oplus \F_q e_s$ and let $\pi\colon\F_q^s\to\F_q^{s-1}$ be
    the projection. A subspace $U\in\mathrm{Gr}(\rho,s)$ is of exactly one of
    the following two types. If $e_s\in U$, then
    \begin{equation}
        U=W\oplus \F_q e_s,
        \qquad
        W\in\mathrm{Gr}(\rho-1,s-1).
    \end{equation}
    If $e_s\notin U$, then $\pi|_U$ is injective, its image is some
    $W\in\mathrm{Gr}(\rho,s-1)$, and
    \begin{equation}
        U=\Set{(w,\varphi(w))\given w\in W}
    \end{equation}
    for a unique linear functional $\varphi\in W^*$. Once the canonical basis
    $B_W$ is fixed, $\varphi$ is specified by its values on that basis, giving
    $q^\rho$ choices. Hence we recover the $q$-deformed Pascal identity
    \begin{equation}
        \label{eq:grassmannian-recursive-count}
        \binom{s}{\rho}_q
        =
        \binom{s-1}{\rho-1}_q
        +
        q^\rho\binom{s-1}{\rho}_q .
    \end{equation}

    Define a set of enumeration addresses recursively from this decomposition.
    Such an address records the branch choices and the graph-branch field
    values. The branch $e_s\in U$ stores one address for
    $\mathrm{Gr}(\rho-1,s-1)$, while the graph branch stores one address for
    $\mathrm{Gr}(\rho,s-1)$ and a vector in $\F_q^\rho$ giving the values of
    $\varphi$. The base cases $\rho=0$ and $\rho=s$ are singletons. The
    decomposition above gives a bijection between enumeration addresses and
    $\mathrm{Gr}(\rho,s)$.

    A controlled rotation with squared amplitudes proportional to the two
    summands in \eqref{eq:grassmannian-recursive-count}, followed by the
    corresponding recursive preparation, therefore prepares the uniform
    superposition over addresses. In the graph branch, the values of $\varphi$
    are prepared as a uniform state over $\F_q^\rho$. This is the usual
    prefix-counting state-preparation method, applied to the recursive count
    tree~\cite{grover02:state-preparation}.

    From an address, compute the corresponding canonical basis matrix $B_U$
    into an output register. Conversely, from $B_U$ one recovers the unique
    address by testing whether $e_s\in U$ and then applying the same
    decomposition recursively. Thus the address is a
    reversible function of $B_U$. Compute this inverse address into an auxiliary
    register, subtract it from the original address registers, and reverse the
    auxiliary computation. This is Bennett's compute-copy-uncompute pattern for
    reversible simulation~\cite{bennett89:reversible-computation}, and all work
    registers return to $\ket{0}$.

    Each recursive level only updates an echelon basis of width at most $\rho$.
    Over all $s$ levels this costs $\mathscr{O}(s\rho^2)$ field operations and
    $\mathscr{O}(s\rho)$ state-preparation primitives.
\end{proof}

\begin{lemma}[Clean general-linear-group state preparation]
    \label{lem:clean-general-linear-group-preparation}
    For every $r\ge0$, there is a clean circuit
    \begin{equation}
        \ket{0}\ket{0}
        \longmapsto
        \ket{\mathrm{GL}_r}\ket{0}
    \end{equation}
    using $\mathscr{O}(r^3)$ field operations and
    $\mathscr{O}(r^2)$ state-preparation primitives.
\end{lemma}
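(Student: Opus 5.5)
We mirror the Grassmannian construction of Lemma~\ref{lem:clean-grassmannian-preparation}, but use a recursive decomposition of $\mathrm{GL}_r(\F_q)$ column by column. The count $\abs{\mathrm{GL}_r(\F_q)}=\prod_{t=0}^{r-1}(q^r-q^t)$ reflects the choice of the $(t+1)$-st column outside the span of the first $t$ columns. We encode this choice as an address that is uniform over a product set.

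\textbf{Address set.} Fix the standard flag, and let $M_t$ be the $r\times t$ matrix of the first $t$ columns, of rank $t$. Let $P_t$ be the set of pivot rows of the column echelon form of $M_t$, and let $E_t$ be the reduced echelon form itself. A vector $v\in\F_q^r$ lies outside $\operatorname{span}(M_t)$ exactly when its reduction $v-E_t(v|_{P_t})$ is nonzero. That reduction is supported on the $r-t$ non-pivot rows. The new column is therefore encoded by a pair $(\alpha,\beta)$, where $\alpha\in\F_q^t$ gives the span coefficients and $\beta\in\F_q^{r-t}\setminus\{0\}$ gives the residual on the non-pivot coordinates. Taking the column as $v=M_t\alpha+\iota_t(\beta)$, with $\iota_t$ embedding into the non-pivot rows, yields a bijection onto $\F_q^r\setminus\operatorname{span}(M_t)$. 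This gives $q^t(q^{r-t}-1)=q^r-q^t$ choices, as required.

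\textbf{Preparing the address state.} At each stage we prepare $\alpha$ as a clean uniform state over $\F_q^t$. The nonzero vector $\beta$ is prepared by the prefix-counting scheme. Its leading nonzero coordinate position $j$ is chosen by a sequence of controlled single-register rotations with weights proportional to $(q-1)q^{r-t-1-j}$. That coordinate is set uniform over $\F_q^\times$, by a uniform preparation over $\F_q$ followed by a controlled fix-up of the zero case via a rotation on a flag, and the later coordinates are set uniform over $\F_q$. In total this uses $\mathscr{O}(r)$ primitives per column and $\mathscr{O}(r^2)$ overall.

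\textbf{Computing and uncomputing.} From the address we compute $M$ column by column. Each column costs $\mathscr{O}(r\cdot t)$ field operations for $M_t\alpha$, plus $\mathscr{O}(r^2)$ for maintaining the echelon data $(E_t,P_t)$ by an elimination step. This gives $\mathscr{O}(r^3)$ in total. The map is reversible: from $M$ we recover each $(\alpha,\beta)$ by Gaussian elimination on the first $t+1$ columns, again in $\mathscr{O}(r^3)$. We then apply Bennett's compute--copy--uncompute pattern~\cite{bennett89:reversible-computation}. We recompute the address from $M$ into an auxiliary register, subtract it from the original address register, and reverse the auxiliary computation. This leaves $\ket{\mathrm{GL}_r}\ket{0}$, as in panel (b) of Fig.~\ref{fig:rank-shell-preparation-circuit}. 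The case $r=0$ is trivial.

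\textbf{Main obstacle.} The delicate step is making the address map clean and data-independent. The pivot set $P_t$ depends on the branch, so the embedding $\iota_t$ and the elimination must be implemented as controlled reversible arithmetic on fixed padded registers. All the echelon bookkeeping must also be uncomputed inside the inverse map.

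We would first try to sidestep the branch-dependence by using the residual in the fixed coordinates of the complement of the current echelon pivots. This is only valid if every branch has the same count, $q^{r-t}-1$, which is uniform across branches here, so it should work. Correctness of the bijection then follows from uniqueness of the reduced echelon decomposition.
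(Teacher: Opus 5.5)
Your proposal is correct and follows essentially the same route as the paper: the same product address set $\prod_{j=1}^{r}\bigl(\F_q^{j-1}\times(\F_q^{r-j+1}\setminus\{0\})\bigr)$, prefix-counting preparation of the nonzero part (your choice of the leading nonzero position is the unrolled form of the paper's recursive branching on whether the first coordinate is nonzero), and Bennett's compute--copy--uncompute to erase the address. The only difference is the concrete bijection: the paper keeps a running change of basis $T_{j-1}$ that sends the span of the previous columns onto $\operatorname{span}(e_1,\ldots,e_{j-1})$, so the address coordinates are fixed, whereas you use span coefficients $\alpha$ plus a residual $\beta$ placed on the data-dependent non-pivot rows. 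Your version is equally valid, because the count $q^{r-t}-1$ does not depend on the branch. One small point: your uniform-over-$\F_q^\times$ step is better taken as a single-register rotation than as a flag fix-up, since a flag rotation alone cannot remove the zero branch unitarily.
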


\begin{proof}
    Let
    \begin{equation}
        \mathcal U_r
        \coloneqq
        \prod_{j=1}^r
        \left(\F_q^{j-1}\times
        \left(\F_q^{r-j+1}\setminus\Set{0}\right)\right).
    \end{equation}
    Then
    \begin{equation}
        \abs{\mathcal U_r}
        =
        \prod_{j=1}^r q^{j-1}(q^{r-j+1}-1)
        =
        \prod_{j=0}^{r-1}(q^r-q^j)
        =
        \abs{\mathrm{GL}_r(\F_q)}.
    \end{equation}
    We define a bijection $F\colon\mathcal U_r\to\mathrm{GL}_r(\F_q)$. Starting
    with $T_0=I_r$, process $u_j\in \F_q^r\setminus\Span(e_1,\ldots,e_{j-1})$
    sequentially. Set $v_j=T_{j-1}^{-1}u_j$ and make $v_j$ the $j$th column of
    $F(u_1,\ldots,u_r)$. Then let $E_j(u_j)$ be the canonical product of
    elementary row operations that preserves $\Span(e_1,\ldots,e_{j-1})$ and
    sends $u_j$ to a vector in $\Span(e_1,\ldots,e_j)$ with nonzero $e_j$
    component. Set $T_j=E_j(u_j)T_{j-1}$.

    The resulting columns are linearly independent. Conversely, given an
    invertible matrix with columns $v_1,\ldots,v_r$, recover the address by
    starting from $T_0=I_r$, setting $u_j=T_{j-1}v_j$, and applying the same
    update rule for $T_j$. Since $v_j$ is not in the span of the previous
    columns, $u_j$ lies in $\F_q^r\setminus\Span(e_1,\ldots,e_{j-1})$. Hence $F$
    is a bijection.

    Prepare a uniform superposition over $\mathcal U_r$. The free prefix
    coordinates are uniform $\F_q$ registers. The nonzero suffix vector in
    dimension $d$ is prepared by branching on whether the first coordinate is
    nonzero, with branch counts $(q-1)q^{d-1}$ and $q^{d-1}-1$, and recursing in
    the zero branch. This uses $\mathscr{O}(r^2)$ state-preparation primitives
    over all columns.

    Compute $F(u_1,\ldots,u_r)$ into an output matrix register. Since $F$ and
    its inverse are both computable in $\mathscr{O}(r^3)$ field operations, the
    address registers can be erased by computing the inverse address from the
    output matrix, subtracting it from the original address, and uncomputing the
    auxiliary inverse computation. This again uses Bennett's reversible
    compute-copy-uncompute pattern. The output is the desired clean uniform
    state over $\mathrm{GL}_r(\F_q)$.
\end{proof}

\begin{proof}[Proof of Lemma~\ref{lem:rank-shell-dicke-preparation}]
    We first describe an exact circuit in the model where reversible
    $\F_q$-arithmetic, clean uniform $\F_q$-register preparations, and
    controlled rotations with prescribed algebraic amplitudes are elementary
    operations. The last sentence of the lemma follows by standard finite-field
    arithmetic and rotation synthesis.

    We now prepare the three registers
    \begin{equation}
        \ket{\mathrm{Gr}(r,m)}\,
        \ket{\mathrm{Gr}(r,n)}\,
        \ket{\mathrm{GL}_r},
    \end{equation}
    using Lemmas~\ref{lem:clean-grassmannian-preparation} and
    \ref{lem:clean-general-linear-group-preparation}. These preparations are clean, so all their
    internal work registers have returned to $\ket{0}$. We then compute an
    output matrix register
    \begin{equation}
        e=B_U M B_V^\top .
    \end{equation}
    The map
    \begin{equation}
        (U,V,M)\longmapsto B_U M B_V^\top
    \end{equation}
    is a bijection from
    $\mathrm{Gr}(r,m)\times\mathrm{Gr}(r,n)\times\mathrm{GL}_r(\F_q)$ onto
    $\Gamma_r$. Indeed, the column space and row space of $e$ are respectively
    $U$ and $V$, and after the canonical bases $B_U,B_V$ are fixed, the
    coordinate matrix $M$ is uniquely determined. Conversely, every rank-$r$
    matrix has a column space $U$, a row space $V$, and a unique invertible
    coordinate matrix in these two bases. Hence
    \begin{equation}
        \abs{\Gamma_r}
        =
        \binom{m}{r}_q\binom{n}{r}_q\abs{\mathrm{GL}_r(\F_q)}
    \end{equation}
    and, after computing $e$, the joint state is a uniform superposition over
    triples mapping bijectively to $\Gamma_r$.

    It remains to erase the work registers. From $e$ we compute, reversibly, the
    canonical bases of its column and row spaces by Gaussian elimination. Let
    these be $B_U$ and $B_V$. The same elimination also gives left inverses
    $L_U\in\F_q^{r\times m}$ and $L_V\in\F_q^{r\times n}$ satisfying $L_U
        B_U=I_r$ and $L_V B_V=I_r$. Then
    \begin{equation}
        M=L_U e L_V^\top .
    \end{equation}
    Thus the triple $(B_U,B_V,M)$ is a reversible function of $e$. Compute this
    triple into an auxiliary register, subtract it from the existing work
    registers to zero them, and reverse the auxiliary computation. The output
    register is then left in the state
    \begin{equation}
        \frac{1}{\sqrt{\abs{\Gamma_r}}}\sum_{e\in\Gamma_r}\ket{e}
        =
        \ket{R_r}.
    \end{equation}

    The multiplication $B_U M B_V^\top$ and the inverse computation from $e$
    require $\mathscr{O}(mnr)$ field operations by Gaussian elimination stopped
    after $r$ pivots and by rectangular matrix multiplication with an
    $r$-dimensional inner index. The auxiliary Grassmannian and $\mathrm{GL}_r$
    preparations cost $\mathscr{O}((m+n)r^2+r^3)$ field operations and
    $\mathscr{O}((m+n)r+r^2)$ state-preparation primitives. Since $r\le
        D=\min(m,n)$, these terms are bounded by $\mathscr{O}(mnr)$. For the
    controlled map, pad all work registers to the size needed for $r=\ell$. The
    rank-indexed preparation subroutines are implemented coherently by guarding
    their recursive branches with reversible comparisons against the input value
    of $r$. Each such comparison computes a predicate of $r$ into a temporary
    flag, uses that flag to control the relevant branch, and then uncomputes the
    flag. Thus the rank register is used only as a control, and the controlled
    unitary has the block-diagonal form $\sum_{r=0}^{\ell}\ket{r}\bra{r}\otimes
        U_r$ on the rank register and padded work space. In particular, $\sum_r
        c_r\ket{r}\ket{0}$ is mapped to $\sum_r c_r\ket{r}\ket{R_r}$ with no
    disturbance of the rank register. The multiplication, canonicalisation,
    subtraction, and uncomputation can be run as fixed padded reversible
    arithmetic. Taking the worst case over $0\le r\le\ell$ gives the controlled
    complexity $\mathscr{O}(mn\ell)$.

    Finally, there are $\mathscr{O}(mn\ell)$ state-preparation primitives.
    Synthesising each rotation to accuracy $\mathscr{O}(\varepsilon/(mn\ell))$
    gives total error at most $\varepsilon$ by the standard hybrid bound.
    Standard reversible finite-field arithmetic over $\F_q$ has $\polylog(q)$
    elementary-gate overhead in the usual binary
    encodings~\cite{beauregard03:galois-arithmetic,
        amento13:binary-field-inversion}. This yields
    $\tilde{\mathscr{O}}(mn\ell\,\polylog(q/\varepsilon))$ elementary gates.
\end{proof}

\subsection{Proof of Lemma~\ref{lem:rank-score-first-fourier-mode}
    (Rank score as a first Fourier mode)}
\label{subsec:proof-rank-score-first-fourier-mode}

\begin{proof}
    Let $z\in\F_q^{m\times n}$, $a\in\F_q^m$, and $b\in\F_q^n$, and consider the
    scalar bilinear form $a^\top z\,b\in\F_q$. Define the function
    \begin{equation}\label{eq:q-minus-rank-score}
        g(z)
        \coloneqq
        \mathbb{E}_{a\in\F_q^m,\ b\in\F_q^n}\Big[
            \adchar_z(ab^\top)
            \Big]
        \;=\;
        \frac{1}{q^{m+n}}\sum_{a\in\F_q^m}\sum_{b\in\F_q^n}\adchar_z(ab^\top).
    \end{equation}
    Let us consider the sum over the vector $a$ for a fixed vector $b$
    \begin{equation}
        \sum_{a\in\F_q^m}\sum_{b\in\F_q^n}\adchar_z(ab^\top)
        =
        \sum_{b\in\F_q^n}\Big(\sum_{a\in\F_q^m}\chi\big(a^\top(zb)\big)\Big).
    \end{equation}
    The inner sum equals $q^m$ if $zb=0$ and $0$ otherwise. Hence
    \begin{equation}
        \sum_{a\in\F_q^m}\sum_{b\in\F_q^n}\chi(a^\top z b)
        =
        q^m\,\abs{\ker(z)}
        =
        q^m\,q^{\,n-\rank(z)}
        =
        q^{m+n-\rank(z)}.
    \end{equation}
    where we have used $\abs{\ker(z)}=q^{\dim(\ker z)}=q^{\,n-\rank(z)}$.
    Finally, dividing by $q^{m+n}$, we obtain the first half of the proof
    \begin{equation}
        \label{eq:rank-matrix-score-function}
        q^{-\rank(z)} = g(z)
    \end{equation}

    The second part of the proof revolves around counting in
    \eqref{eq:q-minus-rank-score} how many matrices $ab^\top$ of rank $0$ and
    $1$ emerge from the summation over $(a,b)$.

    If $a=0$ or $b=0$, then $ab^\top=0$, so the summand equals $1$. The number
    of such pairs is
    \begin{equation}
        q^n+q^m-1.
    \end{equation}

    If $a\neq 0$ and $b\neq 0$, then $ab^\top$ has rank $1$. Every nonzero
    rank-one matrix $r=ab^\top$ can be obtained in exactly $q-1$ ways, since
    \begin{equation}
        \label{eq:rank-one-matrix-gauge}
        (\alpha a)(\alpha^{-1}b)^\top=ab^\top
        \qquad (\alpha\in\F_q ^\times).
    \end{equation}
    Hence
    \begin{equation}
        \sum_{a\in\F_q^m}\sum_{b\in\F_q^n}
        \adchar_z(ab^\top)
        =
        (q^m+q^n-1)
        +
        (q-1)\sum_{r \in \Gamma_1}
        \adchar_z(r).
    \end{equation}
    We finally conclude from the definition of $\lambda_1(t)$ in
    Eq.~\eqref{eq:distance-matrix-character-sum} and obtain
    \eqref{eq:q-minus-rank-character-score}.
\end{proof}

\subsection{Proof of Lemma~\ref{lem:rank-score-uniform-moments}
    (Uniform expectation and variance of the rank score)}
\label{subsec:proof-rank-score-uniform-moments}

\begin{proof}
    Let $z$ be uniform in $\F_q^{m\times n}$. From
    \eqref{eq:rank-matrix-score-function} we have the identity
    \begin{equation}\label{eq:q-minus-rank-character-expansion}
        q^{-\rank(z)}
        =\mathbb{E}_{a\in\F_q^m,\ b\in\F_q^n}\Big[
            \adchar_z(ab^\top)
            \Big].
    \end{equation}
    Taking expectation over random $z$ and swapping expectations,
    \begin{equation}
        \mathbb{E}_z\big[q^{-\rank(z)}\big]
        =
        \mathbb{E}_{a,b}\Big[\mathbb{E}_z [
                    \adchar_z(ab^\top)
                ]\Big].
    \end{equation}
    For fixed $(a,b)$, the scalar $a^\top z b=\sum_{i,j} a_i z_{ij} b_j$ is a
    linear form in the entries of $z$. If $a=0$ or $b=0$, then $a^\top z b=0$
    for all $z$, so $\mathbb{E}_z[\adchar_z(ab^\top)]=1$. If $a\neq 0$ and
    $b\neq 0$, then the linear form is nontrivial, and by additive-character
    orthogonality its average over uniform $z$ is $0$. Hence
    \begin{equation}
        \mathbb{E}_z[
                \adchar_z(ab^\top)
            ]=\indicator_{{0}}(ab^\top).
    \end{equation}
    Therefore
    \begin{equation}
        \label{eq:q-minus-rank-first-moment-proof}
        \begin{split}
            \mathbb{E}_z\big[q^{-\rank(z)}\big] & = \frac{\abs*{\Set{(a,b) \given ab^\top=0}}}{q^{n+m}} \\
                                                & =\Pr_{a,b}[ab^\top=0]                                 \\
                                                & =\Pr[a=0]+\Pr[b=0]-\Pr[a=0 \cap b=0]
            =q^{-m}+q^{-n}-q^{-(m+n)}.
        \end{split}
    \end{equation}

    As for the variance, we first compute $\mathbb{E}_z[q^{-2\rank(z)}]$. Using
    \eqref{eq:rank-matrix-score-function}, we get
    \begin{equation}
        q^{-2\rank(z)}
        =
        \mathbb{E}_{a,b}\mathbb{E}_{a',b'}
        \Big[
            \adchar_z(ab^\top+a'b'^{\!\top})
            \Big].
    \end{equation}
    Taking expectation over random $z$ and using additive-character
    orthogonality,
    \begin{equation}
        \mathbb{E}_z[q^{-2\rank(z)}]
        =
        \Pr_{a,b,a',b'}\!\big[ab^\top+a'b'^{\!\top}=0\big].
    \end{equation}

    Set $r\coloneqq ab^\top$ and $r'\coloneqq a'b'^{\!\top}$. Then $r,r'$ are
    either $0$ or rank-one, and the condition is $r'=-r$.

    Now $r=0$ if and only if $a=0$ or $b=0$, hence $\abs*{\Set{(a,b) \given
                ab^\top=0}}=q^m+q^n-1$. Therefore the number of quadruples with $r=r'=0$ is
    $(q^m+q^n-1)^2$.

    As explained in \eqref{eq:rank-one-matrix-gauge}, every nonzero rank-one
    matrix $r$ has exactly $q-1$ factorisations and thus the number of nonzero
    rank-one matrices is $\frac{(q^m-1)(q^n-1)}{q-1}$. For each such $r$, there
    are $q-1$ choices of $(a,b)$ with $ab^\top=r$, and $q-1$ choices of
    $(a',b')$ with $a'b'^{\!\top}=-r$. Thus the number of quadruples with $r\neq
        0$ and $r'=-r$ is

    \begin{equation}
        \frac{(q^m-1)(q^n-1)}{q-1}\,(q-1)^2
        =
        (q-1)(q^m-1)(q^n-1).
    \end{equation}

    Combining the two contributions,
    \begin{equation}
        \abs*{\Set{(a,b,a',b') \given ab^\top+a'b'^{\!\top}=0}}
        =
        (q^m+q^n-1)^2+(q-1)(q^m-1)(q^n-1).
    \end{equation}
    and dividing by $q^{2m+2n}$ gives
    $\Pr_{a,b,a',b'}\!\big[ab^\top+a'b'^{\!\top}=0\big]$ and thus the second
    moment
    \begin{equation}
        \label{eq:q-minus-rank-second-moment-proof}
        \mathbb{E}_z[q^{-2\rank(z)}]
        =
        \bigl(q^{-m}+q^{-n}-q^{-(m+n)}\bigr)^2
        +
        (q-1)\,q^{-(m+n)}(1-q^{-m})(1-q^{-n}).
    \end{equation}
    Combining the second moment with the average from
    Eq.~\eqref{eq:q-minus-rank-first-moment-proof}, the variance finally reads
    \begin{equation}
        \mathrm{Var}_z\!\big[q^{-\rank(z)}\big]
        =
        (q-1)\,q^{-(m+n)}(1-q^{-m})(1-q^{-n}).
    \end{equation}
\end{proof}

\subsection{Proof of Lemma~\ref{lem:rank-metric-proxy-score-observable-moments}
    (Expectation and variance of the rank-metric proxy-score observable)}
\label{subsec:proof-rank-metric-proxy-score-observable-moments}

\begin{proof}
    Let
    \begin{equation}
        \mathcal{L}\coloneqq
        \sum_{x\in\Omega}
        \lambda_1(\rank(\Delta_y(x)))\ket{x}\bra{x}
    \end{equation}
    and
    \begin{equation}
        \gamma\coloneqq
        \frac{1}{\sqrt{\abs{\Gamma_1}}} =
        \sqrt{\frac{q-1}{(q^m-1)(q^n-1)}}.
    \end{equation}
    Then $\mathcal{O}_q=\gamma \mathcal{L}$. Since $\Phi$ identifies $\Omega$
    bijectively with $\Code$, we have $\abs{\Omega}=\abs{\Code}$. Moreover, $2
        \ell < d^\perp$, and so Lemma~\ref{lem:shell-orthogonality-dual-distance}
    implies
    \begin{equation}
        \braket{\psi_6}{\psi_6}  = \sum_{k=0}^\ell \abs{w_k}^2 =1.
    \end{equation}

    We first compute the expectation of $\mathcal{L}$. Inserting the rank-metric
    specialisation of $\ket{\psi_6}$ from Eq.~\eqref{eq:dqi-final-state}, we
    obtain
    \begin{equation}\label{eq:dqi-observable-expectation-proof}
        \begin{split}
            \bra{\psi_6}\mathcal{L}\ket{\psi_6}
             & =\bra{\psi_6} \sum_{x\in \Omega} \lambda_1(\rank(\Delta_y(x)))\ket{x}\bra{x} \ket{\psi_6}                                                                                                                                              \\
             & = \sum_{k,k^\prime=0}^\ell \frac{w_k w^*_{k^\prime}}{\abs{\Omega}\sqrt{\abs{\Gamma_k}\abs{\Gamma_{k^\prime}}}} \sum_{x\in \Omega} \lambda_1(\rank(\Delta_y(x)))\lambda_k(\rank(\Delta_y(x)))\lambda^*_{k^\prime}(\rank(\Delta_y(x)))   \\
             & = \sum_{k,k^\prime=0}^\ell \frac{w_k w^*_{k^\prime}}{\abs{\Omega}\sqrt{\abs{\Gamma_k}\abs{\Gamma_{k^\prime}}}} \sum_{c\in \Code} \lambda_1(\rank(c-y))\lambda_k(\rank(c-y))\lambda^*_{k^\prime}(\rank(c-y))                            \\
             & = \sum_{k,k^\prime=0}^\ell \frac{w_k w^*_{k^\prime}}{\abs{\Omega}\sqrt{\abs{\Gamma_k}\abs{\Gamma_{k^\prime}}}}                                                                                                                         \\
             & \times \sum_{c\in \Code} \left(b_{k-1}\lambda_{k-1}(\rank(c-y)))+a_k \lambda_k(\rank(c-y)) + c_{k+1}\lambda_{k+1}(\rank(c-y))\right)\lambda^*_{k^\prime}(\rank(c-y))                                                                   \\
             & = \frac{\abs{\Code}}{\abs{\Omega}} \sum_{k,k^\prime=0}^\ell w_k w^*_{k^\prime}\sqrt{\frac{\abs{\Gamma_{k^\prime}}}{\abs{\Gamma_{k}}}}\left(b_{k-1} \delta_{k-1,k^\prime}+a_k \delta_{k,k^\prime} + c_{k+1}\delta_{k+1,k^\prime}\right) \\
             & = w^\dagger \tilde{A}^{(\ell)} w .
        \end{split}
    \end{equation}
    Here we used $\Code=\Im(\Phi)=\Phi(\Omega)$, the three-term character
    recurrence \eqref{eq:character-sum-three-term-recurrence}, the low-shell
    orthogonality lemma~\ref{lem:shell-orthogonality-dual-distance}, and finally
    $\abs{\Code}=\abs{\Omega}$. Multiplying by $\gamma$ gives
    Eq.~\eqref{eq:dqi-observable-expectation-jacobi}.

    It remains to compute the second moment of $\mathcal{L}$:
    \begin{equation}\label{eq:dqi-observable-square-expectation}
        \begin{split}
            \mathbb{E}_{\psi_6}[\mathcal{L}^2]
             & \coloneqq
            \bra{\psi_6}\mathcal{L}^2\ket{\psi_6}
            =
            \bra{\psi_6} \sum_{x\in \Omega} \lambda_1(\rank(\Delta_y(x)))^2\ket{x}\bra{x} \ket{\psi_6}                        \\
             & = \sum_{k,k^\prime=0}^\ell \frac{w_k w^*_{k^\prime}}{\abs{\Omega}\sqrt{\abs{\Gamma_k}\abs{\Gamma_{k^\prime}}}}
            \sum_{x\in \Omega} \lambda_1(\rank(\Delta_y(x)))^2
            \lambda_k(\rank(\Delta_y(x)))\lambda^*_{k^\prime}(\rank(\Delta_y(x)))                                             \\
             & = \sum_{k,k^\prime=0}^\ell \frac{w_k w^*_{k^\prime}}{\abs{\Omega}\sqrt{\abs{\Gamma_k}\abs{\Gamma_{k^\prime}}}}
            \sum_{c\in \Code} \lambda_1(\rank(c-y))^2
            \lambda_k(\rank(c-y))\lambda^*_{k^\prime}(\rank(c-y))                                                             \\
             & =
            \sum_{k,k^\prime=0}^\ell \frac{w_k w^*_{k^\prime}}{\abs{\Omega}\sqrt{\abs{\Gamma_k}\abs{\Gamma_{k^\prime}}}}
            \sum_{c\in \Code}     \Bigl(
            b_{k-1}\lambda_{k-1}(\rank(c-y))
            +a_k\lambda_k(\rank(c-y))
            +c_{k+1}\lambda_{k+1}(\rank(c-y))
            \Bigr)                                                                                                            \\ &\quad\times \lambda_1(\rank(c-y))\lambda^*_{k^\prime}(\rank(c-y))\\
             & =
            \sum_{k,k^\prime=0}^\ell \frac{w_k w^*_{k^\prime}}{\abs{\Omega}\sqrt{\abs{\Gamma_k}\abs{\Gamma_{k^\prime}}}}
            \sum_{c\in \Code}
            \Bigl[
                b_{k-1}b_{k-2}\lambda_{k-2}
                +b_{k-1}(a_{k-1}+a_k)\lambda_{k-1} +(b_{k-1}c_k+a_k^2+c_{k+1}b_k)\lambda_k\\
                &\quad
                +c_{k+1}(a_k+a_{k+1})\lambda_{k+1}
                +c_{k+1}c_{k+2}\lambda_{k+2}
                \Bigr](\rank(c-y))
            \lambda^*_{k^\prime}(\rank(c-y))                                                                                  \\
             &
            =\frac{\abs{\Code}}{\abs{\Omega}}
            \sum_{k,k^\prime=0}^\ell
            w_k w_{k^\prime}^*
            \sqrt{\frac{\abs{\Gamma_{k^\prime}}}{\abs{\Gamma_k}}}
            \Bigl[
                b_{k-1}b_{k-2}\delta_{k-2,k^\prime}
                +b_{k-1}(a_{k-1}+a_k)\delta_{k-1,k^\prime}\\
                &\quad
                +(b_{k-1}c_k+a_k^2+c_{k+1}b_k)\delta_{k,k^\prime}
                +c_{k+1}(a_k+a_{k+1})\delta_{k+1,k^\prime}+c_{k+1}c_{k+2}\delta_{k+2,k^\prime}
                \Bigr].
        \end{split}
    \end{equation}

    Compared to the derivation \eqref{eq:dqi-observable-expectation-proof}, we
    have applied the three-term recurrence
    \eqref{eq:character-sum-three-term-recurrence} twice, explaining why terms
    $\Set{k-2,k-1,k,k+1,k+2}$ do arise. The orthogonality of the code sum
    lemma~\ref{lem:shell-orthogonality-dual-distance} can be invoked if
    $k+k^\prime+2 \leq 2\ell +2 < d^\perp$ which strengthens the constraint of
    the average.

    The pentadiagonal matrix arising in the last line of
    \eqref{eq:dqi-observable-square-expectation} is almost the square of
    $\tilde{A}^{(\ell)}$ with a subtlety in the bottom right coefficient
    $(\ell+1,\ell+1)$ and we find
    \begin{equation}
        \label{eq:dqi-observable-square-tridiagonal}
        \mathbb{E}_{\psi_6}[\mathcal{L}^2]
        = w^\dagger \left(\tilde{A}^{(\ell)}\right)^2 w +\abs{w_\ell}^2 b_{\ell}c_{\ell+1}
    \end{equation}
    Combining Eq.~\eqref{eq:dqi-observable-square-tridiagonal} with
    Eq.~\eqref{eq:dqi-observable-expectation-proof} and $\mathcal{O}_q=\gamma
        \mathcal{L}$, the variance of $\mathcal{O}_q$ reads
    \begin{equation}
        \operatorname{Var}_{\psi_6}[\mathcal{O}_q]
        =
        \frac{1}{\abs{\Gamma_1}}
        \left(
        w^\dagger \left(\tilde{A}^{(\ell)}\right)^2 w
        - \left(w^\dagger \tilde{A}^{(\ell)} w\right)^2
        + \abs{w_\ell}^2 b_{\ell}c_{\ell+1}
        \right).
    \end{equation}

\end{proof}

\subsection{Proof of Lemma~\ref{lem:truncated-rank-jacobi-spectral-bounds}
    (Spectral bounds for the truncated rank Jacobi operator)}
\label{subsec:proof-truncated-rank-jacobi-spectral-bounds}

\begin{proof}
    We first use interlacing to obtain the upper bound. The matrix $\tilde
        A^{(\ell)}$ is the $(\ell+1)\times(\ell+1)$ principal submatrix of $\tilde
        A$. The eigenvalues of $\tilde A$ are the decreasing sequence $\lambda_1(t)$
    for $t\in[0,D]$, with
    \begin{equation}
        \lambda_1(t) = \frac{q^{m+n-t}-q^m-q^n+1}{q-1}.
    \end{equation}
    By Cauchy's interlacing theorem and
    Eq.~\eqref{eq:jacobi-ladder-factorisation},
    \begin{equation}
        \lambda_1(D-\ell)
        \leq
        \abs{\Gamma_1}
        -
        \frac{q^{m+n}}{q-1} \mu^{(\ell)}_{\min}
        \leq
        \lambda_1(0).
    \end{equation}
    Since $\abs{\Gamma_1}=\lambda_1(0)$, this is equivalent to
    \begin{equation}
        0 \leq \mu^{(\ell)}_{\min} \leq 1-q^{\ell-D}.
    \end{equation}
    Interlacing also gives $\mu_{\min}^{(\ell)}\ge\mu_{\min}^{(\ell+1)}$, so
    increasing $\ell$ can only decrease the optimised effective-rank proxy.

    For the lower bound, recall the exact coefficients defining the tridiagonal
    matrix $\tilde{A}$ from Eqs.~\eqref{eq:rank-b-coefficient},
    \eqref{eq:rank-c-coefficient}:
    \begin{equation}
        b_i
        =
        \frac{(q^m-q^i)(q^n-q^i)}{q-1},
        \qquad
        c_i
        =
        \frac{(q^i-1)q^{i-1}}{q-1}.
    \end{equation}
    If $\ell=0$, then $B^{(0)}=(\sqrt{b_0})$ and hence
    \begin{equation}
        \mu_{\min}^{(0)}
        =
        \frac{q-1}{q^{m+n}}b_0
        =
        (1-q^{-m})(1-q^{-n}),
    \end{equation}
    which is exactly the claimed lower bound. We therefore assume $\ell\ge1$.
    Denote
    \begin{equation}
        \alpha
        \coloneqq
        \min_{0\le i\le \ell}\sqrt{b_i},
        \qquad
        \beta
        \coloneqq
        \max_{1\le i\le \ell}\sqrt{c_i}.
    \end{equation}
    Writing $B^{(\ell)}$ as its diagonal part minus its strictly upper
    bidiagonal part, the diagonal part has smallest singular value at least
    $\alpha$, and the strictly upper bidiagonal part has operator norm at most
    $\beta$. Hence, by the reverse triangle inequality, for every $w$,
    \begin{equation}
        \|B^{(\ell)}w\|_2
        \ge
        (\alpha-\beta)\|w\|_2.
    \end{equation}
    Since $b_i$ is decreasing and $c_i$ is increasing in $i$, we have
    $\alpha=\sqrt{b_\ell}$ and $\beta=\sqrt{c_\ell}$. Moreover,
    $\ell<D=\min(m,n)$ implies
    \begin{equation}
        b_\ell-c_\ell
        =
        \frac{(q^m-q^\ell)(q^n-q^\ell)-(q^\ell-1)q^{\ell-1}}{q-1}.
    \end{equation}
    Since $m,n\ge\ell+1$ and $q\ge2$, the numerator is at least
    \begin{equation}
        q^{2\ell}(q-1)^2-(q^\ell-1)q^{\ell-1}
        =
        q^{\ell-1}\bigl(q^\ell(q(q-1)^2-1)+1\bigr)
        >0,
    \end{equation}
    so $\alpha>\beta$. Therefore
    \begin{equation}
        \mu_{\min}^{(\ell)}
        \ge
        \frac{q-1}{q^{m+n}}(\alpha-\beta)^2.
    \end{equation}
    Thus
    \begin{equation}
        \mu^{(\ell)}_{\min}\ge
        \left(
        \sqrt{(1-q^{\ell-m})(1-q^{\ell-n})}
        -
        \sqrt{(q^\ell-1)q^{\,\ell-1-m-n}}
        \right)^2.
    \end{equation}
    It remains only to verify $\alpha-\beta\ge0$. Using the formulas for
    $b_\ell$ and $c_\ell$,
    \begin{equation}
        \frac{q-1}{q^{m+n}}(b_\ell-c_\ell)
        >
        1-q^{\ell-m}-q^{\ell-n}.
    \end{equation}
    Since $\ell<D=\min(m,n)$, we have $q^{\ell-n}+q^{\ell-m}\le 2/q\le1$.
    Therefore $\alpha-\beta\ge0$.
\end{proof}

\subsection{Proof of Lemma~\ref{lem:boundary-concentration-optimised-rank-shell-weights}
    (Boundary concentration of optimised rank-shell weights)}
\label{subsec:proof-boundary-concentration-optimised-rank-shell-weights}

\begin{proof}
    It is enough to treat the case $n\le m$; the case $m<n$ follows by
    transposing matrices. Set $D=n$, $s=m-n$, and $\varepsilon_\ell\coloneqq
        q^{\ell-D}$. We work with the normalised ladder Hamiltonian
    \begin{equation}
        \bar L^{(\ell)}
        \coloneqq
        \frac{q-1}{q^{m+n}}(B^{(\ell)})^\dagger B^{(\ell)}.
    \end{equation}
    Its coefficients are
    \begin{equation}
        \bar b_i\coloneqq \frac{q-1}{q^{m+n}}b_i
        =
        (1-q^{i-m})(1-q^{i-n}),
        \qquad
        \bar c_i\coloneqq \frac{q-1}{q^{m+n}}c_i
        =
        (q^i-1)q^{i-1-m-n}.
    \end{equation}
    Hence $\bar L^{(\ell)}$ has diagonal entries $\bar b_i+\bar c_i$ and
    off-diagonal entries $-\sqrt{\bar b_i\bar c_{i+1}}$.

    Let $e_r$, $r\ge0$, be the standard basis of $\ell^2(\mathbb N_0)$, and let
    $P_\ell$ be the projection onto
    $\operatorname{span}\Set{e_0,\ldots,e_\ell}$. On this finite-dimensional
    subspace define the reversal unitary $U_\ell e_i=e_{\ell-i}$. The next
    operator is the zoomed-in edge model: $I-\bar L^{(\ell)}$ measures the
    deficit from the limiting value $1$, the reversal moves the cutoff shell to
    $r=0$, and the factor $\varepsilon_\ell^{-1}$ keeps the boundary entries
    finite. We consider the boundary-scaled operator
    \begin{equation}
        T_\ell
        \coloneqq
        \varepsilon_\ell^{-1}
        P_\ell(I-U_\ell\bar L^{(\ell)}U_\ell^\dagger)P_\ell,
    \end{equation}
    extended by zero on $(I-P_\ell)\ell^2(\mathbb N_0)$. If $w^{(\ell)}$ is a
    normalised eigenvector of $\bar L^{(\ell)}$ with eigenvalue
    $\mu_{\min}^{(\ell)}$, then $u^{(\ell)}=U_\ell w^{(\ell)}$ is a normalised
    eigenvector of $T_\ell$ with eigenvalue
    $\varepsilon_\ell^{-1}(1-\mu_{\min}^{(\ell)})$.

    In the reversed coordinate $r=\ell-i$, the nonzero entries of $T_\ell$ are
    \begin{equation}
        (T_\ell)_{rr}
        =
        \varepsilon_\ell^{-1}
        (1-\bar b_{\ell-r}-\bar c_{\ell-r})
    \end{equation}
    and, for $1\le r\le \ell$,
    \begin{equation}
        (T_\ell)_{r,r-1}
        =
        \varepsilon_\ell^{-1}
        \sqrt{\bar b_{\ell-r}\bar c_{\ell-r+1}},
        \qquad
        (T_\ell)_{r-1,r}=(T_\ell)_{r,r-1}.
    \end{equation}
    For each fixed $r$ and fixed $s=m-n$, take any sequence with
    $\ell\to\infty$ and $D-\ell\to\infty$. Then
    $\varepsilon_\ell=q^{\ell-D}\to0$, and the exact coefficient formulas give
    \begin{equation}
        \varepsilon_\ell^{-1}
        (1-\bar b_{\ell-r}-\bar c_{\ell-r})
        =
        (1+q^{-s})q^{-r}+o(1),
    \end{equation}
    and
    \begin{equation}
        \varepsilon_\ell^{-1}
        \sqrt{\bar b_{\ell-r}\bar c_{\ell-r+1}}
        =
        q^{-(s+1)/2}q^{-(r-1)} + o(1).
    \end{equation}
    By symmetry, the coefficient multiplying $u_{r+1}$ in row $r$ converges to
    $q^{-(s+1)/2}q^{-r}$. Moreover, after extending $T_\ell$ by zero for
    $r>\ell$, the same formulas are dominated by summable row and column bounds
    of order $q^{-r}$, and the tail for $r>\ell$ is $O(q^{-\ell})$. Schur's test
    therefore gives norm convergence: one first fixes a finite boundary window,
    where the entries converge uniformly, and then lets the window size grow,
    using the uniform tail bound. Thus
    \begin{equation}
        T_\ell\to \mathcal K_s
        \qquad
        \text{on } \ell^2(\mathbb N_0),
    \end{equation}
    where $\mathcal K_s$ is the boundary operator defined in
    Sec.~\ref{subsec:concentration-truncation-optimised-rank-shell-weights}.

    The row sums of $\mathcal K_s$ are $O(q^{-r})$, so $\mathcal K_s$ is the
    norm limit of its finite truncations and hence compact. It is self-adjoint,
    preserves the positive cone, and the positive off-diagonal entries connect
    every neighbouring pair of basis vectors, making it irreducible. Jentzsch's
    infinite-dimensional Perron-Frobenius theorem implies that its top
    eigenvalue $\kappa_s$ is simple and has a normalised strictly positive
    eigenvector $u^{(s)}$. For each $\ell$, the eigenvalue
    $\varepsilon_\ell^{-1}(1-\mu_{\min}^{(\ell)})$ is the largest eigenvalue of
    $T_\ell$, because $\mu_{\min}^{(\ell)}$ is the smallest eigenvalue of $\bar
        L^{(\ell)}$. Since $\kappa_s$ is isolated, standard perturbation theory for
    self-adjoint compact operators gives
    \begin{equation}
        \varepsilon_\ell^{-1}(1-\mu_{\min}^{(\ell)})\to\kappa_s,
        \qquad
        u^{(\ell)}\to u^{(s)}
        \quad\text{in }\ell^2(\mathbb N_0).
    \end{equation}
    This proves both the boundary-profile convergence and
    \begin{equation}
        \mu_{\min}^{(\ell)}
        =
        1-\kappa_s q^{\ell-D}+o(q^{\ell-D}).
    \end{equation}

    It remains to quantify how much mass lies away from the boundary, uniformly
    in the boundary-window width. Let $P_J$ project onto the first $J+1$ basis
    vectors and write $Q_J=I-P_J$. The entry estimates above give a constant
    $C_s'>0$ such that, for all sufficiently large $\ell$ and all $0\le
        J\le\ell$,
    \begin{equation}
        \|Q_JT_\ell Q_J\|+\|Q_JT_\ell P_J\|
        \le
        C_s' q^{-J}.
    \end{equation}
    Indeed, in reversed coordinates the entries of $T_\ell$ are tridiagonal and
    bounded by $C_s' q^{-r}$ in row and column $r$, uniformly in $\ell$.
    Applying $Q_J$ to $T_\ell u^{(\ell)}=\kappa_\ell u^{(\ell)}$ gives
    \begin{equation}
        (\kappa_\ell I-Q_JT_\ell Q_J)Q_Ju^{(\ell)}
        =
        Q_JT_\ell P_Ju^{(\ell)}.
    \end{equation}
    Since $\kappa_\ell\to\kappa_s>0$, for all sufficiently large $\ell$ we have
    $\kappa_\ell\ge\kappa_s/2$. For all $J$ above a constant depending only on
    $q$ and $s$, the inverse on the left has norm at most $4/\kappa_s$.
    Enlarging the constant to cover the remaining finite values of $J$, we
    obtain, uniformly for $0\le J\le\ell$,
    \begin{equation}
        \|Q_Ju^{(\ell)}\|_2
        \le
        C_s q^{-J}.
    \end{equation}
    Therefore
    \begin{equation}
        \sum_{i=0}^{\ell-J-1}\abs*{w_i^{(\ell)}}^2
        =
        \sum_{r>J}\abs*{u_r^{(\ell)}}^2
        =
        \|Q_Ju^{(\ell)}\|_2^2
        \le
        C_s q^{-2J},
    \end{equation}
    after increasing $C_s$ if necessary.

    Finally suppose that $s=m-n\to\infty$. The same entry estimates show
    directly that $T_\ell$ converges in norm to the diagonal compact operator
    \begin{equation}
        \mathcal K_\infty e_r=q^{-r}e_r,
    \end{equation}
    whenever $s\to\infty$ and $\varepsilon_\ell\to0$. Indeed, the off-diagonal
    entries are $O(q^{-(s+1)/2}q^{-r})$, the diagonal correction from $q^{-r}$
    is $O(q^{-s}q^{-r})$, and the row and column tails remain uniformly
    $O(q^{-r})$. The top eigenvalue of $\mathcal K_\infty$ is $1$, its top
    eigenvector is $e_0$, and the spectral gap is $1-q^{-1}$. Perturbation
    theory therefore gives $u^{(\ell)}\to e_0$. Equivalently,
    $\abs*{w_\ell^{(\ell)}}^2\to1$.
\end{proof}

\subsection{Proof of Corollary~\ref{cor:effective-rank-proxy-boundary-truncation}
    (Effective-rank proxy stability under boundary truncation)}
\label{subsec:proof-effective-rank-proxy-boundary-truncation}

\begin{proof}
    We use the notation from the proof of
    Lemma~\ref{lem:boundary-concentration-optimised-rank-shell-weights}. Let
    $P_J$ project onto $\operatorname{span}\Set{e_0,\ldots,e_J}$ in the reversed
    boundary coordinate, and set
    \begin{equation}
        \widehat u^{(\ell,J)}
        \coloneqq
        \frac{P_Ju^{(\ell)}}{\|P_Ju^{(\ell)}\|_2}.
    \end{equation}
    This is the reversed-coordinate version of $\widehat w^{(\ell,J)}$. The
    denominator is nonzero for all sufficiently large $\ell$, since
    $u^{(\ell)}\to u^{(s)}$ and the limiting vector has strictly positive first
    coordinate. For a normalised radial vector $v$, write
    \begin{equation}
        \mu(v)
        \coloneqq
        \frac{q-1}{q^{m+n}}\|B^{(\ell)}v\|_2^2.
    \end{equation}
    Since $T_\ell=\varepsilon_\ell^{-1} P_\ell(I-U_\ell\bar
        L^{(\ell)}U_\ell^\dagger)P_\ell$, we have
    \begin{equation}
        1-\mu(v)
        =
        \varepsilon_\ell
        \bra{U_\ell v}T_\ell\ket{U_\ell v}.
    \end{equation}
    Therefore, with
    \begin{equation}
        \kappa_\ell
        \coloneqq
        \bra{u^{(\ell)}}T_\ell\ket{u^{(\ell)}},
        \qquad
        \tau_{\ell,J}
        \coloneqq
        \bra{\widehat u^{(\ell,J)}}T_\ell\ket{\widehat u^{(\ell,J)}},
    \end{equation}
    the effective-rank proxy difference is
    \begin{equation}
        R_\ell(\widehat w^{(\ell,J)})-R_\ell(w^{(\ell)})
        =
        \log_q\!\left(\frac{\kappa_\ell}{\tau_{\ell,J}}\right).
    \end{equation}

    It remains to bound the loss in the Rayleigh quotient uniformly in $J$. Set
    $Q_J=I-P_J$. The coefficient estimates used in the proof of
    Lemma~\ref{lem:boundary-concentration-optimised-rank-shell-weights} give a
    constant $C_s>0$ such that, for all sufficiently large $\ell$ and all $0\le
        J\le\ell$,
    \begin{equation}
        \|Q_JT_\ell Q_J\|+\|Q_JT_\ell P_J\|
        \le
        C_s q^{-J}.
    \end{equation}
    Indeed, in reversed coordinates the entries of $T_\ell$ are tridiagonal and
    bounded by $C_s q^{-r}$ in row and column $r$, uniformly in $\ell$.

    Since $\kappa_\ell\to\kappa_s>0$, we may assume $\kappa_\ell\ge\kappa_s/2$.
    Applying $Q_J$ to $T_\ell u^{(\ell)}=\kappa_\ell u^{(\ell)}$ gives
    \begin{equation}
        (\kappa_\ell I-Q_JT_\ell Q_J)Q_Ju^{(\ell)}
        =
        Q_JT_\ell P_Ju^{(\ell)}.
    \end{equation}
    For all $J$ above a constant depending only on $q$ and $s$, the inverse on
    the left has norm at most $4/\kappa_s$. For the remaining finitely many $J$,
    the trivial bound $\|Q_Ju^{(\ell)}\|_2\le1$ is absorbed into the constant.
    Thus, uniformly for $0\le J\le\ell$,
    \begin{equation}
        \|Q_Ju^{(\ell)}\|_2
        \le
        C_s q^{-J}.
    \end{equation}

    Write $p=P_Ju^{(\ell)}$ and $q_J=Q_Ju^{(\ell)}$. Taking the inner product of
    $T_\ell u^{(\ell)}=\kappa_\ell u^{(\ell)}$ with $p$ gives
    \begin{equation}
        \bra{p}T_\ell\ket{p}
        +
        \bra{p}T_\ell\ket{q_J}
        =
        \kappa_\ell\|p\|_2^2.
    \end{equation}
    Therefore
    \begin{equation}
        \kappa_\ell-\tau_{\ell,J}
        =
        \frac{\bra{p}T_\ell\ket{q_J}}{\|p\|_2^2}.
    \end{equation}
    The denominator is uniformly bounded away from zero, since
    $\|p\|_2\ge\abs*{u_0^{(\ell)}}$ and $u^{(\ell)}\to u^{(s)}$ with
    $u_0^{(s)}>0$. Because $u^{(\ell)}$ is the top eigenvector of $T_\ell$, we
    also have $\tau_{\ell,J}\le\kappa_\ell$. Hence
    \begin{equation}
        0
        \le
        \kappa_\ell-\tau_{\ell,J}
        \le
        C_s q^{-2J}
    \end{equation}
    uniformly for $0\le J\le\ell$.

    Finally, $\tau_{\ell,J}$ is uniformly bounded below. The entries of $T_\ell$
    are nonnegative: the diagonal entries are scaled intersection numbers and
    the off-diagonal entries are positive square roots. Moreover, $u^{(\ell)}$,
    and hence $\widehat u^{(\ell,J)}$, is nonnegative,
    $(T_\ell)_{00}\to1+q^{-s}$, and the normalised vector $\widehat
        u^{(\ell,J)}$ has first coordinate at least $u_0^{(\ell)}$. Thus
    $\tau_{\ell,J}\ge c_s>0$ for all sufficiently large $\ell$ and all $0\le
        J\le\ell$. The logarithm is therefore Lipschitz on the relevant interval,
    and
    \begin{equation}
        \log_q\!\left(\frac{\kappa_\ell}{\tau_{\ell,J}}\right)
        \le
        C_s' q^{-2J}.
    \end{equation}
    This proves the claim.
\end{proof}

\end{document}